\DeclareMathOperator{\csch}{csch}
 \newtheorem{theorem}{Theorem}[section]
\newtheorem{definition}[theorem]{Definition}
\newtheorem{proposition}[theorem]{Proposition}
\newtheorem{corollary}[theorem]{Corollary}
\newtheorem{remark}[theorem]{Remark}
\newtheorem{lemma}[theorem]{Lemma}
\def\be{\begin{equation}}
\def\ee{\end{equation}}
\def\ben{\begin{displaymath}}
\def\een{\end{displaymath}}
\def\baa{\begin{eqnarray}}
\def\eaa{\end{eqnarray}}
\def\ba{\begin{array}}
\def\ea{\end{array}}
\renewcommand{\leq}{\leqslant}
\renewcommand{\geq}{\geqslant}
\newcommand{\supp}{\operatorname{supp}}
\newcommand{\Tr}{\operatorname{Tr}}
\newcommand{\Res}{\operatorname{Res}}
\begin{document}
\title {Polyakov-Alvarez type comparison formulas for determinants of Laplacians on Riemann surfaces with conical singularities}

\author{Victor Kalvin \footnote{{\bf E-mail:  vkalvin@gmail.com}}}

\date{}
\maketitle

\begin{abstract}  We present and prove Polyakov-Alvarez type comparison formulas for the determinants of  Frie\-de\-richs  extensions of Laplacians corresponding to  conformally equivalent metrics on a compact Riemann surface with conical singularities. In particular, we  find how the determinants depend on the orders of conical singularities. We also illustrate these general results with  several examples: based on our Polyakov-Alvarez type formulas we  recover  known and obtain new  explicit formulas for determinants of Laplacians on  singular surfaces with and without boundary.  
In one of the examples  we show that  on the metrics of constant curvature on a sphere with two conical singularities and fixed area $4\pi$  the determinant of Friederichs Laplacian is unbounded from above and  attains its  local maximum on the metric of standard round sphere.   In another example  we deduce the famous Aurell-Salomonson formula for the determinant of Friederichs Laplacian on  polyhedra with spherical topology, thus providing the formula with mathematically rigorous proof. \end{abstract}


\noindent{MSC Primary: 58J52, 47A10, 30F45, Secondary:  14H81}

\section{Introduction}\label{intro}
Investigation of determinants of Laplacians as functions of metrics on compact Riemann surfaces is motivated by the needs of geometric analysis and quantum field theory. For smooth metrics the determinants have been comprehensively studied, see e.g.~\cite{OPS,Weisberger,Wentworth,Wentworth2,Polch, Khuri,Kim}. The Polyakov formula~\cite{Pol,PolF} and a similar formula for surfaces with boundary due to Alvarez~\cite{Alvarez} often appears as the key of an argument,  e.g.~\cite{OPS,Weisberger,Wentworth,Kim}. More recently significant progress was  achieved in the study of determinants of Laplacians for  flat (curvature zero) metrics with conical singularities, see e.g.~\cite{Khuri2, Au-Sal, Au-Sal 2,Spreafico,KokotKorot,ProcAMS,HKK1,HKK2}. Here, for instance, 
results in~\cite{Au-Sal, Au-Sal 2} can be interpreted as a generalization of  Polyakov-Alvarez and Polyakov formula to the case of flat metrics with conical singularities on a disk and on a sphere;
the main result in~\cite{ProcAMS} is a simple consequence of  an analog of Polyakov formula  and the results in~\cite{KokotKorot}. 
Some results were also obtained for determinants of Laplacians on constant positive curvature  surfaces with conical sigularities~\cite{Dowker,Spreafico3,IMRN,Bul}, on Riemann orbisurfaces that are the quotients of the hyperbolic plane by the action of cofinite finitely generated Fuchsian groups~\cite{TZ2,Teo}, and on surfaces  glued from copies of singularly deformed spheres~\cite{KalvinJGA}, but no  Polyakov-Alvarez type formulas for metrics other than smooth or singular flat were available until now.

In this paper we present and prove Polyakov-Alvarez type formulas relating the determinants of  Frie\-de\-richs selfadjoint extensions of Laplacians for a pair of  conformally equivalent metrics with conical singularities on a compact Riemann surface. In particular, we thus find how the determinants depend on the orders (angles) of conical singularities. 
The main results are presented in Theorem~\ref{main} after giving some preliminaries in Section~\ref{PMR} below. 
The proof is  mainly based on  explicit formulas for the spectral zeta function of  Friederichs Dirichlet Laplcian on a flat cone~\cite{Spreafico}, short time heat trace asymptotic expansions of  elliptic cone differential operators~\cite{BS2,GKM},    Burghelea-Friedlander-Kappeler (BFK)  type decomposition formulas~\cite{BFK,LeeBFKboundary}, and the classical Polyakov-Alvarez formula~\cite{Alvarez}. The dependence of the determinants of Laplacians on orders of conical singularities  is expressed in terms of Barnes double zeta functions, in Appendix we show that in the case of rational orders it simplifies to a linear combination of  gamma functions and the Riemann zeta function.  
For the smooth metrics our results reduce to the classical Polyakov and Polyakov-Alvarez formulas~\cite{Pol,PolF,Alvarez}.  

 We also illustrate our general results with examples: based on our Polyakov-Alvarez type formulas we independently deduce known and obtain new explicit formulas for determinants of Laplacians on 
singular  surfaces  with or without boundary.  We believe that these examples can be not only helpful for better understanding of main results, but also of independent interest.

 In the first example  we   study extremal properties of the determinant for the metrics of constant curvature  on a sphere with two conical singularities as the orders of conical singularities and the geodesic distance between them vary. We first recover and generalize some results in~\cite{Spreafico3,Klevtsov,KalvinJGA,IMRN}  and then show that on the metrics with fixed area $4\pi$ the determinant is unbounded from above and attains its local maximum on the metric of standard  round sphere. Note the similarity and difference between this result and the well-known result~\cite{OPS,Sarnak} on  extremals of determinants for smooth metrics: on a closed surface in any conformal class of smooth metrics  with fixed area the determinant attains its absolute maximum on a unique metric of constant curvature (in particular, on the metric of the standard round sphere if the surface is a sphere).   
 
  In the second example we deduce a formula for the determinant on polyhedra with spherical topology and then show that it is equivalent to the famous Aurell-Salomonson formula thus providing the latter one with  rigid mathematical proof.   In the third example  we find a formula for the determinant of Friederichs Dirichlet Laplacian on the constant curvature metric disks with a conical singularity at the center: for the non-flat metrics the result is new, the flat case was previously studied in~\cite{Spreafico} by a method based on separation of variables. 
  In our forth example we obtain a new formula for the determinant of Laplacians on hyperbolic spheres with conical singularities, recover a variational formula  from~\cite{KalvinJGA}, and compare our results to those in the recent study of quantum Hall states on singular surfaces~\cite{C-W}. Finally,  in the last example we present a general explicit formula for the determinant of Friederichs Laplacians on singular genus $g>1$  surfaces without boundary, this is a generalization of the results in~\cite{KokotKorot,ProcAMS}.  
   
      The reader whose interests are in the examples and applications may wish to begin by looking at Section~\ref{PMR} to get acquainted with the notation used and the main results obtained and then proceed to Section~\ref{appl}, where the examples are presented.

 Genus one examples will be considered elsewhere.  We only note that explicit formulas for determinants of Laplacians on genus one surfaces can be obtained by using results of this paper together with known explicit formulas for the determinant of Laplacian on the (smooth) flat tori~\cite{Polch,OPS} and the flat annulus~\cite{Weisberger}; in particular, one can expect to recover the variational formula in~\cite{Bul} and to find the corresponding undetermined constant.  Let us also mention  that metrics with cylindrical and cone/Euclidean ends can be included into consideration by pairing results of this paper with the BFK-type decomposition formulas in~\cite[Theorem 1]{HKK1} and~\cite[Theorem 1]{HKK2}, however we do not discuss this here.

The paper is organized as follows.  Subsection~\ref{PMR} contains preliminaries and the main results. 
In Subsection~\ref{Sec Cor} we formulate  two important corollaries: a formula for the value of spectral zeta function at zero and Polyakov-Alvarez type formulas for two conformally equivalent conical metrics. The first corollary is an immediate consequence of our main result and the Gauss-Bonnet theorem. The proof of other results is carried out  in Section~\ref{s2}. In Subsection~\ref{ss2.1} we discuss local regularity of metrics. In Subsection~\ref{SSnon-flat} we obtain an asymptotic estimate for the determinant of the Friederichs Dirichlet Laplacian on a shrinking metric disk. In Subsection~\ref{sBFK} we prove BFK-type decomposition formulas. These decomposition formulas allow to cut shrinking metric disks out of a  surface. In Subsection~\ref{Pr} we finilize the proof of Polyakov-Alvarez type formulas.  This completes  the first part of the paper.

 The second part of the paper entitled ''Examples and Applications`` occupies Section~\ref{appl}. 
 In Subsection~\ref{AF} we study the determinant on the metrics of constant curvature on a sphere with two conical singularities. In Subsection~\ref{FM} we consider polyhedral surfaces with spherical topology and  deduce  the famous Aurell-Salomonson formula.  In Subsection~\ref{CMD} we deduce a formula for the determinant of Friederichs Dirichlet Laplacian on constant curvature metric disks with a conical singularity at the center. 
 In Subsection~\ref{HS} we study the determinant of Laplacians on  singular hyperbolic spheres. Finally, in Subsection~\ref{KOKOT} we present a general formula for the determinant of Friederichs Laplacians on genus $g>1$  surfaces with conical singularities and without boundary.

 \subsection{Preliminaries and main results}\label{PMR}

  Let $M$ be a compact Riemann surface (perhaps with smooth boundary $\partial M$). Following~\cite{Troyanov Curv,Troyanov Polar Coordinates}, we say that $m$ is a (conformal Riemannian)  metric on $M$  if for any point $P\in M$ there exist a neighbourhood $U$ of $P$, a local   (holomorphic) parameter $x\in \Bbb C$ centred at $P$ (i.e. $x(P)=0$), and a real-valued function $\phi\in L^1(U)$ 
  such that $m=|x|^{2\beta}e^{2\phi}|dx|^2$ in $U$ with some $\beta>-1$, and $\partial_x\partial_{\bar x} \phi\in L^1(U)$. If $\beta=0$, then the point $P$ is regular. If $\beta\neq 0$, then $P$ is a conical singularity of order $\beta$ and total angle $2\pi(\beta+1)$.  A function $K:M\to \Bbb R$ defined by
 $$
K=|x|^{-2\beta}e^{-2\phi}(-4\partial_x\partial_{\bar x} \phi)
$$    
 is the (regularized) Gaussian curvature of $m$ in the neighbourhood $U$ ($K$ does not depend on the choice of $x$).  
 
 The information about all conical singularities of $m$ is encoded in a divisor:  a metric with conical singularities of order $\beta_1,\dots,\beta_n$ at (distinct) points $P^1,\dots P^n\in M$ is said to represent the divisor $\pmb \beta=\sum_j \beta_j P^j$, which is a formal sum. By definition, the set $\supp\pmb \beta:=\{P^1,\dots,P^n\}$  is  the support  and the number $|\pmb \beta|:=\sum_j \beta_j $ is the degree of the divisor  $\pmb\beta$.  We assume that the curvature $K$ is a smooth function on $M$, if $M$ has a boundary $\partial M$, then $\supp\pmb \beta\cap\partial M=\varnothing$   (i.e. there are no conical singularities on the boundary)   and the geodesic curvature $k$ is a well defined continuous function on $\partial M$.

 Let $m_\varphi$ be a  metric conformally equivalent to a smooth metric $m_0$ on $M$, i.e. $m_\varphi$ represents a divisor $\pmb \beta$   and  $m_\varphi=e^{2\varphi}m_0$ with some function   $\varphi \in C^\infty(M\setminus\supp\pmb \beta)$. In a  local  parameter centred at $P^j\in\supp\pmb\beta$ we have $m_\varphi=|x|^{2\beta_j}e^{2\phi_j}|dx|^2$ and $m_0=e^{2\psi_j}|dx|^2$, where  $\phi_j$ is a continuous 
  and $\psi_j$ is a smooth function in a small neighbourhood of $x=0$. In particular,  $\varphi(x)=\beta_j\log |x|+\phi_j(x)-\psi_j(x)$ and, if the (regularized) Gaussian curvature  of $m_\varphi$ is not constant in any small neighbourhood of $x=0$,   an  assumption on local  regularity of $\phi_j$ near $x=0$  should be made, for details we refer to  Subsection~\ref{ss2.1}.   In what follows it is important that the value   $\frac{\phi_j(0)}{\beta_j+1}-\psi_j(0)$ does not depend on the choice of local  parameter $x$, $x(P^j)=0$. 

   Let $\Delta^\varphi$ stand for the Friederichs extension of the Laplacian on $(M,m_\varphi)$  initially defined on the functions in $C_0^\infty(M\setminus\supp\pmb\beta)$.  The spectrum $\sigma(\Delta^\varphi)$ of $\Delta^\varphi$  consists of isolated eigenvalues $\lambda_k$ of finite multiplicity. If $\partial M=\varnothing$, then the first eigenvalue $\lambda_0=0$ of the nonnegative selfadjoint operator $\Delta^\varphi$ is of multiplicity $1$ (and the eigenspace consists of constant functions). If  $\partial M\neq\varnothing$, then  $\Delta^\varphi$  is the Friederichs Dirichlet Laplacian, it is a positive selfadjoint operator. 
From results in~\cite{BS2,Seeley,Troyanov Polar Coordinates} it follows that  the spectral zeta function  
$$ \zeta(s)=\sum_{\lambda_k\in\sigma(\Delta^\varphi),\lambda_k\neq 0}\lambda_k^{-s},\quad\Re s>1,$$
  extends by analyticity to a neighborhood of $s=0$. We define the zeta regularized determinant of $\Delta^\varphi$  by  
  $$\det\Delta^\varphi=e^{-\zeta'(0)};$$ if $\partial M=\varnothing$, then it is a modified determinant, i.e. with zero eigenvalue excluded. For a smooth metric $m_0$ on $M$ the determinant $\det\Delta^0$ can be  defined via the spectral zeta function of the selfadjoint Laplacian $\Delta^0$  on $(M,m_0)$ in exactly the same way, e.g.~\cite{OPS}.

The main result of the first part of this  paper is the following generalization of Polyakov and Polyakov-Alvarez formulas. 

\begin{theorem}[Polyakov-Alvarez type comparison formulas]\label{main}
Let $m_0$ be a smooth conformal  metric on a compact Riemann surface $M$. Denote the Gaussian curvature of $m_0$ by  $K_0$.  Let  $K_\varphi$ stand for  the (regularized) Gaussian  curvature of  a  metric  $m_\varphi=e^{2\varphi}m_0$ representing a divisor $\pmb\beta=\sum_{j=1}^n \beta_j P^j$. By $\phi_j(x)$ and $\psi_j(x)$ we denote the functions in the representations $m_\varphi=|x|^{2\beta_j}e^{2\phi_j}|dx|^2$ and $m_0=e^{2\psi_j}|dx|^2$ in a local holomorphic parameter $x$ centred at $P^j\in\supp\pmb\beta$. 
Let also $A_\varphi$ (resp. $A_0$) stand for  the total area of $M$  in the metric $m_\varphi$ (resp. $m_0$). 
  
1. If $\partial M=\varnothing$, then for the modified zeta regularized determinants of the Friederichs Laplacian $\Delta^\varphi$ on $(M,m_\varphi)$ and the selfadjoint Laplacian $\Delta^0$ on $(M,m_0)$ we have 
 \begin{equation}\label{Pol}
 \begin{aligned}
\log\frac{(\det \Delta^\varphi)/A_\varphi}{(\det \Delta^0)/A_0}=-\frac{1}{12\pi}\left(\int_{M} K_{\varphi}\varphi\,dA_{\varphi}\right.&\left.+\int_{M} K_0\varphi\,dA_0\right) \\ +\frac  1 6 {\sum_{j=1}^n\beta_j}\left(\frac{\phi_j(0)}{\beta_j+1}-\psi_j(0)\right)&-\sum_{j=1}^nC(\beta_j).
\end{aligned}
\end{equation}
Here \begin{equation}\label{Cbeta}
C(\beta)=2\zeta'_B(0;\beta+1,1,1)-2\zeta'_R(-1) -\frac {\beta^2} {6(\beta+1)}\log 2-\frac {\beta} {12}+\frac 1 2 \log(\beta+1), 
\end{equation}
where  $\zeta_B$  is the Barnes double zeta function
\begin{equation}\label{ZB}
\zeta_B(s;a,b,x)=\sum_{m,n=0}^\infty(am+bn+x)^{-s},
\end{equation}
the  prime stands for the derivative with respect to $s$, and $\zeta_R(s)$ is the Riemann zeta function. 

2. If $\partial M\neq \varnothing$ and $\supp\pmb \beta\cap\partial M=\varnothing$, then for the zeta regularized determinants of the  Friederichs Dirichlet Laplacian $\Delta^\varphi$ on $(M,m_\varphi)$ and the selfadjoint Dirichlet Laplcian  $\Delta^0$ on $(M,m_0)$ we have
\begin{equation}\label{Pol-Alv}
\begin{aligned}
 \log  \frac{\det \Delta^\varphi}{\det\Delta^0}=-&\frac{1}{12\pi}\left(\int_M K_\varphi \varphi\,dA_\varphi +\int_M K_0\varphi\,dA_0 +\int_{\partial M} \varphi\partial_{\vec n}\varphi\,ds_0\right)
\\
& -\frac 1{6\pi}\int_{\partial M}k_0\varphi\, ds_0-\frac 1 {4\pi}\int_{\partial M} \partial_{\vec n}\varphi\,ds_0 \\&+\frac  1 6 {\sum_{j=1}^n\beta_j}\left(\frac{\phi_j(0)}{\beta_j+1}-\psi_j(0)\right)-\sum_{j=1}^nC(\beta_j),
\end{aligned}
\end{equation}
where $k_0$ is the geodesic curvature of the boundary $\partial M$ of $M$, $s_0$ is the arc length, and $\vec n$ is the outward unit normal to the boundary $\partial M$ (all are with respect to the metric $m_0$); the function $C(
\beta)$ is the same as in~\eqref{Cbeta}.

\end{theorem}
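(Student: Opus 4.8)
The strategy is a localization–excision argument that reduces the singular comparison to the classical smooth Polyakov–Alvarez formula plus an explicit contribution from each conical point. Around each $P^j\in\supp\pmb\beta$ fix a local holomorphic parameter $x$ and let $D_j(\varepsilon)=\{|x|<\varepsilon\}$ be a small metric disk. On the complement $M_\varepsilon=M\setminus\bigcup_j D_j(\varepsilon)$ both $m_\varphi$ and $m_0$ are smooth, so on $M_\varepsilon$ (now a surface with boundary $\partial M\cup\bigcup_j\partial D_j(\varepsilon)$) the classical Polyakov–Alvarez formula of \cite{Alvarez} applies verbatim and gives $\log\bigl(\det\Delta^\varphi_{M_\varepsilon,\mathrm{Dir}}/\det\Delta^0_{M_\varepsilon,\mathrm{Dir}}\bigr)$ as the integral expression in \eqref{Pol-Alv}, but with the integrals taken over $M_\varepsilon$ and with extra boundary terms along the curves $\partial D_j(\varepsilon)$.

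Next I would glue the pieces back together using the BFK-type decomposition formulas proved in Subsection~\ref{sBFK}: for each metric, $\det\Delta_M = \det\Delta_{M_\varepsilon,\mathrm{Dir}}\cdot\prod_j\det\Delta_{D_j(\varepsilon),\mathrm{Dir}}\cdot(\text{Neumann jump operator factor})$, with analogous bookkeeping for the modified determinant and the area factor in the closed case. Taking the ratio of these identities for $m_\varphi$ and $m_0$, the Neumann jump operators (being built from Dirichlet-to-Neumann maps on the smooth curves $\partial D_j(\varepsilon)$, which converge as $\varepsilon\to 0$) contribute terms that cancel in the limit up to computable remainders; the surviving ingredient is $\log\bigl(\det\Delta^\varphi_{D_j(\varepsilon)}/\det\Delta^0_{D_j(\varepsilon)}\bigr)$ for each $j$. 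Here $\det\Delta^0_{D_j(\varepsilon)}$ is a smooth Dirichlet determinant on a shrinking disk, whose $\varepsilon\to 0$ asymptotics are classical; and $\det\Delta^\varphi_{D_j(\varepsilon)}$ is the Friederichs Dirichlet determinant on a metric disk with a single conical singularity of order $\beta_j$ at its center, whose small-$\varepsilon$ asymptotics are exactly the content of Subsection~\ref{SSnon-flat}. That asymptotic estimate is where the model flat cone enters: one compares $m_\varphi$ on $D_j(\varepsilon)$ to the exact flat cone $|x|^{2\beta_j}|dx|^2$, uses the short-time heat-trace expansions of cone operators from \cite{BS2,GKM} to control the anomaly coming from the conformal factor and the curvature, and uses Spreafico's explicit spectral zeta function \cite{Spreafico} for the Friederichs Dirichlet Laplacian on the standard flat cone to extract the constant. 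This is the step that produces the Barnes-double-zeta constant $C(\beta_j)$ in \eqref{Cbeta} together with the factor $\frac16\beta_j\bigl(\tfrac{\phi_j(0)}{\beta_j+1}-\psi_j(0)\bigr)$, and it is the main technical obstacle of the proof: one must show that all $\varepsilon$-dependent terms (logarithms and powers of $\varepsilon$) from the three sources — the Polyakov–Alvarez integral over $M_\varepsilon$, the BFK jump-operator factors, and the two disk determinants — cancel precisely, leaving an $\varepsilon$-independent identity.

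Finally I would assemble the limit $\varepsilon\to 0$: the integrals over $M_\varepsilon$ converge to the (absolutely convergent) integrals over $M$ appearing in \eqref{Pol} and \eqref{Pol-Alv}, noting that the apparent divergence of $\int K_\varphi\varphi\,dA_\varphi$ near $P^j$ from the $\beta_j\log|x|$ part of $\varphi$ is cancelled by the $\log\varepsilon$ terms generated in the excision; the boundary terms on $\partial D_j(\varepsilon)$ either vanish or combine into the $\frac16\beta_j\bigl(\tfrac{\phi_j(0)}{\beta_j+1}-\psi_j(0)\bigr)$ and $-C(\beta_j)$ contributions. The closed case (part 1) follows the same route, with the modified-determinant and area normalizations handled by the corresponding BFK identity and the Polyakov (rather than Polyakov–Alvarez) smooth formula on $M_\varepsilon$; the boundary integrals over $\partial D_j(\varepsilon)$ present in the surface-with-boundary bookkeeping are interior there and their limits feed into the same local constants. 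For smooth metrics all $\beta_j=0$, $C(0)=0$, and the formula collapses to the classical Polyakov/Polyakov–Alvarez statement, which is the consistency check.
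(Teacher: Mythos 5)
Your proposal follows essentially the same route as the paper: excise shrinking disks $D^j_\epsilon$ around the conical points, apply the classical Polyakov--Alvarez formula on $M_\epsilon$, glue via the BFK decomposition (Proposition~\ref{BFKf}), feed in the small-$\epsilon$ asymptotics of the conical-disk determinant from Lemma~\ref{KeyLemma} (built on Spreafico's flat-cone zeta function and the heat-trace expansions of \cite{BS2,GKM}) together with Lemma~\ref{SimpleKey} for the smooth disk, and check that all $\log\epsilon$ terms cancel. The one point where the paper is sharper than your sketch is the Neumann jump operators: rather than arguing that they ``converge as $\varepsilon\to0$'' and cancel ``in the limit up to computable remainders,'' the paper shows (Lemma~\ref{confin}) that $\det R^\varphi_\epsilon/L_\varphi$ and $\det R^0_\epsilon/L_0$ are \emph{exactly equal} at each fixed $\epsilon$, because $\varphi$ agrees with a globally smooth potential near $\partial M_\epsilon$ and the jump factor is conformally invariant under smooth conformal changes --- so no limit of the jump operators ever needs to be taken.
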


The non-integral terms in the right hand side of~\eqref{Pol} and~\eqref{Pol-Alv} are responsible for the inputs from the conical singularities and  do not depend on the choice of local parameters. 
If the function $\varphi$ in the equality  $m_\varphi=e^{2\varphi}m_0$ is smooth (or, equivalently, the metric $m_\varphi$ is smooth,  $\supp\pmb\beta=\varnothing$, and $n=0$), then the non-integral terms (the last lines in~\eqref{Pol} and~\eqref{Pol-Alv})  disappear. As a result the formulas~\eqref{Pol} and~\eqref{Pol-Alv} become  the well-known Polyakov and Polyakov-Alvarez formulas written in a slightly different ``regularized'' form, cf. e.g.~\cite{Pol,PolF,Alvarez,OPS,Weisberger}.  This regularization keeps the integrals in~\eqref{Pol} and~\eqref{Pol-Alv} finite even if  $m_\varphi$ has conical singularities.

For the conical singularities of rational orders $\beta$  the values of $C(\beta)$ in~\eqref{Pol} and~\eqref{Pol-Alv}  can be expressed in terms of $\zeta_R'(-1)$ and gamma functions. Namely, the following equality is valid for the derivative  of the Barnes double zeta function in~\eqref{Cbeta} :
\begin{equation}\label{wed sep4}
 \begin{aligned}
\zeta'_B(0;p/q,1,1)=&\frac 1 {pq}\zeta_R'(-1)-\frac{1}{12pq}\log(q)
+\left(\frac 1 4 {+}S(q,p)\right)\log{\frac{q}{p}}
\\
+\sum_{k=1}^{p-1}\left(\frac 1 2 -\frac k p \right)&\log \Gamma\left(  \left(\!\!\!\left(\frac{kq}{p}\right)\!\!\!\right)+\frac 1 2 \right)+\sum_{j=1}^{q-1}\left(\frac 1 2 -\frac j q \right)\log \Gamma\left(\left(\!\!\!\left(\frac{jp}{q}\right)\!\!\!\right)+\frac 1 2\right),
\end{aligned}
\end{equation}
 where $p$ and $q$ are coprime natural numbers, $S(q,p)=\sum_{j=1}^{p}\left(\!\!\left(\frac{j}{p}\right)\!\!\right) \left(\!\!\left(\frac{jq}{p}\right)\!\!\right)$ is the Dedekind sum, and the symbol $(\!(\cdot)\!)$ is defined so that   $(\!(x)\!)=x-\lfloor x\rfloor-1/2$ for $x$ not an integer and $(\!(x)\!)=0$ for $x$ an integer (here $\lfloor x\rfloor$ is the floor of $x$, i.e. the largest integer not exceeding $x$). In particular, $\zeta'_B(0;1,1,1)=\zeta'_R(-1)$ and~\eqref{Cbeta} gives $C(0)=0$; recall that $\beta=0$ for a regular point $P\in M$. Similarly, for a singularity of order $\beta=1$ (i.e. of angle $4\pi$) we have  $$C(1)=-\zeta_R'(-1)-\frac 1 {12}\log 2 -\frac 1 {12},$$ for a singularity of order $\beta=-1/2$ (i.e. of angle $\pi$) we have $$C\left(-1/ 2\right)=-\zeta_R'(-1)-\frac 1 {6}\log 2 +\frac 1 {24},$$ and etc. A proof of~\eqref{wed sep4} and some particular values of $\zeta'_B(0;p/q,1,1)$ can be found in  Appendix~\ref{BarnesPV}. We also note that  in the case $q=1$ the formula~\eqref{wed sep4} simplifies to~\eqref{BLF1}, and in the case $p=1$ it simplifies to~\eqref{BLF2}. 
Available asymptotics of the Barnes double zeta function  (e.g.~\cite{Matsumoto,Spreafico2}) imply that  $C(\beta)\to+\infty$ as $\beta\to- 1^+$ and $C(\beta)\to-\infty$ as $\beta\to+\infty$, see Fig.~\ref{FF} for a graph of $C(\beta)$.

\begin{figure} \centering\includegraphics[scale=.25]{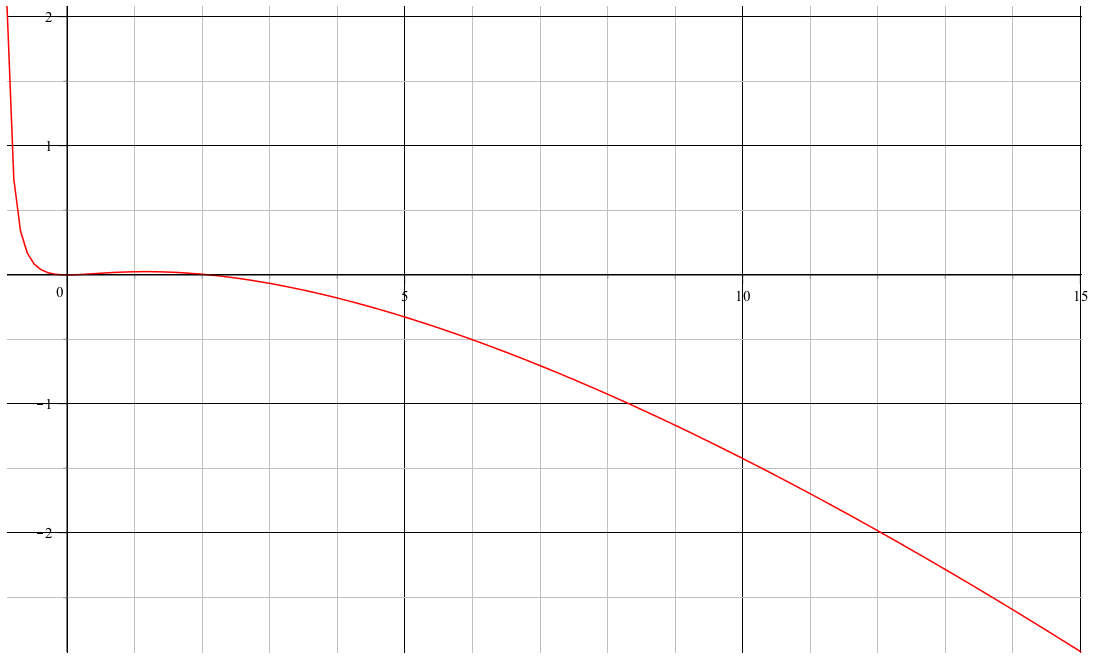}
 \caption{Graph of $\beta\mapsto C(\beta)$,  $\beta>-1$.}
 \label{FF}
\end{figure}

\subsection{A formula for $\zeta(0)$ and comparison formulas for two singular metrics}
\label{Sec Cor}

In this subsection we discuss two  corollaries  of Theorem~\ref{main}.  First we  find the value of the spectral zeta function of $\Delta^\varphi$ at zero. Then we present a generalization of Theorem~\ref{main} to the case of two  metrics with conical singularities.

Let $\zeta(s)=\sum_{k}\lambda_k^{-s}$ stand for the spectral zeta function  of the Friederichs Laplacian $\Delta^\varphi$. Then 
 $\zeta_r(s)=\sum_{k} (r^{-2}\lambda_k)^{-s}
 $
is the zeta function of the operator $r^{-2}\Delta^\varphi$ corresponding to the metric $r^2m_\varphi$. On the one hand, differentiating $\zeta_r(s)$ with respect to $s$ and evaluating the result at $s=0$ we arrive at the standard rescaling property
\begin{equation}\label{ReP}
\zeta'_r(0)=(2\log r)\zeta(0)+\zeta'(0),\quad r>0. 
\end{equation}
On the other hand, the Polyakov formula~\eqref{Pol} gives
$$
\zeta'_r(0)-\log(r^2 A_\varphi)-\zeta'(0)+\log A_\varphi=-\frac {\log r} {12\pi}\left(\int_M K_\varphi\,dA_\varphi+\int_M K_0\,dA_0\right)+\frac 1 6 \sum_{j=1}^n\frac {\beta_j\log r}{\beta_j+1},
$$ 
where $\int_M K_\varphi\,dA_\varphi=2\pi\chi(M,\pmb\beta)$ and $ \int_M K_0\,dA_0=2\pi\chi(M)$  by  the Gauss-Bonnet theorem~\cite{Troyanov Curv}; here $\chi(M,\pmb{\beta})=\chi(M) +|\pmb\beta|$ is the Euler characteristic of  $M$ with topological Euler characteristic $\chi(M)$  and divisor $\pmb\beta$.
This  implies
\begin{equation}\label{zeta_at_zero}
\zeta(0) =\frac {\chi(M,\pmb{\beta})} 6-\frac 1 {12}\sum_{j=1}^n\left(\beta_j+1-\frac{1}{\beta_j+1}\right)
-\dim\ker\Delta^\varphi
\end{equation}
in the case $\partial M=\varnothing$. Similarly, the Polyakov-Alvarez formula~\eqref{Pol-Alv}  implies~\eqref{zeta_at_zero} in the case $\partial M\neq \varnothing$. We formulate this result as a corollary of Theorem~\ref{main}. 

\begin{corollary}\label{z0} Let $\chi(M,\pmb{\beta})=\chi(M) +|\pmb\beta|$ stand for the Euler characteristic of the Riemann surface $M$ with topological Euler characteristic $\chi(M)$  and divisor $\pmb\beta$  of degree $|\pmb\beta|=\sum_{j=1}^n\beta_j$. Let $m_\varphi$ be a conformal metric on $M$ representing the divisor $\pmb \beta$.
Then the spectral zeta function $\zeta(s)$ of the Friederichs Laplacian $\Delta^\varphi$ on $(M,m_\varphi)$
satisfies~\eqref{zeta_at_zero}, where   $\dim\ker\Delta^\varphi=1$ in the case $\partial M=\varnothing$ and $\dim\ker\Delta^\varphi=0$ in the case $\partial M\neq\varnothing$.
\end{corollary}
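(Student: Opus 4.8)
The plan is to derive Corollary~\ref{z0} directly from Theorem~\ref{main}, exactly as the surrounding text sketches, by exploiting the behaviour of the zeta-regularized determinant under constant rescaling of the metric. First I would recall the elementary rescaling identity: if $\zeta(s)$ is the spectral zeta function of $\Delta^\varphi$, then the operator associated with $r^2 m_\varphi$ is $r^{-2}\Delta^\varphi$, whose eigenvalues are $r^{-2}\lambda_k$, so its zeta function is $\zeta_r(s)=r^{2s}\zeta(s)$. Differentiating in $s$ and setting $s=0$ gives $\zeta'_r(0)=(2\log r)\zeta(0)+\zeta'(0)$, which is~\eqref{ReP}. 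This identity holds verbatim whether or not $\partial M=\varnothing$, since it uses only the discreteness of the spectrum and analyticity of $\zeta$ near $s=0$, both guaranteed by the preliminaries.

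Next I would apply the comparison formula of Theorem~\ref{main} to the conformally equivalent pair $m_\varphi$ and $r^2 m_\varphi$ (taking the roles of $m_0$ and $m_\varphi$ in the theorem both played by metrics with the same divisor $\pmb\beta$; alternatively one applies~\eqref{Pol} twice, to $m_\varphi$ vs.\ $m_0$ and to $r^2 m_\varphi$ vs.\ $m_0$, and subtracts — this avoids any worry that the theorem as literally stated compares a singular metric to a smooth one). With conformal factor $e^{2\varphi}$ replaced by the constant $r^2$, i.e.\ $\varphi\equiv\log r$, the curvature terms collapse: $\int_M K_\varphi \varphi\,dA_\varphi=(\log r)\int_M K_\varphi\,dA_\varphi$ and similarly for the $m_0$ integral, while all boundary integrals involving $\partial_{\vec n}\varphi$ vanish because $\varphi$ is constant, and $\int_{\partial M}k_0\varphi\,ds_0=(\log r)\int_{\partial M}k_0\,ds_0$. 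The non-integral conical terms also become explicit: $\phi_j(0)$ shifts to $\phi_j(0)+\log r$ and $\psi_j(0)$ is unchanged (or, comparing the two applications, the $C(\beta_j)$ terms cancel entirely). Invoking the Gauss–Bonnet theorem for surfaces with conical singularities, $\int_M K_\varphi\,dA_\varphi=2\pi\chi(M,\pmb\beta)$ and $\int_M K_0\,dA_0+\tfrac1{?}\int_{\partial M}k_0\,ds_0$ combine to $2\pi\chi(M)$, so the right-hand side of the comparison formula becomes a known explicit multiple of $\log r$, plus the area terms $\log(r^2A_\varphi)-\log A_\varphi=2\log r$ coming from the $A_\varphi$ normalisation in~\eqref{Pol} (in the boundaryless case) or $0$ (in the boundary case).

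Comparing the coefficient of $\log r$ on the two sides — the left-hand side contributes $2\zeta(0)$ from~\eqref{ReP} together with the $2$ from the area normalisation when $\partial M=\varnothing$ — and solving for $\zeta(0)$ yields
\[
\zeta(0)=\frac{\chi(M,\pmb\beta)}{6}-\frac1{12}\sum_{j=1}^n\Bigl(\beta_j+1-\frac1{\beta_j+1}\Bigr)-\dim\ker\Delta^\varphi,
\]
with $\dim\ker\Delta^\varphi=1$ when $\partial M=\varnothing$ (the constant functions) and $\dim\ker\Delta^\varphi=0$ in the Dirichlet case. I expect no genuine obstacle here: the argument is purely formal given Theorem~\ref{main} and Gauss–Bonnet. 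The only point requiring a little care is bookkeeping the area normalisation and the $\dim\ker$ correction consistently between the two cases, and making sure that the $r$-dependence of $\phi_j(0)$ is tracked correctly (it enters linearly through $\tfrac16\sum_j\beta_j\tfrac{\log r}{\beta_j+1}$), since this term is exactly what produces the $\tfrac1{12}\sum_j\tfrac1{\beta_j+1}$ contribution distinguishing the singular answer from the smooth one.
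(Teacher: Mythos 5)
Your proposal is correct and follows essentially the same route as the paper: the rescaling identity~\eqref{ReP} combined with the Polyakov formula~\eqref{Pol} (resp.~\eqref{Pol-Alv}) applied to $r^2m_\varphi$ and $m_\varphi$ against the same reference metric, the Gauss--Bonnet theorem, and a comparison of the coefficients of $\log r$. The only loose end is the unresolved coefficient in your boundary-case bookkeeping, which is settled by the transformation law $k_\varphi\,ds_\varphi=(k_0+\partial_{\vec n}\varphi)\,ds_0$, after which $\int_M K_\varphi\,dA_\varphi+\int_M K_0\,dA_0+\int_{\partial M}\partial_{\vec n}\varphi\,ds_0+2\int_{\partial M}k_0\,ds_0=2\pi\bigl(\chi(M,\pmb\beta)+\chi(M)\bigr)$ and the stated formula follows.
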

We also note that the formula~\eqref{zeta_at_zero} for $\zeta(0)$  allows to find the constant term in the asymptotic expansion of the heat trace $\Tr e^{-t \Delta^\varphi}$ as $t\to 0+$ (see Remark~\ref{a_0} in Sec.~\ref{sBFK}).

The next  corollary presents  comparison formulas for the determinants of Laplacians in two singular metrics.
\begin{corollary}\label{2conical}
Let  $m_0$  and $m_\varphi=e^{2\varphi}m_0$ be two conformal   metrics on $M$ representing divisors $\pmb \alpha$ and $\pmb\beta$ respectively. 
Let $\{P_1,\dots P_n\}$ be the set of all distinct points in the union $\supp\pmb\alpha\cup\supp\pmb\beta $. 
In a local holomorphic parameter $x$ centred at $P^j$ we have  $m_0=|x|^{2\alpha_j}e^{2\psi_j}|dx|^2$  and $m_\varphi=|x|^{2\beta_j}e^{2\phi_j}|dx|^2$, where $\alpha_j=0$ if $P^j\notin\supp\pmb \alpha$ and   $\beta_j=0$ if  $P^j\notin\supp\pmb\beta$.  

1. If $\partial M=\varnothing$, then the determinants of the Friederichs Laplacians $ \Delta^\varphi$ and $\Delta^0$  satisfy 
\begin{equation}\label{18:09}
\begin{aligned}
\log\frac{(\det \Delta^\varphi)/A_\varphi}{(\det \Delta^0)/A_0}=-\frac{1}{12\pi}\left(\int_{M} K_{\varphi}\varphi\,dA_\varphi +\int_{M} K_0 \varphi \,dA_0\right)
\\
 + \frac  1 6 \sum_{j=1}^{n}\left\{\beta_j\left(\frac{\phi_j(0)}{\beta_j+1}- \psi_j(0)\right)-\alpha_j\left(\frac{\psi_j(0)}{\alpha_j+1}-\phi_j(0)\right) \right\}\\-\sum_{j=1}^{n}\Bigl(C(\beta_j)-C(\alpha_j)\Bigr).
\end{aligned}
\end{equation}

2. If $\partial M\neq\varnothing$, $\supp\pmb\alpha\cap\partial M=\varnothing$, and $\supp\pmb\beta\cap\partial M=\varnothing$, then the determinants of the Friederichs Dirichlet Laplacians  $ \Delta^\varphi$ and $\Delta^0$   satisfy
\begin{equation*}
\begin{aligned}
\log\frac{\det \Delta^\varphi}{\det \Delta^0}=-\frac{1}{12\pi}\left(\int_{M} K_{\varphi}\varphi\,dA_\varphi +\int_{M} K_0 \varphi\,dA_0\right.
\left.+\int_{\partial M}  \varphi\partial_{\vec n} \varphi\,ds_0\right)
\\ -\frac 1{6\pi}\int_{\partial M}k_0\varphi\, ds_0-\frac 1 {4\pi}\int_{\partial M} \partial_{\vec n}\varphi\,ds_0 
 \\  + \frac  1 6 \sum_{j=1}^{n}\left\{\beta_j\left(\frac{\phi_j(0)}{\beta_j+1}- \psi_j(0)\right)-\alpha_j\left(\frac{\psi_j(0)}{\alpha_j+1}-\phi_j(0)\right) \right\}\\-\sum_{j=1}^{n}\Bigl(C(\beta_j)-C(\alpha_j)\Bigr),
 \end{aligned}
\end{equation*}
where $k_0$ is the geodesic curvature of the boundary $\partial M$, $s_0$ is the arc length, and $\vec n$ is the outward unit normal to the boundary $\partial M$ (with respect to the metric $m_0$). \end{corollary}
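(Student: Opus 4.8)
The plan is to deduce the comparison of two singular metrics from Theorem~\ref{main} by interposing a smooth reference metric. Fix an arbitrary smooth conformal metric $m_*$ on $M$ in the common conformal class of $m_0$ and $m_\varphi$, and write $m_0=e^{2\sigma}m_*$ and $m_\varphi=e^{2(\sigma+\varphi)}m_*$ with $\sigma\in C^\infty(M\setminus\supp\pmb\alpha)$. In a local holomorphic parameter $x$ centred at $P^j$ write $m_*=e^{2\rho_j}|dx|^2$ with $\rho_j$ smooth; then the pair $(m_\varphi,m_*)$ represents the divisor $\sum_j\beta_jP^j$ with local data $(\phi_j,\rho_j)$, while $(m_0,m_*)$ represents $\sum_j\alpha_jP^j$ with local data $(\psi_j,\rho_j)$. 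Applying the relevant case of Theorem~\ref{main} to each of these two pairs and subtracting the two identities, the reference-metric contributions $\det\Delta^*$ and $A_*$ cancel on the left-hand side, and the terms $\sum_jC(\beta_j)$ and $\sum_jC(\alpha_j)$ already appear exactly as in~\eqref{18:09}; what remains is to show that the leftover $m_*$-dependent pieces --- the integrals against the curvature $K_*$ of $m_*$, the boundary values $\rho_j(0)$, and (when $\partial M\neq\varnothing$) the $m_*$-boundary integrals --- reorganise into the $m_*$-free right-hand side of Corollary~\ref{2conical}.

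For the bulk integrals I would invoke the cocycle (transitivity) property of the regularised Liouville--Polyakov functional. Away from $\supp\pmb\alpha\cup\supp\pmb\beta$ one has the conformal transformation law $K_\varphi\,dA_\varphi=(K_0+\Delta_{m_0}\varphi)\,dA_0$ (in the sign convention of the paper, so that $\Delta_{m_0}$ is nonnegative) together with the two-dimensional conformal covariance $\Delta_{m_0}u\,dA_0=\Delta_{m_*}u\,dA_*$, and likewise relating $m_0$ to $m_*$. Substituting these into the combination $\int_MK_\varphi(\sigma+\varphi)\,dA_\varphi-\int_MK_0\sigma\,dA_0+\int_MK_*\varphi\,dA_*$ of curvature integrals left behind by the subtraction, and integrating by parts on $M_\varepsilon:=M\setminus\bigcup_j\{|x|<\varepsilon\}$, the regular terms collapse precisely to $\int_MK_\varphi\varphi\,dA_\varphi+\int_MK_0\varphi\,dA_0$, up to a sum of Green's-formula integrals over the circles $\{|x|=\varepsilon\}$.

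The heart of the argument --- and the step I expect to demand the most care --- is the $\varepsilon\to0$ limit of those circle integrals. The integration by parts is improper at the conical points, since near $P^j$ one has $\varphi=(\beta_j-\alpha_j)\log|x|+\phi_j-\psi_j$ and $\sigma=\alpha_j\log|x|+\psi_j-\rho_j$, both with logarithmic singularities. The leading $|x|^{-1}\log|x|$ parts of the integrand on $\{|x|=\varepsilon\}$ cancel, but a finite residue survives, equal to a linear combination of $\alpha_j$ and $\beta_j$ with coefficients built from $\phi_j(0),\psi_j(0),\rho_j(0)$; after the normalisation $-\tfrac1{12\pi}$ this residue is exactly what turns the $\rho_j(0)$-dependent non-integral terms coming from the two applications of Theorem~\ref{main} into the $m_*$-independent conical contribution $\tfrac16\sum_j\{\beta_j(\tfrac{\phi_j(0)}{\beta_j+1}-\psi_j(0))-\alpha_j(\tfrac{\psi_j(0)}{\alpha_j+1}-\phi_j(0))\}$ of~\eqref{18:09}. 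Justifying this limit rigorously uses the local regularity of $\phi_j$ and $\psi_j$ discussed in Subsection~\ref{ss2.1}, which ensures that all subleading terms on $\{|x|=\varepsilon\}$ are $O(\varepsilon\log\varepsilon)$ and drop out.

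Part~2 is handled the same way, with the additional boundary integrals $\int_{\partial M}\varphi\,\partial_{\vec n}\varphi\,ds$, $\int_{\partial M}k\varphi\,ds$ and $\int_{\partial M}\partial_{\vec n}\varphi\,ds$ treated by the analogous cocycle computation along $\partial M$, using $ds_0=e^{\sigma}ds_*$ and the transformation law $k_{m_0}\,ds_0=(k_{m_*}+\partial_{\vec n}\sigma)\,ds_*$ for the geodesic curvature; since $\supp\pmb\alpha\cup\supp\pmb\beta$ is disjoint from $\partial M$, the boundary terms produce no new conical residues. Finally, because $C(0)=0$ and the corresponding bracket in the displayed sums vanishes when an order is zero, points lying in the support of only one of the two divisors are incorporated automatically, and the specialisation $\pmb\alpha=0$ with $m_0$ smooth recovers Theorem~\ref{main}.
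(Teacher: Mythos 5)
Your proposal is correct and follows essentially the same route as the paper: interpose a smooth reference metric in the common conformal class, apply Theorem~\ref{main} to each of the two singular metrics relative to that reference, subtract, and convert the leftover cross terms into the stated conical contributions via a Green's identity on the complement of shrinking disks around the singular points, where the logarithmic singularities of the two potentials leave exactly the finite residues appearing in~\eqref{18:09}. The paper packages this last step as the evaluation of $-\frac{1}{12\pi}\int_M[(\Delta^0\alpha)\beta-\alpha(\Delta^0\beta)]\,dA_0$ by shrinking contours, which is precisely your circle-integral computation.
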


  In the particular case of two  conformally equivalent  flat  metrics  on a surface without boundary  the formula~\eqref{18:09} with $K_\varphi=K_0=0$  is essentially an invariant reformulation  (i.e. independent of the choice of local holomorphic parameters)   of   the result in~\cite[Prop.1]{ProcAMS} with explicitly specified in~\eqref{Cbeta} function $C(\cdot)$ and the unnecessary additional assumption $\supp\pmb\alpha\cap\supp\pmb\beta=\varnothing$ removed.  
The proof of Corollary~\ref{2conical} is postponed to Section~\ref{Pr}.

\section{Proof of Polyakov-Alvarez type comparison formulas}\label{s2}

\subsection{Local regularity of  metrics near conical singularities}\label{ss2.1}
Recall that  for any flat in a neighbourhood of its conical singularity $P^j$  metric  $m_\varphi$   there exists a local holomorphic parameter $x$ such that $m_\varphi=|x|^{2\beta_j}|dx|^2$, see e.g.~\cite[Lemma 3.4]{Troyanov Polar Coordinates}. An important widely used property of the  flat  metric $|x|^{2\beta_j}|dx|^2$  is its homogeneity of degree $2$ in $\epsilon$ with respect to the dilations $x\mapsto\epsilon^{\frac{1}{\beta_j+1}} x$ (or, equivalently, the corresponding property of Laplacians, see e.g.~\cite[Sec. 5]{Cheeger},~\cite[Sec. 3]{BS2}, $\kappa$-homogeneity in~\cite{GKM}). 
     A non-flat in a neighborhood of its conical singularity $P^j$  metric  $m_\varphi=|x|^{2\beta_j}e^{2\phi_j}|dx|^2$  is no longer homogeneous with respect to the dilations, but it is natural to consider it as a local perturbation of  the flat  metric $|x|^{2\beta_j}|dx|^2$.  Then an  assumption on local regularity of the metric potential $\phi_j$  has to be made in order to specify the class of admissible perturbations. Before anything else, it is important to include into consideration all constant curvature metrics.  
  As is known, all  metrics of constant curvature  (even locally: near the conical singularities)  are (local) dilation analytic perturbations of flat  metrics, see e.g.~\cite{Br87,BinXuSp,BinXuHyp,MondelloPanov,Troyanov Polar Coordinates} and Remark~\ref{CC}  below. For all other metrics with conical singularities  we assume that they are admissible (local dilation analytic perturbations of the flat metrics) in the sense of the following 
 
  \begin{definition}\label{D A}
   A  metric $m_\varphi$ representing a divisor $\pmb \beta$ is admissible, if  for any $P^j\in\supp\pmb \beta$ there exists a centred at $P^j$ local holomorphic parameter $x$ such that the potential $\phi_j$ in the local representation $m_\varphi=|x|^{2\beta_j}e^{2\phi_j}|dx|^2$ viewed as  the function 
   $
   \epsilon\mapsto \phi_j(\epsilon^{\frac{1}{\beta_j+1}}x)
   $ 
   extends by analyticity from some  interval $[0,a]\subset\Bbb R$  to a (complex) neighbouhood of zero for any $x$ with $|x|<c$, where  $c>0$ is sufficiently small. 
  \end{definition}

\subsection{Dirichlet Laplacian on a shrinking metric  disk}\label{SSnon-flat}

Let $m_\varphi$ be a metric representing a divisor $\pmb \beta$.  Pick a point $P\in\supp\pmb \beta$ and  a local holomorphic parameter $x$ centred at $P$. 
 Consider the Friederichs Dirichlet Laplacian $\Delta^\varphi_{D_\epsilon}$ on the disk 
 $$
 D_\epsilon=\{x\in\Bbb C: |x|\leq \epsilon\}
 $$
  endowed with the singular metric $m_\varphi=|x|^{2\beta} e^{2\phi}|dx|^2$. More precisely,  $\Delta^\varphi_{D_\epsilon}$ is the Friederichs selfadjoint extension of the operator $-|x|^{-2\beta}e^{-2\phi}4\partial_x\partial_{\bar x}$ on $C_0^\infty(0<|x|\leq\epsilon)$ in the $L^2$-space   with the norm
$$
\|f\|=\left(\int_{|x|\leq\epsilon}|f(x,\bar x)|^2|x|^{2\beta} e^{2\phi}\frac {dx\wedge d\bar x}{-2i}\right)^{1/2}.
$$
In this section we obtain an asymptotic estimate for $\det \Delta^\varphi_{D_\epsilon}$ as $\epsilon\to 0+$ (Lemma~\ref{KeyLemma} below). For the Dirichlet Laplacian  $\Delta^\psi_{D_\epsilon}$ on the smooth metric disk $(D_\epsilon, e^{2\psi}|dx|^2)$  the  corresponding result can be easily obtained from the usual Polyakov-Alvarez formula~\cite{Alvarez,OPS,Weisberger} and the explicit formula~\cite[formula (28)]{Weisberger}  for the determinant of Dirichlet Lalacian on the flat metric disk  $(D_\epsilon, |dx|^2)$ (see Lemma~\ref{SimpleKey} at the end of this section).  
\begin{remark}\label{CC} As is known, for any constant  curvature $K_\varphi$ metric $m_\varphi$ on $M$  and any point $P\in M$ there exists a local holomorphic parameter $x$ such that $x(P)=0$ and in a neighborhood of $P$ one has
$$
m_\varphi=|x|^{2\beta}e^{2\phi}|dx|^2 \quad\text{with}\quad   \phi(x)=\log(2\beta+2)-\log (1+K_\varphi |x|^{2\beta+2});
$$
see e.g.~\cite[Lemma 3.4]{Troyanov Polar Coordinates} for the case $K_\varphi=0$,~\cite[Prop. 4]{Br87} and \cite{BinXuSp} for the case $K_\varphi>0$, and~\cite{BinXuHyp,TZ1} for the case $K_\varphi<0$; see also~\cite{MondelloPanov}.
Thus  $m_\varphi$ is an admissible (in the sense of Definition~\ref{D A})  dilation analytic  perturbation of the flat  metric $|x|^{2\beta}|dx|^2$. Moreover, the metric disks $(D_\epsilon,m_\varphi)$  and  $(D_1,\varepsilon^2 |x|^{2\beta}e^{2\phi(\varepsilon,x)}|dx|^2)$ with $\varepsilon=\epsilon^{\beta+1}$  and
$\phi(\varepsilon,x )=
\phi(\epsilon x)$  are isometric. Clearly, the metric on the unit disk $D_1$  approaches  the flat  metric $4\varepsilon^2(\beta+1)^2 |x|^{2\beta}|dx|^2$  as $\epsilon\to 0$. Therefore it is natural to expect that  after an appropriate  rescaling (cf.~\eqref{ReP}) the determinant $\det \Delta^\varphi_{D_\epsilon}$  approaches the one of the Friederichs Dirichlet Laplacian on the unit disk $D_1$ endowed with flat  metric $4|x|^{2\beta}|dx|^2$.  We further develop these ideas in Lemma~\ref{KeyLemma} below.
\end{remark}

  Let $\zeta_<(s,\beta)$   stand for the spectral zeta function of the Friederichs Dirichlet Laplacian on the unit disk $D_1$ with flat  metric $4|x|^{2\beta}|dx|^2$. 
  As is known~\cite{Spreafico,Spreafico2}, the  function $s\mapsto \zeta_<(s,\beta)$ admits an analytic continuation to $s=0$ and  
\begin{equation}\label{zeta at zero+}
 \zeta_<(0,\beta)= \frac 1 {12}\left(\beta+1+\frac 1 {\beta+1}\right),
\end{equation}
\begin{equation}\label{Spr+}
\begin{aligned}
\zeta'_<(0,\beta)=2\zeta'_B(0;\beta& +1,1,1)
+\frac {5} {12}(\beta+1)+\frac 1 2 \log (\beta+1)  +\frac 1 2 \log 2\pi,
\end{aligned}
\end{equation}
where the prime stands for the derivative with respect to $s$ and  $\zeta_B$ is the Barnes double zeta function~\eqref{ZB};  see also~\cite{Klevtsov}.

\begin{lemma}\label{KeyLemma}  For the spectral determinant of the Friederichs Dirichlet Laplacian $\Delta^\varphi_{D_\epsilon}$  on the  disk $D_\epsilon=\{x\in\Bbb C: |x|\leq \epsilon\}$  endowed with  metric $m_\varphi=|x|^{2\beta} e^{2\phi}|dx|^2$, where $x$ is a local holomorphic parameter from Definition~\ref{D A},  we have
\begin{equation}\label{E1s++}\begin{aligned}
\log\det\Delta^\varphi_{D_\epsilon}= & 2\Bigl(\log(2\epsilon^{-\beta-1})-\phi(0)\Bigr)\zeta_<(0,\beta)
\\
&  -\zeta_<'(0,\beta)+ O\bigl(-\epsilon^{\beta+1} \log\epsilon\bigr)\quad  \text{ as } \epsilon\to 0+,
\end{aligned}
\end{equation}
where $\zeta_<(0,\beta)$  and $\zeta'_<(0,\beta)$ are the same as in~\eqref{zeta at zero+} and~\eqref{Spr+}. 
\end{lemma}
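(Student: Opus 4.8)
The plan is to reduce the singular metric disk $(D_\epsilon, m_\varphi)$ to the flat model disk with metric $4|x|^{2\beta}|dx|^2$ on $D_1$ by a combination of exact isometries and a perturbative argument controlling the non-flat potential $\phi$. First I would rescale: as noted in Remark~\ref{CC}, the disk $(D_\epsilon, |x|^{2\beta}e^{2\phi}|dx|^2)$ is isometric to $(D_1, \varepsilon^2|x|^{2\beta}e^{2\phi(\epsilon x)}|dx|^2)$ with $\varepsilon=\epsilon^{\beta+1}$, and the operator on $D_1$ is $\varepsilon^{-2}$ times the Laplacian for the metric $|x|^{2\beta}e^{2\phi(\epsilon x)}|dx|^2$. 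Using the standard rescaling property of zeta-regularized determinants (the analogue of~\eqref{ReP} for Dirichlet Laplacians, where $\dim\ker=0$), namely $\log\det(\varepsilon^{-2}\Delta)=-2(\log\varepsilon)\,\zeta(0)+\log\det\Delta$, the factor $\varepsilon^{-2}=\epsilon^{-2(\beta+1)}$ produces the term $2\log(\epsilon^{-\beta-1})\zeta_<(0,\beta)$ up to an error coming from the fact that $\zeta(0)$ for the $\epsilon$-dependent metric differs from $\zeta_<(0,\beta)$; but by Corollary~\ref{z0} (or directly~\eqref{zeta at zero+}), $\zeta(0)$ depends only on $\beta$ and the topology of the disk, hence equals $\zeta_<(0,\beta)$ exactly, so no error is incurred here.

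Next I would treat the conformal factor $e^{2\phi(\epsilon x)}$ on $D_1$ as a perturbation of the flat metric $e^{2\phi(0)}|x|^{2\beta}|dx|^2=e^{2\phi(0)}\cdot\tfrac14\cdot(4|x|^{2\beta}|dx|^2)$. The constant conformal factor $e^{2\phi(0)}$ is again handled by the rescaling property: scaling the metric $4|x|^{2\beta}|dx|^2$ by the constant $e^{2\phi(0)}$ multiplies the Laplacian by $e^{-2\phi(0)}$ and contributes $-2\phi(0)\,\zeta_<(0,\beta)$ to $\log\det$, giving the first line of~\eqref{E1s++} together with $-\zeta'_<(0,\beta)$. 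The remaining task is to show that replacing $\phi(0)$ by the full $x$-dependent potential $\phi(\epsilon x)$ changes $\log\det$ only by $O(-\epsilon^{\beta+1}\log\epsilon)$. For this I would invoke the admissibility hypothesis (Definition~\ref{D A}): the function $\epsilon\mapsto\phi(\epsilon^{1/(\beta+1)}x)$, i.e. $\varepsilon\mapsto\phi(\varepsilon^{1/(\beta+1)}\cdot)$ in the rescaled variable $\varepsilon=\epsilon^{\beta+1}$, extends analytically to a neighborhood of $0$, so $\phi(\epsilon x)=\phi(0)+O(\varepsilon)=\phi(0)+O(\epsilon^{\beta+1})$ uniformly on $D_1$, in an analytic (hence $C^\infty$) fashion. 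One then differentiates $\log\det$ along the analytic family of metrics $g_\varepsilon$ interpolating between $\varepsilon=0$ (flat) and the actual value, using the variational (Polyakov-type) formula for $\frac{d}{d\varepsilon}\log\det\Delta_{g_\varepsilon}$ on a surface with boundary. The curvature and geodesic-curvature integrands appearing in that variational formula are $O(\varepsilon)$ (the metric is $O(\varepsilon)$-close to flat and the $O(\varepsilon)$ correction is controlled with all derivatives), the one potentially dangerous contribution being a boundary term or a logarithmically divergent piece near $x=0$, which after integrating in $\varepsilon$ from $0$ produces the stated $O(-\epsilon^{\beta+1}\log\epsilon)$; the $\log$ arises from the $\log|x|$ in $\varphi$ near the conical point combined with the area element $|x|^{2\beta}\,dx\wedge d\bar x$.

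The main obstacle, and the step requiring the most care, is precisely this last one: justifying the differentiated Polyakov formula for the \emph{singular} metric family $g_\varepsilon=|x|^{2\beta}e^{2\phi(\epsilon x,\cdot)}|dx|^2$ on $D_1$ and extracting the sharp $O(-\epsilon^{\beta+1}\log\epsilon)$ remainder. This needs (i) that the Friederichs Dirichlet Laplacians $\Delta_{g_\varepsilon}$ form a suitable analytic family of type (A) in the sense needed for $\varepsilon\mapsto\zeta'_{g_\varepsilon}(0)$ to be analytic (or at least $C^1$) near $\varepsilon=0$ — which is where the analyticity built into Definition~\ref{D A} is essential, and where results on heat-trace asymptotics for cone operators~\cite{BS2,GKM} are used to guarantee the short-time expansion, hence $\zeta'(0)$, depends analytically on the parameter; and (ii) a careful analysis of the conformal variation integrand near the conical singularity $x=0$, showing the only non-integrable-looking term contributes at worst $\epsilon^{\beta+1}\log\epsilon$ after integration. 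Once this is in place, assembling the three contributions — the spatial rescaling $\epsilon^{-\beta-1}$, the constant conformal factor $e^{-2\phi(0)}$, and the reference determinant $\det$ of the flat model on $D_1$ computed via~\eqref{zeta at zero+}–\eqref{Spr+} — yields exactly~\eqref{E1s++}. As a sanity check one can specialize to $\phi\equiv\mathrm{const}$, where the estimate becomes the exact identity coming purely from rescaling, and to $\beta=0$, where it must reduce to the smooth-disk statement of Lemma~\ref{SimpleKey}.
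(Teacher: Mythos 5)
Your first step --- the isometry of $(D_\epsilon,m_\varphi)$ with $(D_1,\varepsilon^2|x|^{2\beta}e^{2\phi(\epsilon x)}|dx|^2)$, $\varepsilon=\epsilon^{\beta+1}$, followed by the rescaling identity $\zeta'_r(0)=(2\log r)\zeta(0)+\zeta'(0)$ --- is exactly how the paper begins. But two points in the remainder do not hold up. First, you justify ``no error from the $\zeta(0)$ term'' by citing Corollary~\ref{z0}; that corollary is deduced \emph{from} Theorem~\ref{main}, whose proof rests on this very lemma, so the appeal is circular. The underlying fact (that $\zeta(0)$ for the perturbed cone metric equals $\zeta_<(0,\beta)$, or at least equals it up to $O(\varepsilon)$) is true, but it requires knowing the structure of the constant heat coefficient for the Friederichs extension on a cone and, crucially, that the $t^0\log t$ coefficient vanishes --- otherwise $\zeta$ is not even regular at $s=0$. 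The paper devotes a substantial part of the proof to exactly this, via the Gil--Krainer--Mendoza expansion for cone operators with stationary domains and the Br\"uning--Seeley argument showing $\operatorname{Res}\zeta(-1)=0$ for the model operator on the cross-section. In your write-up this issue is invisible.

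Second, and more seriously, the step that carries all the analytic weight --- showing $\zeta'_\prec(0,\varepsilon)=\zeta'_<(0,\beta)+O(\varepsilon)$ --- is delegated to a ``differentiated Polyakov formula'' for the family of \emph{singular} metrics $|x|^{2\beta}e^{2\phi(\varepsilon,x)}|dx|^2$ on $D_1$. Such a variational formula for conformal deformations of a conical background is precisely a special case of the theorem this lemma is meant to prove (compare Corollary~\ref{2conical}); invoking it here is circular, and your proposal does not establish it independently --- you explicitly identify it as ``the step requiring the most care'' and then only describe what would be needed. The paper avoids this entirely: it proves directly that $(s,\varepsilon)\mapsto\zeta_\prec(s,\varepsilon)$ is jointly analytic near $(0,0)$, using uniform trace-class bounds on $(\hat\Delta^\varphi_\varepsilon-\lambda)^{-2}$ along a contour (type~A analyticity of the operator family from Definition~\ref{D A}) to get analyticity in $\varepsilon$ for $\Re s>2$, and the cone heat-trace expansions to continue in $s$ to $0$; the estimate $\zeta'_\prec(0,\varepsilon)=\zeta'_<(0,\beta)+O(\varepsilon)$ is then automatic, and the $\log\epsilon$ in the remainder comes simply from multiplying the $O(\varepsilon)$ error in $\zeta_\prec(0,\varepsilon)$ by the rescaling factor $\log(2\varepsilon^{-1})$ --- not from a boundary or near-singularity integral as you suggest. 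Until you either prove the variational formula for this singular family from scratch or replace it by a direct perturbation argument of the above type, the proof is incomplete.
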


\begin{proof} Denote $\varepsilon=\epsilon^{\beta+1}$ and let $\phi(\varepsilon,x )=
\phi(\epsilon x)$. Since the metric disks $(D_\epsilon,m_\varphi)$  and  $(D_1,\varepsilon^2 |x|^{2\beta}e^{2\phi(\varepsilon,x)}|dx|^2)$ are isometric, we can replace $\Delta^\varphi_{D_\epsilon}$ by the Friederichs extension $\Delta^\varphi_\varepsilon$ of the Dirichlet Laplacian $-\varepsilon^{-2}|x|^{-2\beta}e^{-2\phi(\varepsilon,x)}4\partial_x\partial_{\bar x}$  on the unit disk $D_1$. 

Introduce an appropriate rescaling by setting $\hat\Delta^\varphi_\varepsilon=\frac 1 4 \varepsilon^2e^{2\phi(0)}\Delta^\varphi_\varepsilon$.  
In a small neighbourhood of zero $\varepsilon\mapsto \hat\Delta^\varphi_\varepsilon$ is a type A~\cite{Kato} analytic family of operators  in the space  $L^2(|x|\leq 1, |x|^{2\beta}|dx|^2)$. In particular,  the selfadjoint operator  $\hat\Delta^\varphi_0$ corresponds to the flat  metric $4|x|^{2\beta}|dx|^2$ and $\zeta_<(s)$ is its spectral zeta function (for brevity of notations we do not list $\beta$ as an  argument of the zeta functions  in this proof). It is known that the spectrum $\sigma(\hat\Delta^\varphi_0)$ of $\hat\Delta^\varphi_0$ consists of isolated eigenvalues $\lambda_k$, 
$$
0<\lambda_0\leq\lambda_1\leq\lambda_2\leq\cdots\leq \lambda_k\to+\infty,
$$
 and
 $(\hat\Delta^\varphi_0)^{-2}$ is a trace class operator.  Since 
$$
\|(\hat\Delta^\varphi_0-\lambda)^{-1}\|= 1/\operatorname{dist}\{\lambda,\sigma(\hat\Delta^\varphi_0)\},
$$
 the first resolvent identity together with inequality $\|AB\|_1\leq \|A\|\|B\|_1$ implies 
$$
\|(\hat\Delta^\varphi_0-\lambda)^{-2}\|_1\leq \left( 1+|\lambda| \|(\hat\Delta^\varphi_0-\lambda)^{-1}\|\right)^2\|(\hat\Delta^\varphi_0)^{-2}\|_1\leq C
$$ 
uniformly in $\lambda\in\mathcal C$, where $\mathcal C$ is a set such that  $\operatorname{dist}\{\lambda,\sigma(\hat\Delta^\varphi_0)\}\geq c|\lambda|$ for any $\lambda\in\mathcal C$ and some $c>0$.  Denote 
$$
T(\varepsilon,\lambda)=(\hat\Delta^\varphi_\varepsilon-\hat\Delta^\varphi_0)(\hat\Delta^\varphi_0-\lambda)^{-1}
$$
and observe that  $\|T(\varepsilon,\lambda)\|\to 0$ uniformly in $\lambda\in\mathcal C$ and $\varepsilon$ as $|\varepsilon|\to 0$.  We have
\begin{equation}\label{ee1+}
 \|(\hat\Delta^\varphi_\varepsilon-\lambda)^{-2}\|_1=\|\bigl(\operatorname{Id}+T^*(\bar{\varepsilon},\bar\lambda)\bigr)^{-1} (\hat\Delta^\varphi_0-\lambda)^{-2}\bigl(\operatorname{Id}+T({\varepsilon}, \lambda)\bigr)^{-1}\|_1\leq C
\end{equation}
for all $\lambda\in\mathcal C$ and $|\varepsilon|<\delta\ll1$.  Introduce the spectral zeta function 
\begin{equation*}
\zeta_\prec(s,\varepsilon)=\frac 1 {2\pi i (s-1)}\int_{\mathcal C} \lambda^{1-s}\Tr (\hat\Delta^\varphi_\varepsilon-\lambda)^{-2}\, d\lambda,
\end{equation*}
where $\mathcal C$ is a contour running clockwise at a sufficiently close distance around the cut $(-\infty,0]$ and  $\lambda^z=|\lambda|^z e^{iz\arg z}$ with $|\arg \lambda|\leq \pi$, cf.~\cite{Shubin}. Then~\eqref{ee1+} implies that $(s,\varepsilon)\mapsto\zeta_\prec(s,\varepsilon)$ is an analytic function of $s$ for $\Re s>2$ and $\varepsilon$ for $| \varepsilon|<\delta\ll1$. One of the ways to see analyticity in $\varepsilon$ is to  make the substitution 
$$
\Tr (\hat\Delta^\varphi_\varepsilon-\lambda)^{-2}=\frac 1 {2\pi i }\sum_{k\geq0}\oint\frac  {\bigl((\hat\Delta^\varphi_\mu-\lambda)^{-2}\psi_k,\psi_k\bigr)}{\mu-\varepsilon}\, d\mu,
$$
where  $\psi_k$ is an orthonormal basis in  $L^2(|x|\leq 1, |x|^{2\beta}|dx|^2)$ and
$$
\sum |\bigr((\hat\Delta^\varphi_\mu-\lambda)^{-2}\psi_k,\psi_k\bigr)|\leq \|(\hat\Delta^\varphi_\mu-\lambda)^{-2}\|_1\leq C
$$
because of~\eqref{ee1+}. After the substitution one can  change the order of integration and summation to obtain  the Cauchy's integral formula for  $\varepsilon\mapsto\zeta_\prec(s,\varepsilon)$. 

In the remaining part of this proof we show that $(s,\varepsilon)\mapsto \zeta_\prec(s,\varepsilon)$  continues analytically to $(0,0)$. Then thanks to  $\zeta_\prec(s,0)=\zeta_<(s)$  we can conclude that 
\begin{equation}\label{MonAug5}
\zeta_\prec(0,\varepsilon)=\zeta_<(0)+O(\varepsilon), \quad \zeta'_\prec(0,\varepsilon)=\zeta'_<(0)+O(\varepsilon),
\end{equation}
where the prime stands for the derivative with respect to $s$.  The standard rescaling argument guarantees that multiplication of a metric by $R^2$ adds $\zeta(0)\log R^2$ to the corresponding value of $\zeta'(0)$; see Sec.~\ref{Sec Cor}. Since $\hat\Delta^\varphi_\varepsilon=\frac 1 4 \varepsilon^2e^{2\phi(0)}\Delta^\varphi_\varepsilon$, the  rescaling argument and~\eqref{MonAug5} lead to
\begin{equation*}
\log\det\Delta^\varphi_\varepsilon=2\Bigl(\log(2\varepsilon^{-1})-\phi(0)\Bigr)\zeta_<(0)  -\zeta_<'(0)+ O\bigl(-\varepsilon \log\varepsilon\bigr) \quad \text{ as } \varepsilon\to0\!+\!.
\end{equation*}
Taking into account the equality   $\varepsilon=\epsilon^{\beta+1}$ we arrive at~\eqref{E1s++}. 

It suffices to show that $s\mapsto\zeta_\prec(s,\varepsilon)$ continues  analytically  from  $\Re s>2$ to $s=0$ for each $\varepsilon$, $|\varepsilon|<\delta\ll1$. We will rely on the representation
\begin{equation}\label{Wed15:19}
\zeta_\prec(s,\varepsilon)=\frac 1 {\Gamma(s)}\int_0^\infty t^{s-1} \Tr \bigl(e^{-t\hat\Delta^\varphi_\varepsilon}\bigr)\, dt
\end{equation}
together with short  time heat trace asymptotics~\cite{BS2,GKM}. (For the large values of $t$  the estimate $|\Tr \bigl(e^{-t\hat\Delta^\varphi_\varepsilon}\bigr)|=O(e^{-ct})$  with some $c>0$ immediately follows from $$e^{-t\hat\Delta^\varphi_\varepsilon}=\frac i{2\pi t}\int_{\mathcal C} e^{-\lambda t}(\hat\Delta^\varphi_\varepsilon-\lambda)^{-2}\,d\lambda$$ with a suitable contour $\mathcal C$ in the right half-plane  $\Re \lambda >0$  and~\eqref{ee1+}.)

In the polar coordinates $(r,\theta)=\bigl((\beta+1)^{-1}|x|^{\beta+1},\arg x\bigr)$ the  operator $\hat\Delta^\varphi_\varepsilon$ takes the form
$$
\hat\Delta^\varphi_\varepsilon=-e^{2\bigl(\phi(0)-\phi(\varepsilon r, \theta)\bigr)}(2r)^{-2}\Bigl((r\partial_r)^2+(\beta+1)^{-2}\partial_\theta^2\Bigr),
$$
where the function  $(r,\theta)\mapsto\phi( \varepsilon r,\theta)=\phi(\varepsilon,x)$ is smooth up to $r=0$. Therefore $\hat\Delta^\varphi_\varepsilon$ falls into the class of elliptic cone differential operators with stationary domains studied in~\cite{GKM}; we recall that the domain of $\hat\Delta^\varphi_\varepsilon$ coincides with the domain of the Friederichs extension $\hat\Delta^\varphi_0$ and the domains of Friederichs extensions are always stationary. Let $\chi(r,\theta)=\chi(r)$ with a cutoff  function $\chi\in C_c^\infty\bigl([0,\frac 1 {\beta+1})\bigr)$  that equals $1$ in a neighborhood of $r=0$.  A direct application of the main result in~\cite{GKM}\footnote{Proof of Thm. 4.4 in~\cite{GKM} requires some corrections~\cite{GK}: it should be $J=N+n+1$ (instead of $J=N+1$) on both places where the choice is relevant, the statement $\alpha(y,\hat\lambda)=0$ for $k<n$ on page 6511 is incorrect, on the same page the estimate $t_{N,N+n}(y,\lambda)=O(|\lambda|^{-N/m-\ell}\log|\lambda|)$ (that follows from the last equality on page 6510) is needed in addition to~(4.11) and (4.12). I would like to thank Juan B. Gil and Thomas Krainer for responding promptly to my inquiries about the proof and for sending me the corrected version of the paper.}  implies that  the  short time asymptotics of the heat trace $\Tr \bigl(\chi e^{-t\hat\Delta^\varphi_\varepsilon}\bigr)$ has the form
\begin{equation}\label{exp14}
 c_{-1} t^{-1}+c_{-1/2}t^{-1/2}+c_0+c_{01}\log t+ \sum_{k=1}^\infty \sum_{\ell=0}^{m_k} c_{k\ell} t^{\frac j 2} \log^\ell t\quad\text{ as } t\to 0+,
\end{equation}
where the coefficients $c_k$ and $c_{k\ell}$ depend on $\epsilon$. 
There are no conical singularities on the support of $(1-\chi)$ and hence the short time asymptotic expansion  
$$
\Tr \bigl((1-\chi) e^{-t\hat\Delta^\varphi_\varepsilon}\bigr)\sim\sum_{j\geq -2} C_j(\varepsilon) t^{j/2}\quad\text{ as } t\to 0+
$$ 
can be obtained in the standard well-known way, e.g.~\cite{Gilkey,Seeley}. 
In total  we have
$$
\Tr( e^{-t \hat\Delta^\varphi_\varepsilon})= a_{-1}(\varepsilon)t^{-1}+a_{-1/2}(\varepsilon) t^{-\frac 1 2}+a_0^0(\varepsilon)+a_0^1(\varepsilon)\log t+O(t^{1/2}\log^{m_1} t)\quad\text{ as } t\to 0+
$$
with some coefficients $a_{-1}$, $a_{-1/2}$, $a_0^0$, and $a_0^1$.
The representation~\eqref{Wed15:19} gives
\begin{equation}\label{zetaKeyL}
\zeta_\prec(s,\varepsilon)=\frac 1 {\Gamma(s)}\left(\frac{a_{-1}(\varepsilon)}{s-1}+\frac{a_{-1/2}(\varepsilon)}{s-1/2}+\frac{a_0^0(\varepsilon)}{s}-\frac{a_0^1(\varepsilon)}{s^2}+R(\varepsilon,s)\right),
\end{equation}
where $R(\varepsilon,s)$ is analytic in $s$ for $\Re s>-1/2$; recall that $1/\Gamma(s)=s+\gamma s^2+O(s^3)$. 

Thus
$
(s,\varepsilon)\mapsto s\zeta_\prec(s,\varepsilon)
$
continues analytically from $\Re s>2$, $|\varepsilon|<\delta\ll1$ to a neighbourhood of $(0,0)$. Moreover,  $s\zeta_\prec(s,\varepsilon)\bigr|_{s=0}=-a_0^1(\varepsilon)$. But results in~\cite{BS2,Troyanov Polar Coordinates} guarantee that $a_0^1(\varepsilon)=0$  (first for all  $\varepsilon\geq 0$, and then, by analyticity,  for all $\varepsilon$ with $|\varepsilon|<\delta\ll1$). 

Indeed, if $\varepsilon\geq 0$, then  $4 |x|^{2\beta}e^{2\bigl(\phi(\varepsilon,x)-\phi(0)\bigr)}|dx|^2$ is a  metric with conical singularity on the disk $|x|\leq 1$. By~\cite[Theorem 4.1]{Troyanov Polar Coordinates} in a small neighborhood of $x=0$  there exist smooth local geodesic polar coordinates $(\rho,\theta)$  such that  
 $$4 |x|^{2\beta}e^{2\bigl(\phi(\varepsilon,x)-\phi(0)\bigr)}|dx|^2=d\rho^2+h^2(\rho,\theta)d\theta^2,\quad \theta\in [0,2\pi(\beta+1)),$$ 
$$
\lim_{\rho\to0}\frac{h(\rho,\theta)}{\rho}=1,\quad  h_\rho(0,\theta)=1,\quad  h_{\rho\rho}(0,\theta)=0,
$$
where $ h_\rho=\partial_\rho h$ and  $h_{\rho\rho}=\partial_\rho^2 h$. Let $\chi(\rho,\theta)=\chi(\rho)$ be a smooth cutoff function supported in a small neigborhood of $\rho=0$ and such that $\chi(\rho)=1$ for all $\rho$ sufficiently close to $0$. Then $\chi\hat\Delta^\varphi_\varepsilon$ can be  considered as the  operator 
$
\chi h^{-1/2}\left(-\partial_\rho^2+\rho^{-2}\mathcal A(\rho)\right)h^{1/2}
$
 in the space $L^2(h(\rho,\theta)\,d\rho \,d \theta)$, where 
$$
\rho\mapsto \mathcal A(\rho)=-\rho^2\left(\frac{ h_\rho^2}{4h^{2}}-\frac{ h_{\rho\rho}} {2 h}  +h^{1/2}\left(\frac{1}{h}\partial_\theta\right)^2h^{-1/2}\right)
$$
is a smooth family of operators on the circle $\Bbb R/2\pi(\beta+1)\Bbb Z$. As a consequence, by~\cite[Theorem~5.2 and Theorem~7.1]{BS2} we have 
\begin{equation}\label{HA}
\Tr \chi e^{-\hat\Delta^\varphi_\varepsilon t}\thicksim\sum_{j=0}^\infty A_j t^{\frac {j-3}{2}} +\sum_{j=0}^\infty B_j t^{-\frac{\alpha_j+4}{2}}+\sum_{j: \alpha_j\in\Bbb Z_- } C_j  t^{-\frac{\alpha_j+4}{2}} \log t\quad \text{    as } t\to0+
\end{equation}
with some coefficients $A_j$, $B_j$, and $C_j$,  and an infinite sequence of numbers  $\{\alpha_j\}$ with $\Re\alpha_j\to-\infty$. The coefficient $C_j$ before $t^0\log t$ is given by
$\frac1 4\Res \zeta(-1)$, where $\zeta$ is the spectral zeta function of $(\mathcal A(0)+1/4)^{1/2}$; see \cite[formula (7.24)]{BS2}. Since $\mathcal A(0)=-\partial^2_\theta-1/4$,  we obtain
$$\zeta(s)=2\sum_{j\geq1} ( j/(2\beta+2))^{-s}=2 (2\beta+2)^{s}\zeta_R(s). $$
Thus $\Res \zeta(-1)=0$ and  the coefficient $C_j$ before  $t^0\log t$  is  zero.  This together with  $\Tr \bigl((1-\chi) e^{-t\hat\Delta^\varphi_\varepsilon}\bigr)\sim\sum_{j\geq -2} c_j t^{j/2}$ implies that the coefficient $a_0^1(\varepsilon)$ in~\eqref{zetaKeyL} is zero. Hence $s\mapsto\zeta_\prec(s,\varepsilon)$  continues analytically from  $\Re s>2$ to $s=0$ for each $\varepsilon$, $|\varepsilon|<\delta\ll1$.  This completes the proof. \end{proof}

\begin{lemma}\label{SimpleKey} Let $\Delta^\psi_{D_\epsilon}$ be the selfadjoint Dirichlet Laplacian on the metric disk $(D_\epsilon,e^{2\psi}|dx|^2)$, where $\psi$ is smooth. Then  
\begin{equation}\label{E2}
\log \det \Delta^\psi_{D_\epsilon}=\frac 1 3\Bigl( \log(2 \epsilon^{-1})-\psi(0)\Bigr)-
\zeta_<'(0,0)+O(\epsilon) \quad\text{as } \epsilon\to 0+
\end{equation}
with $\zeta_<'(0,\beta)$  given in~\eqref{Spr+}.
\end{lemma}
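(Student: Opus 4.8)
The plan is to reduce the claim to the special case $\beta=0$ of Lemma~\ref{KeyLemma} combined with the classical Polyakov-Alvarez formula, since the metric here is honestly smooth. First I would apply the classical Polyakov-Alvarez formula~\cite{Alvarez} to compare the Dirichlet Laplacian on $(D_\epsilon, e^{2\psi}|dx|^2)$ with the Dirichlet Laplacian on the \emph{flat} disk $(D_\epsilon, |dx|^2)$: writing $e^{2\psi}|dx|^2 = e^{2\psi}\cdot|dx|^2$ with conformal factor $\psi$ (vanishing or not on $\partial D_\epsilon$), the formula yields
\begin{equation*}
\log\det\Delta^\psi_{D_\epsilon}-\log\det\Delta^{\mathrm{flat}}_{D_\epsilon} = -\frac{1}{12\pi}\left(\int_{D_\epsilon} K_\psi\psi\,dA_\psi+\int_{D_\epsilon}\psi\,\partial_{\vec n}\psi\,ds\right)-\frac{1}{6\pi}\int_{\partial D_\epsilon} k\,\psi\,ds-\frac{1}{4\pi}\int_{\partial D_\epsilon}\partial_{\vec n}\psi\,ds,
\end{equation*}
where $K_\psi$ is the Gaussian curvature of $e^{2\psi}|dx|^2$ and $k$, $s$, $\partial_{\vec n}$ refer to the flat metric on $D_\epsilon$ (so $k=1/\epsilon$ and $\mathrm{length}(\partial D_\epsilon)=2\pi\epsilon$). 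Then I would invoke the known explicit evaluation~\cite[formula (28)]{Weisberger} of $\det\Delta^{\mathrm{flat}}_{D_\epsilon}$ for the flat disk of radius $\epsilon$, which after a rescaling from radius $1$ to radius $\epsilon$ gives $\log\det\Delta^{\mathrm{flat}}_{D_\epsilon}=\tfrac13\log(2\epsilon^{-1})-\zeta_<'(0,0)$ up to the constant fixed by $\zeta_<(0,0)=\tfrac16$ via the rescaling identity~\eqref{ReP}; indeed $\zeta_<(0,0)=\tfrac{1}{12}(1+1)=\tfrac16$ from~\eqref{zeta at zero+}, so multiplying the radius-$1$ flat metric by $\epsilon^2$ adds $2\log\epsilon\cdot\tfrac16=\tfrac13\log\epsilon$ to $\log\det$.

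Next I would carry out the small-$\epsilon$ asymptotics of the Polyakov-Alvarez correction terms. Since $\psi$ is smooth, Taylor expansion at $x=0$ gives $\psi(x)=\psi(0)+O(|x|)$, so $\int_{\partial D_\epsilon}k\,\psi\,ds = \tfrac1\epsilon\cdot 2\pi\epsilon\cdot(\psi(0)+O(\epsilon)) = 2\pi\psi(0)+O(\epsilon)$, hence $-\tfrac{1}{6\pi}\int_{\partial D_\epsilon}k\,\psi\,ds = -\tfrac13\psi(0)+O(\epsilon)$. The remaining three integrals are each $O(\epsilon)$: the bulk term $\int_{D_\epsilon}K_\psi\psi\,dA_\psi$ is $O(\epsilon^2)$ because $K_\psi$, $\psi$ and the area element are bounded while $\mathrm{area}(D_\epsilon)=O(\epsilon^2)$; the boundary terms $\int_{\partial D_\epsilon}\psi\,\partial_{\vec n}\psi\,ds$ and $\int_{\partial D_\epsilon}\partial_{\vec n}\psi\,ds$ are $O(\epsilon)$ since $\partial_{\vec n}\psi = \partial_r\psi = O(1)$ and the boundary length is $2\pi\epsilon$. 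Collecting everything,
\begin{equation*}
\log\det\Delta^\psi_{D_\epsilon} = \tfrac13\log(2\epsilon^{-1})-\zeta_<'(0,0) - \tfrac13\psi(0) + O(\epsilon) = \tfrac13\bigl(\log(2\epsilon^{-1})-\psi(0)\bigr)-\zeta_<'(0,0)+O(\epsilon),
\end{equation*}
which is~\eqref{E2}.

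The only genuinely delicate point is bookkeeping of the flat-disk normalization: one must make sure the constant $-\zeta_<'(0,0)$ and the $\tfrac13\log 2$ are reproduced exactly, i.e. that $\zeta_<(s,0)$ (Dirichlet Laplacian on the \emph{unit} disk with metric $4|x|^2|dx|^2$) matches the Weisberger normalization after accounting for the extra factor of $4$ and the radius rescaling; this is a matter of applying~\eqref{ReP} twice (once for the factor $4$, once for the factor $\epsilon^2$) using $\zeta_<(0,0)=\tfrac16$, and this is where any stray constant would show up. Everything else is a routine Taylor estimate. An alternative, which I would mention for robustness, is to derive~\eqref{E2} directly as the $\beta\to0$ (and $\phi\equiv\psi$) specialization of~\eqref{E1s++}: setting $\beta=0$ there gives $\log\det\Delta^\psi_{D_\epsilon}=2(\log(2\epsilon^{-1})-\psi(0))\zeta_<(0,0)-\zeta_<'(0,0)+O(-\epsilon\log\epsilon)$, and since $\zeta_<(0,0)=\tfrac16$ this is precisely~\eqref{E2} but with a slightly weaker error term $O(\epsilon\log\epsilon)$; the sharper $O(\epsilon)$ in~\eqref{E2} is exactly what the smoothness of the metric buys and why the separate Polyakov-Alvarez argument is worth giving.
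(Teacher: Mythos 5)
Your proposal is correct and follows essentially the same route as the paper: compare with the flat disk via the classical Polyakov--Alvarez formula, evaluate the flat-disk determinant by Weisberger's formula (equivalently \eqref{zeta at zero+}--\eqref{Spr+} with $\beta=0$ plus the rescaling \eqref{ReP}), and observe that only the geodesic-curvature boundary term contributes $-\tfrac13\psi(0)$ at order one while the remaining integrals are $O(\epsilon)$. The only cosmetic difference is that you write the conformal anomaly in the ``regularized'' form of Theorem~\ref{main} rather than with $\tfrac12\int|\nabla_0\psi|^2\,dA_0$ as the paper does; these are equivalent by integration by parts, and your closing remark about why the smoothness of $\psi$ yields $O(\epsilon)$ rather than the $O(-\epsilon\log\epsilon)$ of Lemma~\ref{KeyLemma} is exactly the right observation.
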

\begin{proof}
For the selfadjoint Dirichlet Laplacian $\Delta^0_{D_\epsilon}=-4\partial_x\partial_{\bar x}$ in the disk $|x|\leq\epsilon$ we have 
\begin{equation}\label{E1}
\log \det \Delta^0_{D_\epsilon}=-\frac 1  3 \log\epsilon +\frac 1 3 \log 2-\zeta'_<(0,0);\end{equation}
see~\eqref{zeta at zero+} and~\eqref{Spr+} with $\beta=0$ or~\cite[formula (28)]{Weisberger}. 
Since $\psi$ is smooth, we can use the classical Polyakov-Alvarez formula~\cite{Alvarez,OPS,Weisberger}, which  gives
\begin{equation*}\label{eprst}
\log \frac {\det \Delta^\psi_{D_\epsilon}}{\det \Delta^0_{D_\epsilon}}= -\frac{1}{6\pi}\left (\frac1 2 \int_{|x|\leq\epsilon} |\nabla_0\psi|^2\,dA_0+\int_{|x|=\epsilon}k_0\psi \,d s_0  \,\right )-\frac 1 {4\pi}\int_{|x|=\epsilon}\partial_{\vec n}\psi\,d s_0.
\end{equation*}
Here $\nabla_0$ is the gradient, $k_0=1/\epsilon$ is the geodesic curvature of the circle $|x|=\epsilon$,  and $\vec n$ is the outward unit normal to the disk  $|x|\leq \epsilon$ (all with respect  to the metric $|dx|^2$).  Therefore
$$
\log \frac {\det \Delta^\psi_{D_\epsilon}}{\det \Delta^0_{D_\epsilon}}=-\frac{1}{6\pi}\left (O(\epsilon)+\int_0^{2\pi}\Bigl(\psi(0)+O(\epsilon)\Bigr) \,d \theta  \,\right )-O(\epsilon)=-\frac 1 3 \psi(0)+O(\epsilon).
$$
This together with~\eqref{E1} completes the proof. 
\end{proof}

 \subsection{BFK decomposition formulas}\label{sBFK}
 By  $D_\epsilon^j\subset M$ we denote the $\epsilon$-neighborhood of conical singularity $P^j\in\supp\pmb\beta$ of $m_\varphi$ such that $P\in D_\epsilon^j$ if and only if  $|x(P)|\leq \epsilon$, where  $x$  is a local holomorphic parameter  from Definition~\ref{D A}. For   sufficiently small $\epsilon>0$ the disks  $D_\epsilon^1,\dots,D_\epsilon^N$ are disjoint and do not touch the boundary $\partial M$ of $M$. Let  $M_\epsilon=M\setminus \left\{D_\epsilon^1\cup\dots\cup D_\epsilon^n\right\}$ and let $\partial M_\epsilon$ stand for the boundary of $M_\epsilon$.   
 
 In this section we prove  BFK-type decomposition formulas for $\det\Delta^\varphi$ along the boundary $\partial M_\epsilon\setminus\partial M$ (Proposition~\ref{BFKf} below).  This is an analog of the BFK decomposition formula in~\cite[Theorem $\rm B^*$]{BFK} if $\partial M=\varnothing$ and of the one in~\cite[Corollary 1.3]{LeeBFKboundary} if $\partial M\neq \varnothing$. As is known,  for the metrics that are flat near the conical singularities  the BFK decomposition formulas and their proofs remain valid provided that one considers the Friederichs extensions of the Laplacians and the decomposition is done along a smooth closed curve that does not contain any  singularity of the metric;  see e.g.~\cite{HKK1,HKK2,ProcAMS,KokotKorot} and~\cite{LMP} for a more general result.  In our case the decomposition formulas are still  valid but their proof requires some minor modifications due to appearance of logarithmic terms in the short time heat trace asymptotics.    

As before, let $\Delta^\varphi$ stand for the Friederichs extension of the Laplacian on $(M,m_\varphi)$  ($\Delta^\varphi$ is the Dirichlet Laplacian if $\partial M\neq\varnothing$).  Consider also the Friederichs extension $\Delta^\varphi_{\partial M_\epsilon}$ of the Laplacian on $(M,m_\varphi)$ with Dirichlet boundary condition on $\partial M_\epsilon$; more precisely,  $\Delta^\varphi_{\partial M_\epsilon}$ is the Friederichs extension in $L^2(M,m_\varphi)$ of the operator $\Delta^\varphi$ defined on the functions  $u\in C^\infty_0\left(M\setminus \supp\pmb \beta\right)$ satisfying $u|_{\partial M_\epsilon}=0$. 

\begin{lemma}\label{hTr} 
The heat traces  $\Tr( e^{-t \Delta^\varphi})$ and $\Tr( e^{-t \Delta^\varphi_{\partial M_\epsilon}})$  have short time asymptotic expansions of the form 
\begin{equation}\label{hta}
 a_{-1}t^{-1}+a_{-1/2} t^{-1/2}+a_0 +\sum_{k=1}^\infty\sum_{\ell=0}^{m_k}a_{k \ell}t^{\frac{k}{2}}\log^\ell t\quad\text{ as } t\to 0+,\\
\end{equation}
where $a_{k}$,  and $a_{k\ell}$ are some coefficients.
If, in addition, the metric $m_\varphi$ is flat in a neighborhood of $\supp\pmb\beta$, then there are no logarithmic terms  in the asymptotic expansions (i.e. $m_k=0$  for all $k=1,2,3,\dots$). 
\end{lemma}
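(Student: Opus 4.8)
The plan is to reduce both heat traces to a combination of a model (flat cone) contribution near each conical singularity and a classically understood smooth contribution away from the singularities, exactly as in the proof of Lemma \ref{KeyLemma}. First I would fix, for each $P^j\in\supp\pmb\beta$, the local holomorphic parameter $x$ from Definition \ref{D A} and a cutoff $\chi_j\in C_c^\infty(D^j_{c}{}^\circ)$ equal to $1$ near $P^j$, with the $D^j_c$ pairwise disjoint and disjoint from $\partial M$; set $\chi=\sum_j\chi_j$. The heat traces split as $\Tr(\chi e^{-t\Delta^\varphi})+\Tr((1-\chi)e^{-t\Delta^\varphi})$ (and likewise for $\Delta^\varphi_{\partial M_\epsilon}$), and by the standard locality/finite-propagation-speed estimates for the heat kernel the difference between using $\Delta^\varphi$ and $\Delta^\varphi_{\partial M_\epsilon}$ in the $\chi$-localized pieces is $O(t^\infty)$ (the Dirichlet condition on $\partial M_\epsilon$ is supported away from $\supp\chi$), and similarly each $\chi_j$-piece only sees $\Delta^\varphi$ near $P^j$. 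So it suffices to analyze $\Tr(\chi_j e^{-t\Delta^\varphi})$ and $\Tr((1-\chi)e^{-t(\cdot)})$ separately.

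For the $\chi_j$-localized piece I would argue precisely as in Lemma \ref{KeyLemma}: in the polar-type coordinates $(r,\theta)=((\beta_j+1)^{-1}|x|^{\beta_j+1},\arg x)$ the operator $\Delta^\varphi$ is, up to the smooth positive conformal factor, an elliptic cone differential operator with stationary domain (the Friederichs domain), so the main result of \cite{GKM} yields a short-time expansion of the form $c_{-1}t^{-1}+c_{-1/2}t^{-1/2}+c_0+c_{01}\log t+\sum_{k\geq1}\sum_{\ell=0}^{m_k}c_{k\ell}t^{k/2}\log^\ell t$; the coefficient $c_{01}$ of $t^0\log t$ vanishes by the same computation as in Lemma \ref{KeyLemma} — invoking \cite[Thm.~5.2 and Thm.~7.1]{BS2} (or \cite{Troyanov Polar Coordinates}) one identifies that coefficient with $\tfrac14\Res\zeta(-1)$ for $\zeta$ the spectral zeta function of $(\mathcal A(0)+1/4)^{1/2}=(-\partial_\theta^2)^{1/2}$ on $\Bbb R/2\pi(\beta_j+1)\Bbb Z$, and since $\zeta(s)=2(2\beta_j+2)^s\zeta_R(s)$ is regular at $s=-1$ this residue is $0$. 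For the $(1-\chi)$-localized piece the metric is smooth on the support, so the expansion is the classical one $\sum_{j\geq-2}C_j t^{j/2}$ with no logarithms by \cite{Gilkey,Seeley}; when $\partial M\neq\varnothing$ (resp.\ for $\Delta^\varphi_{\partial M_\epsilon}$) the half-integer powers coming from the boundary $\partial M$ (resp.\ $\partial M_\epsilon$) are also part of this classical expansion. Adding the pieces gives the asserted form \eqref{hta}, with the only possible $t^0\log t$ term already shown to be absent.

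For the flat case, if $m_\varphi=|x|^{2\beta_j}|dx|^2$ near each $P^j$ then the $\chi_j$-localized heat operator is (after the exact scaling $x\mapsto\epsilon^{1/(\beta_j+1)}x$, which is now an honest homogeneity of $\Delta^\varphi$, cf.\ Remark \ref{CC} and \cite[Sec.~3]{BS2}) the model flat-cone heat operator; its trace against $\chi_j$ has, by \cite{BS2,Cheeger} or by the explicit Bessel-function spectral analysis, an expansion in pure powers $t^{(j-2)/2}$ with no logarithmic terms — equivalently, all the $c_{k\ell}$ with $\ell\geq1$ in the $\cite{GKM}$-expansion vanish because the relevant $\Res$'s of the one-dimensional zeta functions on the link circle are zero (these are again values $2(2\beta_j+2)^s\zeta_R(s)$ and its shifts, regular at the points in question). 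Combined with the logarithm-free classical expansion of the $(1-\chi)$-piece, this yields \eqref{hta} with $m_k=0$ for all $k\geq1$.

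The main obstacle I anticipate is bookkeeping the coefficients in the $\cite{GKM}$ expansion carefully enough to be sure that \emph{no} logarithm arises at any order $t^{k/2}$ in the flat case — not just at $t^0$ — i.e.\ checking that every $t^{k/2}\log t$ coefficient is governed by a residue of a shifted copy of $(2\beta_j+2)^s\zeta_R(s)$ (or a finite sum of such) at a point where $\zeta_R$ is regular. The non-flat case is easier: there one only needs the single vanishing statement $c_{01}=0$, exactly as already carried out in Lemma \ref{KeyLemma}, and the remaining logarithmic terms at positive powers of $t$ are simply allowed by \eqref{hta}.
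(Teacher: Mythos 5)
Your proposal is correct and follows essentially the same route as the paper: localize with cutoffs at each conical point, apply the trace expansion of~\cite{GKM} for cone operators with stationary (Friederichs) domain there, kill the $t^0\log t$ coefficient via the residue computation from~\cite{BS2} exactly as in Lemma~\ref{KeyLemma}, and use the classical smooth expansion away from the singularities (the paper handles $\Delta^\varphi_{\partial M_\epsilon}$ by repeating the argument verbatim rather than by your $O(t^\infty)$ locality remark, but both work). The bookkeeping you worry about in the flat case is avoided exactly as you suggest: one chooses the local holomorphic parameter so that $\phi_j\equiv 0$ near $P^j$, so the localized operator is the exact model cone and~\cite{GKM} (equivalently~\cite{BS2,Cheeger}) gives a logarithm-free expansion directly.
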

\begin{remark}\label{a_0}  As is known,  the constant term $a_0$ in the short time asymptotic expansion~\eqref{hta}  of the heat trace $\Tr( e^{-t \Delta^\varphi})$ is related to the value of the spectral zeta function of $\Delta^\varphi$ at zero by  $a_0=\zeta(0)+\dim\ker\Delta^\varphi$. As a consequence of Corollary~\ref{z0},  we thus obtain 
$$
a_0=\frac {\chi(M,\pmb{\beta})} 6-\frac 1 {12}\sum_{j=1}^n\left(\beta_j+1-\frac{1}{\beta_j+1}\right).
$$
\end{remark}
\begin{proof}[Proof of Lemma~\ref{hTr}] 
Introduce the local polar coordinates 
$$(r,\theta)=\bigl((\beta_j+1)^{-1}|x|^{\beta_j+1},\arg x\bigr)$$ centred at a conical singularity $P^j\in\supp\pmb\beta$; here $x(P^j)=0$ and  $x$  is a local holomorphic parameter  from Definition~\ref{D A}. In these coordinates  the Laplacian $ \Delta^\varphi$ takes the form 
$$
 \Delta^\varphi=-e^{\phi_j(r, \theta)}r^{-2}\Bigl((r\partial_r)^2+(\beta_j+1)^{-2}\partial_\theta^2\Bigr)
 $$
and thus falls into the class of elliptic cone operators with stationary domains studied in~\cite{GKM}: 1) The potential  $(r,\theta)\mapsto\phi_j(r,\theta)=\phi_j(x)$ is smooth up to $r=0$; 2) The domain of $\Delta^\varphi$ is stationary because $\Delta^\varphi$ is the Friederichs selfadjoint extension in $L^2(M,m_\varphi)$.
  
  Let   $\chi_j(r,\theta)=\chi_j(r)$ with a cutoff function $\chi_j\in C_c^\infty\bigl([0,\frac {\epsilon^{\beta_j+1}} {\beta_j+1})\bigr)$  that equals $1$ in a neighborhood of $r=0$; we extend $\chi_j$ from $D_\epsilon^j$ to $M$ by zero. Then a direct application of~\cite[Theorem 1.1]{GKM} implies that the heat trace $\Tr \bigl(\chi_j e^{-t\Delta^\varphi}\bigr)$ has a short time asymptotic expansion of the form~\eqref{exp14}. Moreover, if the potential $\phi_j$ does not depend on $r$ for all sufficiently small values of $r$, then in the expansion~\eqref{exp14} we have $m_k=0$ for all values of $k$ (note that  for a flat near $P^j$ metric $m_\varphi$  we can always achieve $\phi_j(r, \theta)=0$ by taking a suitable local holomorphic parameter $x$, e.g.~\cite[Lemma 3.4]{Troyanov Polar Coordinates}). 
This together with the standard well know expansion $$\Tr \bigl((1-\sum_j\chi_j) e^{-t\Delta^\varphi}\bigr)\sim\sum_{k\geq -2} C_k t^{k/2}\quad\text{ as } t\to 0+$$ implies~\eqref{hta} with extra term $a_0^1\log t$, cf.~\cite{Cheeger}.  Relying on~\cite{BS2,Troyanov Polar Coordinates} and using the same argument as in the proof of Lemma~\ref{KeyLemma}, one can verify that $a_0^1=0$, we omit the details. This  proof can also be repeated verbatim with $\Delta^\varphi$ replaced by $\Delta^\varphi_{\partial M_\epsilon}$.
\end{proof}

For $\lambda>0$ the operator  $\Delta^\varphi+\lambda$ is positive and hence $e^{-t\lambda}\Tr \bigl(e^{-t\Delta^\varphi}\bigr)=O(e^{-t\lambda})$  as $t\to +\infty$. Based on this  and~Lemma~\ref{hTr} we conclude that  the spectral zeta function
\begin{equation}\label{MK13:17}
\zeta(s,\lambda)=\Tr(\Delta^\varphi+\lambda)^{-s}=\frac 1 {\Gamma(s)}\int_0^\infty t^{s-1} e^{-t\lambda}\Tr \bigl(e^{-t\Delta^\varphi}\bigr)\, dt
\end{equation}
is holomorphic in $s$ for $\Re s>1$ and admits an analytic continuation to $s=0$ given by the right hand side of~\eqref{MK13:17}.
Therefore we can set $\det (\Delta^\varphi+\lambda)=e^{-\partial_s\zeta(0,\lambda)}$.
Similarly we define 
$\det (\Delta^\varphi_{\partial M_\epsilon}+\lambda)$.

Now we are in position to introduce the Neumann jump operator on $\partial M_\epsilon\setminus\partial M$.  For $\lambda>0$  and any $f\in C^\infty(\partial M_\epsilon\setminus\partial M)$ there exists a unique solution to the Dirichlet problem 
\begin{equation}\label{BVP}
(\Delta^\varphi+\lambda)u(\lambda)=0 \text{ on } M\setminus\partial M_\epsilon, \quad u(\lambda)=f \text { on } \partial M_\epsilon\setminus\partial M,\quad u(\lambda)=0 \text{ on } \partial M, 
\end{equation}
such that 
 $$
 u(\lambda)=\hat f -(\Delta^\varphi_{\partial M_\epsilon}+\lambda)^{-1}(\Delta^\varphi+\lambda)\hat f,
 $$
 where $\hat f\in C_0^\infty (M\setminus\supp\pmb\beta)$ is an extension of $f$. Introduce the Neumann jump operator $R^\varphi_\epsilon(\lambda): C^\infty(\partial M_\epsilon\setminus\partial M)\to C^\infty(\partial M_\epsilon\setminus\partial M)$ that acts by the formula
 $$
 R^\varphi_\epsilon(\lambda) f=\partial_{\vec n} (u(\lambda)|_{M\setminus M_\epsilon})-\partial_{\vec n} (u(\lambda)|_{M_\epsilon}),
 $$
 where $\vec n$ is the outward (for $M_\epsilon$) unit normal to  $\partial M_\epsilon\setminus\partial M$ with respect to $m_\varphi$;  note that there are no conical singularities of $m_\varphi$ on $\partial M_\epsilon$ since  $\epsilon>0$ is sufficiently small.  The operator $R^\varphi_\epsilon(\lambda)$ is an invertible first order  elliptic  classical pseudodifferential operator on $\partial M_\epsilon\setminus\partial M$. In particular,   on each component $\partial D_\epsilon^j$ of $\partial M_\epsilon\setminus\partial M$ the principal symbol of $R^\varphi_\epsilon(\lambda)$ is given by   $\sigma(x,\xi)=2\epsilon^{\beta_j}e^{\phi_j(x)}|\xi|$, which can be easily seen from the representation
 \begin{equation}\label{Aug16Fri}
 R^\varphi_\epsilon(\lambda)^{-1}=\Bigl((\Delta^\varphi+\lambda)^{-1}(\cdot\otimes\delta_{\partial M_\epsilon\setminus\partial M})\Bigr)|_{\partial M_\epsilon\setminus\partial M}, 
 \end{equation}
 where  $\delta_{\partial M_\epsilon\setminus\partial M}$ is the Dirac $\delta$-function along ${\partial M_\epsilon\setminus\partial M}$, the action of the resolvent is understood in the sense of distributions, and $\Delta^\varphi=-|x|^{2\beta_j}e^{-2\phi_j(x)}4\partial_x\partial_{\bar x}$ in the local parameter $x$ centred at $P^j$; cf.~\cite[Thm 2.1]{Carron} and~\cite[Sec. 4.4]{BFK}. As a consequence, for $s\in\Bbb C$, $\Re s>1$, the operator $R^\varphi_\epsilon(\lambda)^{-s}$ in $L^2(\partial M_\epsilon\setminus\partial M)$  is  trace class  and its zeta function $s\mapsto \zeta(s,\lambda)=\Tr R^\varphi_\epsilon(\lambda)^{-s}$ is holomorphic. Moreover,  $s\mapsto \zeta(s,\lambda)$ admits a meromorphic continuation from the half-plane $\Re s>1$ to $\Bbb C$ with no pole at $s=0$; see e.g.~\cite{Shubin}.  We set
 $$
 \det   R^\varphi_\epsilon(\lambda)=e^{-\partial_s \zeta (0,\lambda)}. 
 $$
\begin{lemma} \label{ThmA}The formula
$$\det(\Delta^\varphi+\lambda)=C\det(\Delta^\varphi_{\partial M_\epsilon}+\lambda) \det   R^\varphi_\epsilon(\lambda)$$
is valid, where $C$ is independent of $\lambda>0$. 
\end{lemma}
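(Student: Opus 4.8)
The plan is to follow the classical Burghelea--Friedlander--Kappeler strategy, adapted to the conical setting exactly as in \cite{BFK} (for $\partial M=\varnothing$) and \cite{LeeBFKboundary} (for $\partial M\neq\varnothing$). The key algebraic identity is the factorization of resolvents: for $\lambda>0$ one writes the resolvent $(\Delta^\varphi+\lambda)^{-1}$ in block form relative to the decomposition $M=M_\epsilon\cup(\bigcup_j D^j_\epsilon)$, obtaining
$$
(\Delta^\varphi+\lambda)^{-1}=(\Delta^\varphi_{\partial M_\epsilon}+\lambda)^{-1}+\mathcal P(\lambda) R^\varphi_\epsilon(\lambda)^{-1}\mathcal P(\lambda)^*,
$$
where $\mathcal P(\lambda)$ is the Poisson operator solving the boundary value problem \eqref{BVP} and $R^\varphi_\epsilon(\lambda)$ is the Neumann jump operator; this is essentially the content of \eqref{Aug16Fri}. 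Differentiating $\log\det(\Delta^\varphi+\lambda)$ in $\lambda$ gives $\partial_\lambda\log\det(\Delta^\varphi+\lambda)=\Tr(\Delta^\varphi+\lambda)^{-1}$ (interpreted via the analytic continuation of the zeta function), and likewise for the Dirichlet piece; the resolvent identity then converts the difference of traces into $\Tr\bigl(R^\varphi_\epsilon(\lambda)^{-1}\partial_\lambda R^\varphi_\epsilon(\lambda)\bigr)=\partial_\lambda\log\det R^\varphi_\epsilon(\lambda)$. Integrating in $\lambda$ yields the claimed identity up to a $\lambda$-independent constant $C$, provided all three $\lambda$-derivatives of log-determinants are genuinely given by the corresponding (regularized) traces.

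The first step I would carry out is to justify the variational formulas $\partial_\lambda\log\det(\Delta^\varphi+\lambda)=\zeta(1,\lambda)$ and its analogues: for $\Re s>1$ one has $\partial_\lambda\zeta(s,\lambda)=-s\,\zeta(s+1,\lambda)$ from \eqref{MK13:17}, and then $\partial_\lambda\partial_s\zeta(s,\lambda)|_{s=0}=\zeta(1,\lambda)=\Tr(\Delta^\varphi+\lambda)^{-1}$; here one needs the analytic continuation of $\zeta(s,\lambda)$ through $s=0$ with \emph{no} pole, which is exactly what Lemma \ref{hTr} provides (the absence of the $\log t$ term in the constant coefficient of the heat trace is the point that must be invoked). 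The second step is the functional-analytic comparison: using \eqref{Aug16Fri} and the trace-class mapping properties of the Poisson operator together with the resolvent identity, one shows
$$
\Tr(\Delta^\varphi+\lambda)^{-1}-\Tr(\Delta^\varphi_{\partial M_\epsilon}+\lambda)^{-1}=\partial_\lambda\log\det R^\varphi_\epsilon(\lambda),
$$
where the right side is legitimate because $R^\varphi_\epsilon(\lambda)$ is an invertible elliptic first-order $\Psi$DO on the closed curve $\partial M_\epsilon\setminus\partial M$, whose zeta function is regular at $s=0$ by standard \cite{Shubin} theory. Integrating from a fixed $\lambda_0$ to $\lambda$ produces the stated formula with $C=C(\epsilon)$ independent of $\lambda$.

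The main obstacle is the regularity of the heat trace at the constant term in the presence of conical singularities: unlike the smooth or flat-cone cases, the short-time expansion \eqref{hta} a priori contains a $t^0\log t$ term, which would obstruct the standard manipulation $\partial_\lambda\partial_s\zeta(0,\lambda)=\zeta(1,\lambda)$ and hence the definition of $\det$. This is precisely why Lemma \ref{hTr} is stated and proved beforehand, and the argument there—reducing to \cite[Thm.~5.2 and 7.1]{BS2}, identifying the coefficient of $t^0\log t$ with $\tfrac14\Res\zeta_{\mathcal A}(-1)$, and computing $\zeta_{\mathcal A}(s)=2(2\beta+2)^s\zeta_R(s)$ so that $\Res\zeta_{\mathcal A}(-1)=0$—is exactly the modification alluded to in the paragraph preceding Lemma \ref{hTr}. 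A secondary, more routine, obstacle is verifying that the Poisson operator $\mathcal P(\lambda):L^2(\partial M_\epsilon\setminus\partial M)\to L^2(M,m_\varphi)$ and its adjoint interact with $(\Delta^\varphi+\lambda)^{-s}$ so that all the trace manipulations are on trace-class operators; this follows from the pseudodifferential calculus away from the singularities (the curve $\partial M_\epsilon$ avoids $\supp\pmb\beta$) together with the fact that $\Delta^\varphi$ is the Friederichs extension, so no subtlety from the cone points enters the boundary layer. Once these points are in place, the integration in $\lambda$ and the extraction of the $\lambda$-independent constant are immediate.
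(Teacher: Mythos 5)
Your overall strategy is the correct BFK one and coincides with the paper's: you correctly identify the resolvent factorization behind \eqref{Aug16Fri}, the identity $\partial_\lambda\log\det R^\varphi_\epsilon(\lambda)=\Tr\bigl(\bigl[\partial_\lambda R^\varphi_\epsilon(\lambda)\bigr]R^\varphi_\epsilon(\lambda)^{-1}\bigr)$ coming from the pseudodifferential structure of the Neumann jump operator on a curve avoiding $\supp\pmb\beta$, the conversion of that trace into $\Tr\bigl((\Delta^\varphi_{\partial M_\epsilon}+\lambda)^{-1}-(\Delta^\varphi+\lambda)^{-1}\bigr)$ via Schwartz kernels, and the role of Lemma~\ref{hTr} (absence of the $t^0\log t$ term) in making $\zeta(s,\lambda)$ regular at $s=0$ so that the determinants are defined at all.

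There is, however, a genuine gap in your first step. You propose to establish $\partial_\lambda\log\det(\Delta^\varphi+\lambda)=\zeta(1,\lambda)=\Tr(\Delta^\varphi+\lambda)^{-1}$ for each operator separately and then subtract. On a two-dimensional surface this identity is false as written: the heat trace has a nonzero $a_{-1}t^{-1}$ term with $a_{-1}=A_\varphi/4\pi$, so $\zeta(s,\lambda)$ has a simple pole at $s=1$ and $(\Delta^\varphi+\lambda)^{-1}$ is not trace class (Weyl's law gives $\lambda_k\sim ck$, so $\sum\lambda_k^{-1}$ diverges). The correct individual statement is $\partial_\lambda\log\det(\Delta^\varphi+\lambda)=\mathrm{FP}_{s=1}\zeta(s,\lambda)$, a finite part, not a trace; one must then argue separately that the difference of the two finite parts equals the trace of the (genuinely trace-class) difference of resolvents. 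The paper avoids this issue entirely by working with the difference $\log\det(\Delta^\varphi+\lambda)-\log\det(\Delta^\varphi_{\partial M_\epsilon}+\lambda)$ from the outset: it invokes Krein's theorem to write $\Tr\bigl((\Delta^\varphi_{\partial M_\epsilon}+\lambda)^{-s}-(\Delta^\varphi+\lambda)^{-s}\bigr)=\int_0^\infty\xi(\mu)\,s\,(\mu+\lambda)^{-s-1}d\mu$ with the spectral shift function $\xi=N_{\Delta^\varphi}-N_{\Delta^\varphi_{\partial M_\epsilon}}$, and then differentiates in $s$ at $s=0$ and in $\lambda$ to obtain \eqref{AuG16:04} directly. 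Your route is repairable — the pole residues at $s=1$ cancel because both operators live on the same surface and share the same Weyl coefficient — but as stated the step rests on an identity between divergent quantities, and you would need to supply either the finite-part bookkeeping or the spectral-shift argument to close it.
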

\begin{proof} 
The assertion is an analogue of~\cite[Theorem A]{BFK}.

The relation~\eqref{Aug16Fri} also implies that $\lambda\mapsto R^\varphi_\epsilon(\lambda)$ is an analytic family of pseudodifferential operators and  the order of   $\partial_\lambda^\ell R^\varphi_\epsilon(\lambda)^{-1}$ is $-1-2\ell$. Thus the order of $\partial_\lambda^\ell R^\varphi_\epsilon(\lambda)^{-1}$ is $1-2\ell$ and   $\bigl[\partial_\lambda R^\varphi_\epsilon(\lambda)\bigr]R^\varphi_\epsilon(\lambda)^{-1}$ is a trace class operator in $L^2(\partial M_\epsilon\setminus\partial M)$. As a consequence we have
\begin{equation}\label{aug16:02}
\partial_\lambda \log \det R^\varphi_\epsilon(\lambda)=\Tr\left(\bigl[\partial_\lambda R^\varphi_\epsilon(\lambda)\bigr]R^\varphi_\epsilon(\lambda)^{-1}\right),
\end{equation}
see~\cite[Prop. 1.1]{Forman}. 
 By writing the Schwartz kernel of  $\bigl(\partial_\lambda R^\varphi_\epsilon(\lambda)\bigr)R^\varphi_\epsilon(\lambda)^{-1}$  in terms of those of $(\Delta^\varphi +\lambda)^{-1}$ and $(\Delta^\varphi_{\partial M_\epsilon} +\lambda)^{-1}$ it is not hard to verify that
\begin{equation}\label{Aug16:03}
\Tr\left(\bigl[\partial_\lambda R^\varphi_\epsilon(\lambda)\bigr]R^\varphi_\epsilon(\lambda)^{-1}\right)=\Tr\left((\Delta^\varphi_{\partial M_\epsilon} +\lambda)^{-1}-(\Delta^\varphi +\lambda)^{-1}\right);
\end{equation}
the corresponding calculation can be found in~\cite[Proof of Thm 2.2]{Carron}, we omit the details.

It remains to show that 
\begin{equation}\label{AuG16:04}
\partial_\lambda \left[\log\det (\Delta^\varphi +\lambda)-\log\det (\Delta^\varphi_{ \partial M_\epsilon} +\lambda)\right]=\Tr\left((\Delta^\varphi_{\partial M_\epsilon} +\lambda)^{-1}-(\Delta^\varphi +\lambda)^{-1}\right);
\end{equation}
here we closely follow~\cite[Proof of Thm 4.2]{Carron}.
By the Krein theorem  (see e.g.~\cite[Ch. 8.9]{Yafaev}) there exists a spectral shift function $\xi\in L^1(\Bbb R_+,(1+\mu)^{-2}\,d\mu)$ such that 
$$
\Tr\left((\Delta^\varphi_{\partial M_\epsilon} +\lambda)^{-s}-(\Delta^\varphi +\lambda)^{-s}\right)=\int_0^\infty \xi(\mu)\frac{s \,d\mu}{(\mu+\lambda)^{s+1}},
$$
where $\Re s>1$ or $s=1$. 
In fact, in our case the spectrum of selfadjoint operator $\Delta^\varphi$ (resp. $ \Delta^\varphi_{ \partial M_\epsilon}$ ) in $L^2(M,m_\varphi)$ consists of  isolated eigenvalues $0\leq \lambda_1\leq \lambda_2\leq\lambda_3\leq \cdots$  and hence $\xi(\mu)=N_{\Delta^\varphi}(\mu)-N_{\Delta^\varphi_{ \partial M_\epsilon}}(\mu)$, where  $N_{\Delta^\varphi}(\mu)=\#\{k: \lambda_k<\mu, \lambda_k\in\sigma(\Delta^\varphi)\}$ is the spectral counting function of $\Delta^\varphi$ and $N_{\Delta^\varphi_{ \partial M_\epsilon}}(\mu)$ is the spectral counting function of $\Delta^\varphi_{ \partial M_\epsilon}$.  We have
$$
\partial_\lambda \left[\log\det (\Delta^\varphi +\lambda)-\log\det (\Delta^\varphi_{ \partial M_\epsilon} +\lambda)\right]=-\partial_\lambda \left[\partial_s\int_0^\infty \xi(\mu)\frac{s \,d\mu}{(\mu+\lambda)^{s+1}}\right]_{s=0}
$$
$$
=\int_0^\infty \xi(\mu)\frac{\,d\mu}{(\mu+\lambda)^{2}}=\Tr\left((\Delta^\varphi_{\partial M_\epsilon} +\lambda)^{-1}-(\Delta^\varphi +\lambda)^{-1}\right).
$$
This proves~\eqref{AuG16:04}.  Now the assertion of lemma follows from~\eqref{aug16:02},~\eqref{Aug16:03}, and~\eqref{AuG16:04}.
 \end{proof}

In the same way as before we define the Neumann jump operator $R^\varphi_\epsilon(\lambda)$ for $\lambda=0$ and denote $R^\varphi_\epsilon=R^\varphi_\epsilon(0)$. If $\partial M\neq \varnothing$, then the operators $\Delta^\varphi$ and $R^\varphi_\epsilon$ are still invertible and we define $\det \Delta^\varphi$ and $\det R^\varphi_\epsilon$ by setting $\lambda=0$ in the definitions for  $\det(\Delta^\varphi+\lambda)$ and $\det R^\varphi_\epsilon(\lambda)$. If $\partial M=\varnothing$, then  both  $\Delta^\varphi$ and $R^\varphi_\epsilon$ have zero as a simple eigenvalue (and the corresponding kernels consist of constant functions on $M$ and $\partial M_\epsilon$ respectively). In this case we introduce the modified determinant (i.e. with zero eigenvalue excluded). Namely, we set 
$\det \Delta^\varphi=e^{-\partial_s\zeta^*(0)}$, where for $\zeta^*(s)$ we may write
$$
\zeta^*(s)=\sum_{k:0<\lambda_k\in\sigma(\Delta^\varphi)}\lambda_k^{-s}=\frac 1 {\Gamma(s)}\int_0^\infty t^{s-1}\Bigl( \Tr \bigl(e^{-t\Delta^\varphi}\bigr)-1\Bigr)\, dt;
$$
similarly,  $\det  R^\varphi_\epsilon$ is defined via $\zeta^*(s)=\Tr (R^\varphi_\epsilon P^\bot)^{-s}$, where $ P^\bot$ is the orthogonal projection onto $(\ker R^\varphi_\epsilon)^\bot$ in  $L^2(\partial M_\epsilon, m_\varphi)$. 
 \begin{proposition}[BFK formulas]\label{BFKf}
1. If $\partial M=\varnothing$, then 
 \begin{equation}\label{Fr08:50}
\det  \Delta^\varphi =A_\varphi \det \Delta^\varphi_{\partial M_\epsilon} \frac {\det R^\varphi_\epsilon}{L_\varphi(\partial M_\epsilon)}, 
\end{equation}
where $A_\varphi$ is the total area of $M$ and $L_\varphi(\partial M_\epsilon)$ is the length of the boundary $\partial M_\epsilon$ in the metric $m_\varphi$.

2.  If $\partial M\neq\varnothing$ and $\partial M\cap\supp\pmb\beta=\varnothing$, then 
 \begin{equation}\label{Fr08:41}
\det \Delta^\varphi = \det \Delta^\varphi_{\partial M_\epsilon}     {\det R^\varphi_\epsilon}. 
\end{equation}
\end{proposition}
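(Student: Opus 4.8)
The plan is to follow the Burghelea--Friedlander--Kappeler scheme of~\cite{BFK} when $\partial M=\varnothing$ and its boundary counterpart~\cite{LeeBFKboundary} when $\partial M\neq\varnothing$. Lemma~\ref{ThmA} already furnishes the identity $\det(\Delta^\varphi+\lambda)=C\det(\Delta^\varphi_{\partial M_\epsilon}+\lambda)\det R^\varphi_\epsilon(\lambda)$ with $C$ independent of $\lambda>0$, so what remains is to pass to the limit $\lambda\to0+$ and to identify $C$ by examining $\lambda\to+\infty$.

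First I would treat $\lambda\to0+$. If $\partial M\neq\varnothing$, all three operators $\Delta^\varphi$, $\Delta^\varphi_{\partial M_\epsilon}$, $R^\varphi_\epsilon(0)$ are invertible and the limit is immediate, giving $\det\Delta^\varphi=C\det\Delta^\varphi_{\partial M_\epsilon}\det R^\varphi_\epsilon$. If $\partial M=\varnothing$, then $\Delta^\varphi$ and $R^\varphi_\epsilon(0)$ each have a simple zero eigenvalue (constants on $M$, respectively on $\partial M_\epsilon$). Splitting off the zero mode, $\det(\Delta^\varphi+\lambda)=\lambda\det{}^*(\Delta^\varphi+\lambda)\sim\lambda\det\Delta^\varphi$ as $\lambda\to0+$. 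For the Neumann jump operator, \eqref{Aug16Fri} shows that $\lambda\mapsto R^\varphi_\epsilon(\lambda)$ is analytic, and solving~\eqref{BVP} with boundary datum $f\equiv1$ and applying Green's formula on $M_\epsilon$ and on each $D^j_\epsilon$ yields $|\langle R^\varphi_\epsilon(\lambda)\mathbf 1,\mathbf 1\rangle|=\lambda\int_M|u(\lambda)|^2\,dA_\varphi+O(\lambda^2)$ with $u(\lambda)\to\mathbf 1$; hence by first-order perturbation theory the simple small eigenvalue of $R^\varphi_\epsilon(\lambda)$ has magnitude $\lambda A_\varphi/L_\varphi(\partial M_\epsilon)+O(\lambda^2)$, so that $\det R^\varphi_\epsilon(\lambda)\sim\lambda\,\bigl(A_\varphi/L_\varphi(\partial M_\epsilon)\bigr)\det R^\varphi_\epsilon$. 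Inserting these asymptotics into the identity of Lemma~\ref{ThmA} and cancelling the common factor $\lambda$ gives $\det\Delta^\varphi=C\,A_\varphi\,\det\Delta^\varphi_{\partial M_\epsilon}\,\det R^\varphi_\epsilon/L_\varphi(\partial M_\epsilon)$. Thus both~\eqref{Fr08:50} and~\eqref{Fr08:41} follow once $C=1$ is established.

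To show $C=1$ I would compare the $\lambda\to+\infty$ behaviour of the two sides. By~\eqref{MK13:17} and Lemma~\ref{hTr}, $\log\det(\Delta^\varphi+\lambda)$ and $\log\det(\Delta^\varphi_{\partial M_\epsilon}+\lambda)$ have large-$\lambda$ expansions whose divergent part (terms $\lambda\log\lambda$, $\lambda$, $\sqrt\lambda$, $\log\lambda$) and constant term depend only on the heat coefficients $a_{-1}$, $a_{-1/2}$, $a_0$; the single point where the present situation differs from~\cite{BFK,LeeBFKboundary} is the presence of the logarithmic heat coefficients $a_{k\ell}$, $k\geq1$, but these contribute only $O(\lambda^{-k/2}\log^\ell\lambda)=o(1)$ and are harmless, while $a_0$ carries no logarithm by Lemma~\ref{hTr}. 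On the other side, $R^\varphi_\epsilon(\lambda)$ is an invertible elliptic classical pseudodifferential operator of order $1$ on the smooth closed curve $\partial M_\epsilon$, analytic in $\lambda$ with symbol expandable in powers of $\sqrt\lambda$, so $\log\det R^\varphi_\epsilon(\lambda)$ admits a large-$\lambda$ expansion of the same shape. Since Lemma~\ref{ThmA} forces $\log\det(\Delta^\varphi+\lambda)-\log\det(\Delta^\varphi_{\partial M_\epsilon}+\lambda)-\log\det R^\varphi_\epsilon(\lambda)\equiv\log C$, all divergent terms must cancel---which they do, because the total area is unchanged by the decomposition, the contribution of the new interior boundary $\partial M_\epsilon$ to $\Delta^\varphi_{\partial M_\epsilon}$ is matched by its contribution to $R^\varphi_\epsilon(\lambda)$, and the conical and Euler-characteristic contributions are untouched---and the leftover $\lambda$-independent term is computed to be $\log C=0$ exactly as in~\cite{BFK,LeeBFKboundary}.

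The hard part will be this last step: keeping careful track of the constant term in the $\lambda\to+\infty$ asymptotics of all three determinants and checking that the logarithmic terms of Lemma~\ref{hTr} genuinely do not contaminate it (they do not, for the reason above), together with a modicum of care about the sign convention for $R^\varphi_\epsilon(\lambda)$ and the precise meaning of its modified determinant. Everything else is a verbatim transcription of~\cite{BFK} (for $\partial M=\varnothing$) and~\cite{LeeBFKboundary} (for $\partial M\neq\varnothing$).
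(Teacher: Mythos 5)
Your plan follows the paper's proof essentially verbatim: Lemma~\ref{ThmA} together with Friedlander--Voros type large-$\lambda$ expansions (the paper's Lemma~\ref{FrVo}, whose whole content is precisely your observation that the logarithmic heat coefficients contribute only $o(1)$ and no constant term, while $a_0$ carries no $\log t$) yields $C=1$, and the $\lambda\to0+$ limit with the small eigenvalue $\mu_0(\lambda)\sim\lambda A_\varphi/L_\varphi(\partial M_\epsilon)$ of $R^\varphi_\epsilon(\lambda)$ produces the area/length factor. The only cosmetic difference is that you extract that small eigenvalue via Green's identity and first-order perturbation theory, whereas the paper reads it off from the rank-one singular part of $R^\varphi_\epsilon(\lambda)^{-1}$ inherited from the pole of $(\Delta^\varphi+\lambda)^{-1}$ at $\lambda=0$; both give the same asymptotics.
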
  
The proof is preceded by

 \begin{lemma}[After L. Friedlander \& A. Voros] \label{FrVo} The functions $\lambda\mapsto \log\det ( \Delta^\varphi+\lambda)$ and 
 $
\lambda\mapsto  \log\det ( \Delta^\varphi_{\partial M_\epsilon}+\lambda)
 $
 admit  asymptotic expansions with zero constant terms as $\lambda \to +\infty$. 
  \end{lemma}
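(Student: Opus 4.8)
The plan is to combine the Mellin‑transform representation of the zeta‑regularized determinant with the short time heat trace expansions of Lemma~\ref{hTr}, and then read off the $\lambda\to+\infty$ asymptotics term by term; this is the classical argument of Friedlander and Voros. Throughout write $\mathcal H(t)=\Tr(e^{-t\Delta^\varphi})$, the case of $\Delta^\varphi_{\partial M_\epsilon}$ being identical, so that for $\lambda>0$
$$\log\det(\Delta^\varphi+\lambda)=-\partial_s\Bigl[\tfrac1{\Gamma(s)}\int_0^\infty t^{s-1}e^{-t\lambda}\mathcal H(t)\,dt\Bigr]_{s=0}.$$

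First I would localize in $t$. For $t\geq1$ one has $\mathcal H(t)=O(1)$, so $\int_1^\infty t^{s-1}e^{-t\lambda}\mathcal H(t)\,dt$ is entire in $s$ and, together with all its $s$‑derivatives, is exponentially small in $\lambda$ uniformly for $s$ near $0$; after division by $\Gamma(s)$ and differentiation in $s$ at $s=0$ this contributes an exponentially small term, in particular no constant term. For $0<t\leq1$ I would insert the expansion of Lemma~\ref{hTr} truncated at order $t^{N/2}$: $\mathcal H(t)=\sum_{\alpha,\ell}a_{\alpha\ell}t^{\alpha}\log^\ell t+\rho_N(t)$ with $\rho_N(t)=O(t^{N/2}\log^{m_N}(1/t))$, where the exponents $\alpha$ run over $\{-1,-\tfrac12,0\}\cup\{k/2:1\leq k<N\}$ and — crucially — there is no $t^0\log t$ term. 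Using $\int_0^\infty t^{s+\beta-1}e^{-t\lambda}\,dt=\Gamma(s+\beta)\lambda^{-s-\beta}$ and differentiating $\ell$ times in $\beta$, each monomial contributes $a_{\alpha\ell}\,\partial_\beta^\ell[\Gamma(s+\beta)\lambda^{-s-\beta}]_{\beta=\alpha}$, while a Watson‑type estimate shows that the remainder contributes a function holomorphic in $s$ for $\Re s>-N/2$ whose value and $s$‑derivative at $s=0$ are $O(\lambda^{-N/2}(\log\lambda)^{m_N})$; after division by $\Gamma(s)$ and differentiation at $s=0$ this is negligible once $N$ is taken large.

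It then remains to evaluate $-\partial_s\bigl[\tfrac1{\Gamma(s)}\,\partial_\beta^\ell(\Gamma(s+\beta)\lambda^{-s-\beta})_{\beta=\alpha}\bigr]_{s=0}$, using $1/\Gamma(s)=s+\gamma s^2+O(s^3)$. When $\alpha=-1$ the ratio $\Gamma(s-1)/\Gamma(s)=(s-1)^{-1}$ is finite at $s=0$ and produces the terms $\propto\lambda\log\lambda$ and $\propto\lambda$; when $\alpha=-\tfrac12$ one gets $\propto\sqrt\lambda$ (times powers of $\log\lambda$); when $\alpha=0$ only $\ell=0$ occurs and $\tfrac1{\Gamma(s)}\Gamma(s)\lambda^{-s}=\lambda^{-s}$ gives exactly $a_0\log\lambda$; and when $\alpha>0$ the factor $\Gamma(s+\alpha)$ is regular at $s=0$ while $1/\Gamma(s)$ vanishes there, so the contribution is $\propto\lambda^{-\alpha}$ times a polynomial in $\log\lambda$, i.e. decaying. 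Hence every term of the resulting asymptotic expansion of $\log\det(\Delta^\varphi+\lambda)$ is either a positive power of $\lambda$ times a power of $\log\lambda$ (namely $\lambda\log\lambda$, $\lambda$, $\sqrt\lambda$, $\log\lambda$) or a negative power of $\lambda$ times a power of $\log\lambda$; in no case does a pure constant $\lambda^0$ term appear, which is the assertion. The same computation applies verbatim to $\Delta^\varphi_{\partial M_\epsilon}$.

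The only genuinely essential structural input — and the point at which one must be careful — is the absence of a $t^0\log t$ term in the short time heat trace expansion: were such a term present it would contribute something behaving like $\psi(s)/\Gamma(s)$ near $s=0$, which has a pole there, destroying both the regularity of the zeta function at $s=0$ and the argument above. This is exactly what Lemma~\ref{hTr} rules out (the logarithmic terms occur only at orders $t^{k/2}$ with $k\geq1$), so the remaining work is the routine bookkeeping of the Mellin transform and the remainder estimate sketched above.
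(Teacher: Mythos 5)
Your proposal is correct and follows essentially the same route as the paper: substitute the short time heat trace expansion of Lemma~\ref{hTr} into the Mellin/Laplace representation of $\zeta(s,\lambda)$, integrate term by term via $\int_0^\infty t^{s+\beta-1}e^{-t\lambda}\,dt=\Gamma(s+\beta)\lambda^{-s-\beta}$ (with $\partial_\beta$-derivatives for the logarithmic terms), and observe that after dividing by $\Gamma(s)$ and differentiating at $s=0$ no pure $\lambda^0$ term can arise, the essential input being the absence of a $t^0\log t$ term in~\eqref{hta}. The only difference is that you justify the formal substitution explicitly (splitting at $t=1$ and estimating the truncation remainder), which the paper leaves implicit; this is a welcome but not substantive refinement.
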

  \begin{proof} If   the metric $m_\varphi$ representing the divisor $\pmb\beta$ is flat in a neighborhood of $\supp\pmb \beta$, then there are no logarithms in the asymptotic expansion~\eqref{hta} and the assertion is due to Friedlander \& Voros ~\cite{Friedlander,Voros}. 
Here we adapt the Voros' argument, cf.~\cite[Prop. 2.7]{Lee BFK}.     
 
 Consider, for instance, the spectral zeta function $\zeta(s,\lambda)=\Tr(\Delta^\varphi+\lambda)^{-s}$, which is well defined for $\Re s>1$ and $\Re\lambda >0$.  Let  $\eta(s,\lambda)=\zeta(s,\lambda)\Gamma(s)$, then 
   \begin{equation}\label{eta}
  \eta(s,\lambda)=\int_0^\infty t^{s-1} e^{-\lambda t}\Tr e^{-t\Delta^\varphi}\,dt.
  \end{equation}
 For $\Re s>1$, $\eta(s,\lambda)$ can be expanded in $\lambda$ as $|\lambda|\to\infty$ by formally substituting the asymptotic expansion~\eqref{hta} into~\eqref{eta}. After the change of variable $t\mapsto  t/\lambda$ we get
  $$
   \begin{aligned}
  \eta(s,\lambda)\sim a_{-1} \lambda^{1-s} \int_0^\infty t^{s-2} e^{- t}\,dt+a_{-1/2}\lambda^{\frac 1 2-s} \int_0^\infty t^{s-\frac 3 2} e^{- t}\,dt+{a_0\lambda^{-s} \int_0^\infty t^{s-1} e^{- t}\,dt}
  \\
  +\sum_{j=1}^\infty\sum_{k=0}^{m_j}\sum_{\ell=0}^k a_{jk}\lambda^{-s-\frac j 2} \binom{k}{\ell}(-\log \lambda )^\ell\int_0^\infty t^{\frac j 2+s-1} e^{- t} \log^{k-\ell} t \,dt
  \\
  =a_{-1} \lambda^{1-s} \Gamma(s-1)+a_{-1/2}\lambda^{\frac 1 2-s} \Gamma\left (s-\frac 1 2\right)  +{a_0 \lambda^{-s}\Gamma(s)} 
  \\
  +\sum_{j=1}^\infty\sum_{k=0}^{m_j}\sum_{\ell=0}^k a_{jk}\lambda^{-s-\frac j 2} \binom{k}{\ell}(-\log \lambda)^\ell\Gamma^{(k-\ell)}\left(s+\frac j 2 \right).
  \end{aligned}
  $$
Thus
$$
 \begin{aligned}
\zeta(s,\lambda)\sim\frac {\lambda^{-s}}{\Gamma(s)}\Biggl(a_{-1}\lambda\Gamma(s-1)+a_{-1/2}\lambda^{\frac 1 2 } \Gamma\left (s-\frac 1 2\right)  +{a_0 \Gamma(s)}
\\
+ \sum_{j=1}^\infty\sum_{k=0}^{m_j}\sum_{\ell=0}^k a_{jk}\lambda^{-\frac j 2}\binom{k}{j}(-\log \lambda)^\ell\Gamma^{(k-\ell)}\left(s+\frac j 2 \right)\Biggr). 
\end{aligned}
$$  
  All functions involved are  meromorphic functions of $s$. Moreover, $s=0$ is a regular point of $\zeta(s,\lambda)$ and thus $\zeta'(0,\lambda)$ admits an asymptotic expansion in $\lambda$ of the form  
 $$
 \begin{aligned}
 \zeta'(0,\lambda)\sim a_{-1}\lambda(\log\lambda-1)-2a_{-1/2}\sqrt{\pi}\lambda^{\frac 1 2 }-{a_0 \log\lambda}
 \\
 +\sum_{j=1}^\infty\sum_{k=0}^{m_j}\sum_{\ell=0}^k a_{jk}\lambda^{-\frac j 2}\binom{k}{\ell}(-\log \lambda)^\ell\Gamma^{(k-\ell)}\left(\frac j 2 \right), 
 \end{aligned}
 $$
 where there is no constant term.
 \end{proof}

\begin{proof}[Proof of Proposition~\ref{BFKf}] Following~\cite{BFK}, we evaluate the constant $C$ in Lemma~\ref{ThmA} by considering the asymptotic expansion of all determinants involved in  
\begin{equation}\label{stRt}
\log \det (\Delta^\varphi +\lambda)=\log C+\log \det (\Delta^\varphi_{\partial M_\epsilon} +\lambda) +\log {\det R^\varphi_\epsilon(\lambda)}
\end{equation}
as $\lambda\to+\infty$.
It is known that the function
$
\lambda\mapsto\log {\det R^\varphi_\epsilon(\lambda)}
$
admits an asymptotic expansion with zero constant term; for the proof in the case $\partial M=\varnothing$ we refer to~\cite[Sec. 4.7]{BFK}, the case  $\partial M\neq\varnothing$ is studied in~\cite[Sec. 2 \& 3]{LeeBFKboundary}. For the other two determinants in~\eqref{stRt} we have proved the same fact in Lemma~\ref{FrVo}.  Thus we conclude that $C=1$ and hence 
\begin{equation}\label{Fr08:30}
\det (\Delta^\varphi +\lambda)=\det (\Delta^\varphi_{\partial M_\epsilon} +\lambda) {\det R^\varphi_\epsilon(\lambda)},\quad \lambda>0.
\end{equation}
It remains to pass in~\eqref{Fr08:30} to the limit as $\lambda\to0+$. 

 In the case $\partial M=\varnothing$ only  $\Delta^\varphi_{\partial M_\epsilon}$ is positive and thus $ \det (\Delta^\varphi_{\partial M_\epsilon} +\lambda)\to  \det \Delta^\varphi_{\partial M_\epsilon}$ as $\lambda\to 0+$.  From the definition of the modified determinant it immediately follows that 
$$
\log \det (\Delta^\varphi +\lambda)=\log\lambda+\log \det \Delta^\varphi+o(1),\quad \text{ as } \lambda\to 0+.  
$$  
Clearly, $\Delta^\varphi A_\varphi^{-1/2}=0$ and $\|A_\varphi^{-1/2}\|_{L^2(M,m_\varphi)}=1$; recall that $A_\varphi$ stands for  the total area of $M$ in the metric $m_\varphi$.  Hence for any $F\in L^2(M,m_\varphi)$ we have 
 \begin{equation}\label{AddnSun}
  (\Delta^\varphi+\lambda)^{-1}F=\frac 1 {A_\varphi\lambda}(F,1)_{L^2(M,m_\varphi)}+ (\Delta^\varphi+\lambda)^{-1}\left (F-\frac 1 A_\varphi(F,1)_{L^2(M,m_\varphi)}\right),
 \end{equation}
 where the second term in the right hand side is holomorphic in $\lambda$, $|\lambda|\ll 1$. The relation~\eqref{Aug16Fri} implies that for $\lambda\geq 0$ the operator $R^\varphi_\epsilon(\lambda)$ in $L^2(\partial M_\epsilon,m_\varphi)$ is selfadjoint and nonnegative, together with~\eqref{AddnSun} it also gives
$$
R^\varphi_\epsilon(\lambda)^{-1} =\frac 1 {\lambda A_\varphi} (\cdot, 1)_{L^2(\partial M_\epsilon,m_\varphi)}+\mathfrak h(\lambda),
$$
where $\|\mathfrak h(\lambda)\|_{\mathcal B (L^2(\partial M_\epsilon,m_\varphi))}=O(1)$ as $\lambda\to0$. Therefore, as $\lambda\to 0+$ the first eigenvalue $\mu_0(\lambda)=1/\|R^\varphi_\epsilon(\lambda)^{-1}\|_{\mathcal B(L^2(\partial M_\epsilon,m_\varphi))}$ of $R^\varphi_\epsilon(\lambda)$ goes to zero, while the others satisfy $\mu_k(\lambda)\geq \delta$ with some $\delta>0$. 
Finally, for the determinant of $R^\varphi_\epsilon(\lambda)$ we obtain 
$$
\log \det R^\varphi_\epsilon(\lambda)=\log \mu_0(\lambda)+\log \det R^\varphi_\epsilon+o(1)$$
$$=\log{\frac { A_\varphi}{L_\varphi(\partial M_\epsilon)}}+\log \lambda+\log \det R^\varphi_\epsilon+o(1)\quad \text{ as } \lambda\to 0+;
$$
here  $L_\varphi(\partial M_\epsilon)$ is the norm of  the operator $(\cdot, 1)_{L^2(\partial M_\epsilon,m_\varphi)}$ in the space of bounded operators acting in $L^2(\partial M_\epsilon,m_\varphi)$.
Thus passing in~\eqref{Fr08:30} to the limit we get~\eqref{Fr08:50}.

In the case $\partial M\neq \varnothing$ the operators $\Delta^\varphi $, $\Delta^\varphi_{\partial M_\epsilon}$, and $R^\varphi_\epsilon$ are positive and hence the determinants in~\eqref{Fr08:30} tend to the corresponding determinants in~\eqref{Fr08:41} as $\lambda\to0+$.
\end{proof}

\subsection{Proof of Theorem~\ref{main} and Corollary~\ref{2conical}}\label{Pr}
Recall that $m_\varphi=e^{2\varphi}m_0$, where $m_\varphi$ is a conical  metric representing a divisor $\pmb \beta$ and $m_0$ is a smooth conformal metric on $M$. In addition to the BFK decomposition formulas obtained in Proposition~\ref{BFKf} we will also be using similar decomposition formulas for $\det \Delta^0$. The  latter formulas  can be formally obtained by setting $\varphi=0$ in~\eqref{Fr08:50},~\eqref{Fr08:41}, and the definitions for $\Delta^\varphi_{\partial M_\epsilon}$, ${\det R^\varphi_\epsilon}$, and ${L_\varphi(\partial M_\epsilon)}$ in Section~\ref{sBFK}. We only notice that the corresponding results are well known: since the metric $m_0$ is smooth, the formula~\eqref{Fr08:50} (resp.~\eqref{Fr08:41}) with $\varphi=0$ is a particular case of~\cite[Theorem $\rm B^*$]{BFK} (resp.~\cite[Corollary 1.3]{LeeBFKboundary}). 

\begin{lemma}\label{confin} Let  $\epsilon>0$ be sufficiently small.  
\begin{enumerate}
\item If $\partial M=\varnothing$, then $\frac {\det R^\varphi_\epsilon}{L_\varphi(\partial M_\epsilon)}=\frac {\det R^0_\epsilon}{L_0(\partial M_\epsilon)}$.
\item If $\partial M\neq\varnothing$ and $\partial M\cap\supp\pmb\beta=\varnothing$, then
${\det R^\varphi_\epsilon}={\det R^0_\epsilon}$. 
\end{enumerate}
\end{lemma}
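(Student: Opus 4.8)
The key observation is that the Neumann jump operator $R^\varphi_\epsilon(\lambda)$ on $\partial M_\epsilon\setminus\partial M$ and the corresponding operator $R^0_\epsilon(\lambda)$ for the smooth metric $m_0$ are built from boundary value problems that differ only by the conformal factor $e^{2\varphi}$, and along the curve $\partial M_\epsilon$ the metrics $m_\varphi$ and $m_0$ are both smooth (there are no conical singularities on $\partial M_\epsilon$ since $\epsilon$ is small). First I would record the precise relation between the two jump operators. Since $\Delta^\varphi = e^{-2\varphi}\Delta^0$ as differential operators away from $\supp\pmb\beta$, the Dirichlet problem~\eqref{BVP} for $(\Delta^\varphi+\lambda)$ with $\lambda=0$ has exactly the same solutions as the Dirichlet problem for $\Delta^0$ (both amount to $\Delta^0 u = 0$ off $\partial M_\epsilon$ with the same boundary data); hence the harmonic extension $u$ is literally the same function. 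The normal derivatives $\partial_{\vec n}u$ differ only because the outward unit normals and arc lengths are computed in different metrics: on $\partial M_\epsilon$ one has $\vec n_\varphi = e^{-\varphi}\vec n_0$, so $R^\varphi_\epsilon = e^{-\varphi}\,R^0_\epsilon\, $ in the sense that $R^\varphi_\epsilon f = e^{-\varphi|_{\partial M_\epsilon}}\,R^0_\epsilon f$. Equivalently, $R^\varphi_\epsilon = M_{e^{-\varphi}} R^0_\epsilon$, where $M_{e^{-\varphi}}$ is multiplication by the (smooth, since $\supp\pmb\beta\cap\partial M_\epsilon=\varnothing$) positive function $e^{-\varphi}$ restricted to $\partial M_\epsilon$.

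Next I would apply the known multiplicative anomaly / conformal covariance formula for zeta-determinants of elliptic pseudodifferential operators on a closed $1$-manifold. For a first-order invertible elliptic $\Psi$DO $R$ on a disjoint union of circles and a positive smooth function $\omega$, one has $\det(M_\omega R) = \det(R)\cdot\exp\bigl(\text{(noncommutative residue term)}\bigr)$; crucially on a $1$-dimensional manifold the Wodzicki-residue obstruction to multiplicativity is governed by $\int_{\partial M_\epsilon}(\log\omega)\,\sigma_{\mathrm{sub}}$-type expressions, and for the specific structure here the anomaly is computed exactly as in the classical Polyakov-Alvarez derivation of~\cite{BFK, Alvarez}. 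Rather than invoke a general anomaly formula, the cleaner route — and the one I expect the paper takes — is to use the $\lambda$-dependence: from Lemma~\ref{ThmA} and the definition of $R^\varphi_\epsilon(\lambda)$, differentiate $\log\det R^\varphi_\epsilon(\lambda)$ in $\lambda$, express it via $\Tr\bigl((\Delta^\varphi_{\partial M_\epsilon}+\lambda)^{-1}-(\Delta^\varphi+\lambda)^{-1}\bigr)$ as in~\eqref{Aug16:03}, and compare with the same quantity for $m_0$. But this comparison is not obviously zero, so the more direct approach is to observe that $\det R^\varphi_\epsilon / \det R^0_\epsilon$ equals a \emph{local} quantity on $\partial M_\epsilon$ (a consequence of the locality of the $\Psi$DO determinant anomaly in one dimension), and then to show this local quantity is exactly cancelled by the ratio $L_\varphi(\partial M_\epsilon)/L_0(\partial M_\epsilon)$ in case (1) or equals $1$ in case (2).

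Concretely, for case (2) with $\partial M\neq\varnothing$: since $R^\varphi_\epsilon=M_{e^{-\varphi}}R^0_\epsilon$ and both operators act on $L^2(\partial M_\epsilon\setminus\partial M)$ but with respect to \emph{different} volume elements (arc length $ds_\varphi = e^\varphi ds_0$), the natural unitary $U: L^2(\partial M_\epsilon, ds_0)\to L^2(\partial M_\epsilon, ds_\varphi)$, $Uf = e^{-\varphi/2}f$, conjugates $R^0_\epsilon$ (as a self-adjoint operator in $L^2(ds_0)$) to $U R^0_\epsilon U^{-1} = M_{e^{-\varphi/2}}R^0_\epsilon M_{e^{-\varphi/2}}$ acting in $L^2(ds_\varphi)$. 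Because $R^0_\epsilon$ and $M_{e^{-\varphi/2}}$ are classical $\Psi$DOs on a $1$-manifold, the zeta-determinants of $M_{e^{-\varphi}}R^0_\epsilon$ and of $M_{e^{-\varphi/2}}R^0_\epsilon M_{e^{-\varphi/2}}$ agree (the multiplicative anomaly between them vanishes in dimension one, since it is controlled by $\Tr_{W}$ of commutators of zeroth-order operators, which vanishes on $S^1$), and a unitary conjugation does not change the determinant; hence $\det R^\varphi_\epsilon = \det R^0_\epsilon$. For case (1) with $\partial M=\varnothing$, the same argument applies on the orthogonal complement of the kernels, but now the kernel of $R^\varphi_\epsilon$ (spanned by the constant function $1$, of $L^2(ds_\varphi)$-norm $\sqrt{L_\varphi(\partial M_\epsilon)}$) and the kernel of $R^0_\epsilon$ (constant $1$, of $L^2(ds_0)$-norm $\sqrt{L_0(\partial M_\epsilon)}$) are not preserved by $U$; tracking the lowest eigenvalue through the conjugation, exactly as in the $\lambda\to 0+$ analysis in the proof of Proposition~\ref{BFKf}, produces precisely the factor $L_\varphi(\partial M_\epsilon)/L_0(\partial M_\epsilon)$, which is what makes $\det R^\varphi_\epsilon/L_\varphi(\partial M_\epsilon) = \det R^0_\epsilon/L_0(\partial M_\epsilon)$.

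\textbf{Main obstacle.} The delicate point is justifying the vanishing of the multiplicative anomaly $\log\det(M_{e^{-\varphi}}R^0_\epsilon) - \log\det(R^0_\epsilon) - \log\det(M_{e^{-\varphi}})$ and, more to the point, the unitary-invariance step — that is, making rigorous that passing from $L^2(ds_0)$ to $L^2(ds_\varphi)$ via the natural unitary leaves the zeta-determinant of the (positive, first-order elliptic) operator unchanged, and that the residual discrepancy is purely the kernel/length bookkeeping above. One must invoke the fact that on a closed $1$-manifold the Kontsevich-Vishik / Wodzicki anomaly for determinants of products of first-order elliptic $\Psi$DOs with zeroth-order ones vanishes because the noncommutative residue of the relevant logarithmic symbol integrates to zero over a $1$-dimensional base; I would cite this from the $\Psi$DO literature (or re-derive it from the structure of $R^\varphi_\epsilon(\lambda)$ via~\eqref{Aug16Fri}, which exhibits $R^\varphi_\epsilon(\lambda)^{-1}$ as a boundary trace of a resolvent kernel and makes the conformal scaling transparent). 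Once that is in place, the rest is the kernel/length bookkeeping already carried out in the proof of Proposition~\ref{BFKf}.
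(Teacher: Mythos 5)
Your reduction of the problem to the single identity $\det R^\varphi_\epsilon=\det\bigl(M_{e^{-\varphi}}R^0_\epsilon\bigr)$ --- same harmonic extensions on both sides of the cut, normal derivative rescaled by $e^{-\varphi}$, unitary passage between $L^2(ds_0)$ and $L^2(ds_\varphi)$, kernel/length bookkeeping as $\lambda\to0+$ --- is correct as far as it goes, but the proposal stops exactly where the proof has to start: you never establish that $\det\bigl(M_{e^{-\varphi}}R^0_\epsilon\bigr)$ equals $\det R^0_\epsilon$ (resp.\ $\det R^0_\epsilon\cdot L_\varphi/L_0$). The justification offered --- that the multiplicative anomaly for zeta determinants ``vanishes in dimension one'' because the relevant noncommutative residues vanish on a circle --- is not a valid general principle: the Okikiolu-type vanishing theorems on odd-dimensional manifolds apply to odd-class operators, and $R^0_\epsilon$, with principal symbol proportional to $|\xi|$, is not odd-class; moreover $\det M_{e^{-\varphi}}$ is not even defined (a multiplication operator has order zero and no zeta determinant), so the expression $\log\det(M_\omega R)-\log\det R-\log\det M_\omega$ does not parse. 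What actually has to be computed is the variation $\tfrac{d}{dt}\log\det\bigl(M_{e^{-t\varphi}}R^0_\epsilon\bigr)$, i.e.\ the finite part at $s=0$ of $\Tr\bigl(M_{-\varphi}(M_{e^{-t\varphi}}R^0_\epsilon)^{-s}\bigr)$; this is a genuine local symbol computation on the circles $\partial D^j_\epsilon$, and its vanishing (resp.\ its reduction to $\tfrac{d}{dt}\log L_{t\varphi}$) is precisely the content of the lemma, not something that follows from locality alone. Already the constant case $\varphi\equiv c$ gives $\det(e^{-c}R^0_\epsilon)=e^{-c\,\zeta_{R^0_\epsilon}(0)}\det R^0_\epsilon$, so even there one needs the value of $\zeta_{R^0_\epsilon}(0)$, which your argument does not address.

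The paper sidesteps this computation entirely by a two-step indirect argument. First, since $R^\varphi_\epsilon$ and $L_\varphi(\partial M_\epsilon)$ depend on $m_\varphi$ only through the harmonic extensions (which, by the Friederichs/boundedness condition at the cone points, coincide with those for a smoothed metric) and through $m_\varphi$ in a neighbourhood of $\partial M_\epsilon$, one may replace $\varphi$ by a globally smooth $\tilde\varphi$ with $\tilde\varphi=\varphi$ on $M_{\epsilon/2}$ without changing $R^\varphi_\epsilon$ or $L_\varphi(\partial M_\epsilon)$. Second, for smooth conformal factors the invariance of $\det R^0_\epsilon/L_0(\partial M_\epsilon)$ (resp.\ of $\det R^0_\epsilon$) is read off by inserting the classical Polyakov and Polyakov--Alvarez formulas for $M$, $M_\epsilon$ and the disks $D^j_\epsilon$ into the BFK formulas \eqref{Fr08:50} and \eqref{Fr08:41}: the boundary anomaly terms on the two sides of the cut cancel, so the anomalies of the pieces sum to the anomaly of the whole and the jump-operator factor must be conformally invariant (this is the argument of~\cite{Wentworth}). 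If you wish to keep your direct route, you must either carry out the local variation computation for $M_{e^{-t\varphi}}R^0_\epsilon$ or cite a result that does (e.g.\ the known conformal invariance of the determinant of the Dirichlet-to-Neumann map on surfaces); as written, the key analytic step is missing.
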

\begin{proof} For all sufficiently small $\epsilon>0$ the disks $D^1_\epsilon,\dots,D^n_\epsilon$ are disjoint and do not touch the boundary $\partial M$.  In each disk $D^j_\epsilon$ we replace $\varphi(x)=\beta_j\log|x|+\phi_j(x)-\psi_j(x)$ by the smooth potential  
$$
\tilde \varphi(x)=\chi(|x|/\epsilon)\bigl(\beta_j\log|x|+\phi_j(x)\bigr)-\psi_j(x),
$$
where $\chi\in C^\infty (\Bbb R_+)$ is a cutoff function with properties:  $\chi(|x|)=0$ for $|x|\leq 1/3$ and $\chi(|x|)=1$ for $|x|\geq1/2$. We also set $\tilde\varphi=\varphi$ on $M_\epsilon$. As a result we obtain $\tilde\varphi\in C^\infty(M)$ such that  $\varphi=\tilde\varphi$ on $M_{\epsilon/2}\supset M_\epsilon$. Hence  $L_{\tilde\varphi}(\partial M_\epsilon)=L_{\varphi}(\partial M_\epsilon)$ and $R^{\tilde\varphi}_\epsilon=R^{\varphi}_\epsilon$ (recall that $\Delta^\varphi_{\partial M_\epsilon}$ is the Friederichs extension of the Dirichlet Laplacian and hence the solution $u=\hat f-(\Delta^\varphi_{\partial M_\epsilon})^{-1} \Delta^\varphi \hat f$  to~\eqref{BVP} with $\lambda=0$ is bounded and  thus coincides with $\tilde u=\hat f-(\Delta^{\tilde\varphi}_{\partial M_\epsilon})^{-1}\Delta^{\tilde\varphi}\hat f$, where $\Delta^{\tilde\varphi}_{\partial M_\epsilon}$ is the selfadjoint Dirichlet Laplacian).

 As is known~\cite{Wentworth}, the invariance of $\frac {\det R^0_\epsilon}{L_0(\partial M_\epsilon)}$  in the case $\partial M=\varnothing$ (resp. of $\det R^0_\epsilon$ in the case $\partial M\neq\varnothing$) under the conformal transformations $m_0\mapsto e^{2\tilde\varphi}m_0$ with smooth $\tilde\varphi$ can be easily seen from the BFK formula~\eqref{Fr08:50} (resp.~\eqref{Fr08:41}) together with Polyakov/Polyakov-Alvarez formulas  on $M$, $M_\epsilon$, and $D^j_\epsilon$.  
\end{proof}

\begin{proof}[{Proof of Theorem~\ref{main}}]
BFK formulas in Proposition~\ref{BFKf} and Lemma~\ref{confin} imply
\begin{equation}\label{AUG20}\begin{aligned}
\log \frac {(\det \Delta^\varphi)/A_\varphi }{(\det \Delta^0)/A_0}=\log \frac{ \det \Delta^\varphi_{\partial M_\epsilon} } {\det \Delta^0_{\partial M_\epsilon} }\text{ if } \partial M=\varnothing,
\\
\log \frac {\det \Delta^\varphi }{\det \Delta^0}=
\log \frac{ \det \Delta^\varphi_{\partial M_\epsilon} } {\det \Delta^0_{\partial M_\epsilon} }\text{ if } \partial M\neq \varnothing .
\end{aligned}
\end{equation}
Note that $\Delta^\varphi_{\partial M_\epsilon} $ can be decomposed into the direct sum of operators: 
\begin{equation}\label{DS}
\Delta^\varphi_{\partial M_\epsilon} = \Delta^\varphi_{M_\epsilon}\oplus_{j=1}^n  \Delta^\varphi_{D^j_\epsilon},
\end{equation}
where $\Delta^\varphi_{M_\epsilon}$ is the selfadjoint  Dirichlet Laplacian on $(M_\epsilon, m_\varphi)$ and $\Delta^\varphi_{D^j_\epsilon}$ is the Friederichs extension of the Dirichlet Laplacian on the metric disk $(D^j_\epsilon, |x|^{2\beta_j}e^{2\phi_j}|dx|^2)$ studied in Section~\ref{SSnon-flat}.
As a consequence we have
\begin{equation}\label{DSD}
\det \Delta^\varphi_{\partial M_\epsilon} =\det \Delta^\varphi_{M_\epsilon}\prod_{j=1}^n  \det\Delta^\varphi_{D^j_\epsilon}.
\end{equation}
Similarly we decompose $\Delta^0_{\partial M_\epsilon}$ into the corresponding direct sum and obtain~\eqref{DS} and~\eqref{DSD} with $\varphi$ replaced by $0$.
This together with Lemma~\ref{KeyLemma}, formula~\eqref{zeta at zero+} for $\zeta_<(0,\beta)$, and~Lemma~\ref{SimpleKey} implies 
\begin{equation}\label{E6s}
\begin{aligned}
\log \frac{ \det \Delta^\varphi_{\partial M_\epsilon} } {\det \Delta^0_{\partial M_\epsilon} }=\log\frac{ \det \Delta^\varphi_{M_\epsilon}}{\det \Delta^0_{M_\epsilon}}-\sum_{j=1}^n \left(\frac 1 6 (\beta_j^2+2\beta_j)\log\epsilon+2\phi_j(0)\,\zeta_<(0,\beta_j)\right.
\\
\left.-\frac 1 3 \psi_j(0)+C(\beta_j)+\frac {\beta_j} 2 \right) +o(1),\quad \epsilon\to 0+,
\end{aligned}
\end{equation}
where we introduced the notation
$$
C(\beta)=\zeta'_<(0,\beta)-\left(2 \zeta_<(0,\beta)-\frac 1 3\right)\log 2  -\zeta_<'(0,0)- \frac \beta 2.
$$ 
The formula~\eqref{Cbeta} for $C(\beta)$ now follows from~\eqref{zeta at zero+},~\eqref{Spr+}, and~Lemma~\ref{Barnes} in Appendix~\ref{BarnesPV}.

The classical Polyakov-Alvarez formula on $M_\epsilon$ reads
\begin{equation}\label{E5s}
\begin{aligned}
\log\frac{ \det \Delta^\varphi_{M_\epsilon}}{\det \Delta^0_{M_\epsilon}}=-\frac{1}{6\pi}\left (\frac1 2 \int_{M_\epsilon} (|\nabla_0\varphi|^2+2K_0\varphi)\,dA_0+\int_{\partial M_\epsilon}k_0\varphi \,d s_0  \,\right )\\-\frac 1 {4\pi}\int_{\partial M_\epsilon}{\partial_{\vec n}}\varphi\,d s_0;
\end{aligned}
\end{equation}
see e.g.~\cite{Alvarez,OPS,Weisberger}.
Let us rewrite the right hand side of~\eqref{E5s} in the form
\begin{equation}\label{ee}
\begin{aligned}
-\frac{1}{6\pi}\left (\frac1 2 \int_{M_\epsilon} (\varphi\Delta^0\varphi+2K_0\varphi)\,dA_0+\frac 1 2 \int_{\partial M_\epsilon} \varphi \partial_{\vec n}\varphi \,ds_0+\int_{\partial M_\epsilon}k_0\varphi \,d s_0  \,\right )\\-\frac 1 {4\pi}\int_{\partial M_\epsilon}\partial_{\vec n}\varphi\,d s_0. 
\end{aligned}
\end{equation}

In the disk  $ D^j_\epsilon$ we  have $m_\varphi=|x|^{2\beta_j}e^{2\phi_j}|dx|^2$,  $m_0=e^{2\psi_j}|dx|^2$, and 
$$\varphi(x)=\beta_j\log|x|+\phi_j(x)-\psi_j(x).$$  

Taking into account the equalities
\begin{equation*}
k_0=-e^{-\psi}(\epsilon^{-1}+\partial_{|x|}\psi_j), z=\epsilon e^{i\theta},  ds_0=e^{\psi}\epsilon\, d\theta,
\end{equation*}
on $\partial D^j_\epsilon$,  
for the integrals along the $j$-th  component $\partial D^j_\epsilon$ of the boundary $\partial M_\epsilon\setminus\partial M$  in~\eqref{ee} we get 
$$
\begin{aligned}
\int_{\partial D^j_\epsilon}\partial_{\vec n}\varphi \,ds_0=-\int_0^{2\pi}e^{-\psi_j(x)}\partial_{|x|}\bigl(\beta_j\log|x|+\phi_j(x)-\psi_j(x)\bigr)e^{\psi_j(x)}\Bigr|_{x=\epsilon e^{i\theta}}\epsilon\, d\theta
\\
=-\int_0^{2\pi}\left(\frac {\beta_j} \epsilon+O(1)\right)\epsilon\, d\theta=-2\pi\beta_j+O(\epsilon),
\end{aligned}
$$

$$
\begin{aligned}
\int_{\partial D^j_\epsilon}  & k_0\varphi \,d s_0 
\\ & =-\int_0^{2\pi}e^{-\psi_j(x)}\bigl(\epsilon^{-1}+\partial_{|x|}\psi_j(x)\bigr)\bigl(\beta_j\log|x|+\phi_j(x)-\psi_j(x)\bigr)e^{\psi_j(x)}\Bigr|_{x=\epsilon e^{i\theta}}\epsilon\, d\theta
\\
& =-2\pi\beta_j\log\epsilon-2\pi\bigl(\phi_j(0)-\psi_j(0)\bigr)+o(1),
\end{aligned}
$$
$$
\begin{aligned}
\int_{\partial D^j_\epsilon} \varphi \partial_{\vec n}\varphi \,ds_0
&= -\int_0^{2\pi}\bigl (\beta_j\log|x|+\phi_j(x)-\psi_j(x)\bigr)e^{-\psi_j(x)}
\\ &\hspace{3cm} \times \partial_{|x|}\bigl(\beta_j\log|x|+\phi_j(x)-\psi_j(x)\bigr)e^{\psi_j(x)}\Bigr|_{x=\epsilon e^{i\theta}}\epsilon\, d\theta
\\
& =-\int_0^{2\pi}\bigl(\beta_j\log|x|+\phi_j(x)-\psi_j(x)\bigr)\Bigr|_{x=\epsilon e^{i\theta}}
\left(\frac{\beta_j}\epsilon+O(1)\right)\epsilon\, d\theta
\\
& =-2\pi\beta_j^2\log\epsilon-2\pi\beta_j\bigl(\phi_j(0)-\psi_j(0)\bigr)+o(1). 
\end{aligned}
$$
In~\eqref{ee} we also use the identities $K_\varphi=e^{-2\varphi}(K_0+\Delta^0\varphi)$ on $M_\epsilon$ and $dA_\varphi=e^{2\varphi}dA_0$ and  finally obtain from~\eqref{E5s} the following: 
\begin{equation*}
\begin{aligned}
\log\frac{ \det \Delta^\varphi_{M_\epsilon}}{\det \Delta^0_{M_\epsilon}}=&-\frac{1}{12\pi}\left(\int_M K_\varphi \varphi\,dA_\varphi+\int_M K_0\varphi\,dA_0 +\int_{\partial M} \varphi\partial_{\vec n}\varphi\,ds_0\right)
\\
& -\frac 1{6\pi}\int_{\partial M}k_0\varphi\, ds_0-\frac 1 {4\pi}\int_{\partial M} \partial_{\vec n}\varphi\,ds_0\\
 & +\sum_{j=1}^n \left( \frac 1 6 (\beta_j^2+2\beta_j)\log\epsilon+\frac {\beta_j+2} {6} \bigl(\phi_j(0)-\psi_j(0)\bigr)+\frac {\beta_j} 2\right) +o(1).
\end{aligned}
\end{equation*}
This together with~\eqref{E6s} and~\eqref{AUG20} implies the desired formulas~\eqref{Pol} and~\eqref{Pol-Alv} (if $\partial M=\varnothing$, then the integrals along $\partial M$ do not appear).
\end{proof}

\begin{proof}[Proof of Corollary~\ref{2conical}] Consider for instance the case $\partial M=\varnothing$. Let us change the notation: by $m_\alpha$ and $m_\beta$  we denote the metrics $m_0$ and $m_\varphi$ representing the divisors $\pmb\alpha$ and $\pmb \beta$ respectively. Let also $m_0$ be a smooth metric in the same conformal class,  $m_\alpha=e^{2\alpha}m_0$ and $m_\beta=e^{2\beta}m_0$ with some functions $\alpha\in C^\infty(M\setminus\supp\pmb\alpha)$ and $\beta\in C^\infty(M\setminus\supp\pmb\beta)$, then $\varphi=\beta-\alpha$ ($m_0$ can be constructed by smoothing a metric potential  in the same way as in the proof of Lemma~\ref{confin}). In a local parameter centered at $P^j\in\supp\pmb\alpha\cup\supp\pmb\beta$ we write $m_\alpha=|x|^{2\alpha_j}e^{2\hat\alpha_j(x)}|dx|^2$, $m_\beta=|x|^{2\beta_j}e^{2\hat\beta_j(x)}|dx|^2$, and $m_0=e^{2\psi_j(x)}|dx|^2$.
In the same way as in the proof of Theorem~\ref{main}  the integral over $M$ below reduces to a sum of line integrals with contours shrinking to the conical singularities of $m_\alpha$ and $m_\beta$. As a result we obtain
$$
-\frac 1 {12\pi}\int_{M} [(\Delta^0\alpha) \beta -\alpha(\Delta^0\beta)]\,dA_0=\frac  1 6 {\sum_{k=1}^n\alpha_k}\bigl(\psi_k(0)-\hat\beta_k(0)\bigr) -\frac  1 6 {\sum_{j=1}^{n}\beta_j}\bigl(\psi_j(0)-\hat\alpha_j(0)\bigr). $$
This together with identity $K_0=e^{2\varphi}\left(K_\varphi-\Delta^\varphi\varphi\right)$ and the Polyakov formula~\eqref{Pol} for $\varphi=\alpha$  and $\varphi=\beta$ leads to~\eqref{18:09}. The case $\partial M\neq\varnothing$ is similar.
\end{proof}

\section{Examples and Applications}\label{appl}

In this section we  illustrate our results by obtaining  new and recovering   known  explicit formulas for determinants of Laplacians on singular
surfaces with or without boundary. 

\subsection{Constant curvature spheres with two conical singularities}\label{AF}

By the uniformization theorem a Riemann surface with two conical singularities homeomorphic to a sphere is conformally equivalent to the Riemann sphere $\Bbb CP^1$;  we can assume that the conical singularities are at $z=0$ and $z=\infty$.  Let $m_\varphi$ be a corresponding  conformal metric on  $\Bbb CP^1$ with constant curvature $K_\varphi$. Then, by~\cite[Theorem II]{Troyanov Spheres},  $K_\varphi$ is positive and there exist  $\mu\in[0,\infty)$ and $\beta>-1$, such that either $\beta$ is an integer or $\mu=0$, and (up to  a change of coordinates $z\mapsto pz$ with a constant $p\in \Bbb C$) we have  
\begin{equation}\label{TroM}
m_\varphi=\frac{(2\beta+2)^2|z|^{2\beta}|dz|^2}{(|1+\mu z^{\beta+1}|^2+K_\varphi|z|^{2\beta+2})^2}. 
\end{equation}
This metric represents the divisor $\pmb \beta=\beta\cdot 0+\beta\cdot\infty$. 
The distance between $z=0$ and $z=\infty$ in the metric~\eqref{TroM} is  $d=\frac 2{\sqrt{K_\varphi}}\arctan\left(\sqrt{K_\varphi}/\mu\right)$. If $\beta\notin\Bbb N$, then $\mu=0$,  $d=\pi/\sqrt{K_\varphi}$, and the two conical singularities are  antipodal. 
\begin{proposition}\label{Klev} Let $K_\varphi>0$,  $\mu\in[0,\infty)$, and $\beta>-1$.  Then for the determinant of (the Friederichs extension of) the Laplacian $\Delta^\varphi$ on the Riemann sphere $\Bbb CP^1$ endowed with metric~\eqref{TroM} we have 
\begin{equation}\label{zetaPrimeSphere}
\begin{aligned}
 \log\det{\Delta^\varphi}  = &  -\frac 1 6\left(\beta+1-\frac 1 {\beta+1}\right)\log\left(1+\frac{\mu^2}{K_\varphi}\right)+\frac {\beta+1} 2\\ &-\frac 1 3\left(\beta+1+\frac 1 {\beta+1}\right)\log\frac{\beta+1}{\sqrt{K_\varphi}}-4\zeta'_B(0;\beta+1,1,1)-\log K_\varphi,
 \end{aligned}
 \end{equation}
 where either $\beta$ is an integer or $\mu=0$.
 \end{proposition}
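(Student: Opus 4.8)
The plan is to derive~\eqref{zetaPrimeSphere} directly from the Polyakov comparison formula~\eqref{Pol} of Theorem~\ref{main}, taking as the smooth reference metric the round metric
$$
m_0=\frac{4|dz|^2}{(1+|z|^2)^2},\qquad K_0\equiv 1,\qquad A_0=4\pi ,
$$
on $\Bbb CP^1$, whose (modified) zeta regularized determinant is the classical value $\log\det\Delta^0=\tfrac12-4\zeta'_R(-1)$. (This value is itself the case $\beta=0,\mu=0,K_\varphi=1$ of~\eqref{zetaPrimeSphere} combined with the identity $\zeta'_B(0;1,1,1)=\zeta'_R(-1)$ recorded below~\eqref{wed sep4}, so one may either invoke it from the literature or regard~\eqref{zetaPrimeSphere} as recomputing it.) Since $\sum_jC(\beta_j)=2C(\beta)$ here, substituting~\eqref{Cbeta} will produce a term $+4\zeta'_R(-1)$ that cancels against $\log\det\Delta^0$ and leaves the announced $-4\zeta'_B(0;\beta+1,1,1)$, so what remains is purely to track the elementary pieces.

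First I would record the conformal factor: from~\eqref{TroM} and the choice of $m_0$ one has $m_\varphi=e^{2\varphi}m_0$ with
$$
\varphi(z)=\log(\beta+1)+\beta\log|z|+\log(1+|z|^2)-\log\!\bigl(|1+\mu z^{\beta+1}|^2+K_\varphi|z|^{2\beta+2}\bigr)\in C^\infty(\Bbb CP^1\setminus\{0,\infty\}),
$$
and, being of constant curvature, $m_\varphi$ is admissible in the sense of Definition~\ref{D A} by Remark~\ref{CC}, so Theorem~\ref{main} applies. Then I would collect the non-integral data. By the Gauss--Bonnet theorem $A_\varphi=\tfrac{2\pi\chi(\Bbb CP^1,\pmb\beta)}{K_\varphi}=\tfrac{4\pi(\beta+1)}{K_\varphi}$. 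In the local parameter $x=z$ centred at $0$ one reads off from~\eqref{TroM} that $\phi_1(0)=\log(2\beta+2)$ and $\psi_1(0)=\log 2$; in the local parameter $w=1/z$ centred at $\infty$ one rewrites~\eqref{TroM} and $m_0$ as $m_\varphi=|w|^{2\beta}e^{2\phi_2}|dw|^2$, $m_0=e^{2\psi_2}|dw|^2$ and reads off $\phi_2(0)$ and $\psi_2(0)=\log 2$ (only the coordinate-independent combinations $\tfrac{\phi_j(0)}{\beta+1}-\psi_j(0)$ enter~\eqref{Pol}). Feeding this into~\eqref{Pol} together with the cancellation above reduces the problem to one remaining computation.

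The main step is the evaluation of the regularized curvature integral
$$
I:=\int_{\Bbb CP^1}K_\varphi\varphi\,dA_\varphi+\int_{\Bbb CP^1}K_0\varphi\,dA_0=K_\varphi\!\int_{\Bbb CP^1}\!\varphi\,dA_\varphi+\int_{\Bbb CP^1}\!\varphi\,dA_0 ,
$$
where I used that both curvatures are constant. In polar coordinates the $\log(\beta+1)$, $\beta\log|z|$ and $\log(1+|z|^2)$ pieces of $\varphi$ integrate against $dA_0$ by elementary means ($\int_{\Bbb CP^1}\log|z|\,dA_0=0$ by the isometry $z\mapsto 1/z$ of $m_0$), and the substitution $u=z^{\beta+1}$, under which $m_\varphi$ becomes the pullback of the round metric $\tfrac{4|du|^2}{(|1+\mu u|^2+K_\varphi|u|^2)^2}$, likewise reduces $\int\varphi\,dA_\varphi$ to standard one-dimensional integrals. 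The genuinely delicate term is $\int\log\!\bigl(|1+\mu z^{\beta+1}|^2+K_\varphi|z|^{2\beta+2}\bigr)\,dA_\bullet$; here I would integrate out the angle first. For $\mu=0$ this is immediate, and for integer $\beta$ one notes that, with $\psi=(\beta+1)\theta$, the integrand is $\log\bigl(|1+ae^{i\psi}|^2+b\bigr)$ with $a=\mu|z|^{\beta+1}$, $b=K_\varphi|z|^{2\beta+2}$; factoring the quadratic $aw^2+(1+a^2+b)w+a=a(w+\sigma)(w+\sigma^{-1})$ with $\sigma>0$ yields $|1+ae^{i\psi}|^2+b=a\,|e^{i\psi}+\sigma|\,|e^{i\psi}+\sigma^{-1}|$, so by Jensen's formula the angular average equals $\log\tfrac{(1+a^2+b)+\sqrt{(1+a^2+b)^2-4a^2}}{2}$, after which the remaining radial integral is elementary. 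The principal obstacle is exactly this explicit integration together with the ensuing algebraic bookkeeping: one must verify that all $\log 2$ contributions cancel and that the $\log\mu$, $\log(\beta+1)$, $\log K_\varphi$ terms reorganize precisely into $-\tfrac16\bigl(\beta+1-\tfrac1{\beta+1}\bigr)\log\bigl(1+\tfrac{\mu^2}{K_\varphi}\bigr)$, $-\tfrac13\bigl(\beta+1+\tfrac1{\beta+1}\bigr)\log\tfrac{\beta+1}{\sqrt{K_\varphi}}$, $\tfrac{\beta+1}{2}$ and $-\log K_\varphi$ as in~\eqref{zetaPrimeSphere}; the value at $\mu=0$ can be cross-checked against~\cite{Spreafico3,Klevtsov,IMRN,KalvinJGA}. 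If the direct integral proves unwieldy, an alternative is to establish~\eqref{zetaPrimeSphere} first for $\mu=0$, where $m_\varphi$ is rotationally symmetric and all integrals are one dimensional, and then to obtain the case $\mu>0$ by applying Corollary~\ref{2conical} to the pair consisting of the $\mu=0$ and the $\mu>0$ metrics on $\Bbb CP^1$: both represent the same divisor $\beta\cdot 0+\beta\cdot\infty$, so the terms $C(\beta_j)-C(\alpha_j)$ drop out and only an integral comparison of two constant curvature metrics survives.
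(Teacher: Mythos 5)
Your setup coincides with the paper's: the same reference metric (the round curvature\mbox{-}one metric with $\psi(z)=\log 2-\log(1+|z|^2)$), the same application of~\eqref{Pol}, the correct values $A_\varphi=4\pi(\beta+1)/K_\varphi$, $\phi_1(0)=\log(2\beta+2)$, $\psi_j(0)=\log 2$, and the correct observation that the $4\zeta_R'(-1)$ coming from $-2C(\beta)$ cancels against $\log\det\Delta^0=\tfrac12-4\zeta_R'(-1)$. (The paper prefers to prove the case $K_\varphi=1$ first and then recover general $K_\varphi$ by the rescaling argument together with Corollary~\ref{z0}; carrying $K_\varphi$ through the integrals as you do is legitimate but adds bookkeeping.)

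Where you genuinely diverge --- and where I see a real gap --- is the evaluation of the curvature integral. The paper never computes an area integral directly: for $K_\varphi=1$ it writes $m_\varphi=f^*\bigl(e^{2\psi(w)}|dw|^2\bigr)$ with the ramified covering $f(z)=z^{\beta+1}/(1+\mu z^{\beta+1})$, splits $\varphi=\beta\log|z|+\psi\circ f+\log|z^{-\beta}f'|-\psi$, and evaluates each piece either by the change of variables $z\mapsto f(z)$ (which turns $\int(\psi\circ f)\,dA_\varphi$ into $(\beta+1)\int\psi\,dA_0$) or by the Liouville equation $|z|^{2\beta}e^{2\phi}=-4\partial_z\partial_{\bar z}\phi$ combined with Green's theorem, which converts every remaining area integral into boundary integrals over $|z|=\epsilon$ and $|z|=1/\epsilon$ that depend only on the logarithmic and constant asymptotics of $\phi$ and $\psi$ at $0$ and $\infty$. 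Your plan replaces this by Jensen's formula for the angular average followed by ``elementary'' radial integrals. The Jensen step itself is correct for integer $\beta$, but the resulting radial integrand is $\log\bigl[\tfrac12\bigl((1+a^2+b)+\sqrt{((1-a)^2+b)((1+a)^2+b)}\bigr)\bigr]$ with $a=\mu r^{\beta+1}$, $b=K_\varphi r^{2\beta+2}$, integrated against $r(1+r^2)^{-2}\,dr$: the logarithm of an algebraic function containing the square root of a quartic in $r^{\beta+1}$. This is not an elementary integral, and the cross term $\int\log(1+|z|^2)\,dA_\varphi$ for $\mu>0$ is worse, since there the angular average of $dA_\varphi$ itself is an algebraic function of the same kind. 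The final answer is as clean as $-\tfrac16\bigl(\beta+1-\tfrac1{\beta+1}\bigr)\log\bigl(1+\mu^2/K_\varphi\bigr)$ only because of cancellations that the Green's\mbox{-}theorem route makes automatic; as written, your ``main step'' is exactly where the argument would stall. The same objection applies to your fallback: Corollary~\ref{2conical} removes the $C(\beta_j)$ terms but still leaves $\int(\varphi_\mu-\varphi_0)\,(dA_{\varphi_\mu}+dA_{\varphi_0})$ with $\varphi_\mu-\varphi_0=\log\frac{1+K_\varphi|z|^{2\beta+2}}{|1+\mu z^{\beta+1}|^2+K_\varphi|z|^{2\beta+2}}$, which again should be handled by Liouville plus Green (reducing to the asymptotics at $0$ and $\infty$) rather than by direct integration.
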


The proof is preceded by  some remarks regarding the novelty of the result.  The case $\mu=0$ (a sphere with two antipodal conical singularities of angle $2\pi(\beta+1)$, or, equivalently, a spindle or an american football) was previously studied in~\cite{Spreafico3} by a different  approach based on separation of variables, an erratum was recently announced  in~\cite{Klevtsov}. For some closely related results in the study of Quantum Hall Effect on singular surfaces we refer to~\cite{CCLW,C-W,Klevtsov}. In its full generality the formula~\eqref{zetaPrimeSphere} appears for the first time.
A variational formula for $\log\det{\Delta^\varphi} $ with respect to $\mu$ (for $\beta=K_\varphi=1$) was  obtained in~\cite{KalvinJGA,IMRN}.
 In the case $\beta=K_\varphi=1$ the formula~\eqref{zetaPrimeSphere} simplifies to 
\begin{equation}\label{kokot}
\det\Delta^\varphi=\sqrt[6]{2}e^{1-2\zeta_R'(-1)}(1+\mu^2)^{-1/4};
\end{equation}
see~Lemma~\ref{Barnes} in Appendix~\ref{BarnesPV} (in order to compare~\eqref{kokot} with the result in~\cite{KalvinJGA,IMRN}, set  there $z_1=0$ and $z_2=1/\mu$). 
Thus we find that the undetermined in~\cite[formula (1.2) with $\rho(z,\bar z)=4(1+|z|^2)^{-2}$]{KalvinJGA} and~\cite[formula (1)]{IMRN} constant $C$  equals $\sqrt[6]{2}e^{1-2\zeta_R'(-1)}$. 
\begin{proof}[Proof of Proposition~\ref{Klev}]
 We will rely on~\eqref{Pol}, where as the reference metric $m_0$ we take the standard round curvature one metric on $\Bbb CP^1$, i.e.  $m_0=e^{2\psi}|dz|^2$ with 
 $$\psi(z)=\log2-\log(1+|z|^2).$$
  Consider the map 
$$
w=f(z)=\frac {z^{\beta+1}}{1+\mu z^{\beta+1}}: \Bbb CP^1_z\to \Bbb CP^1_w.
$$
It is a ramified covering with ramification divisor $\beta\cdot 0 + \beta\cdot \infty$. 
In the case $K_\varphi=1$ we have 
$$
m_\varphi=f^*\left(e^{2\psi(w)}|dw|^2\right)=|z|^{2\beta}e^{\phi(z)}|dz|^2,\quad 
\phi(z)=\psi\circ f(z)+\log|z^{-\beta}f'(z) |,
$$
and $A_\varphi=4\pi(\beta+1)$. The integral part of Polyakov type formula~\eqref{Pol} takes the form
\begin{equation*}
\begin{aligned}
\int_{\Bbb C}\varphi|z|^{2\beta}e^{2\phi}\frac{dz\wedge d\bar z}{-2i} & +\int_{\Bbb C}\varphi e^{2\psi}\frac{dz\wedge d\bar z}{-2i}
\\
=\int_{\Bbb C}\Bigl(\beta\log|z| & +\psi\circ f+\log|z^{-\beta}f'| -\psi\Bigr)|z|^{2\beta}e^{2\phi}\frac{dz\wedge d\bar z}{-2i}\\ &+\int_{\Bbb C}\Bigl(\beta\log|z| +\phi-\psi\Bigr) e^{2\psi}\frac{dz\wedge d\bar z}{-2i}. 
\end{aligned}
\end{equation*}
We remove the parenthesis and calculate the integrals term by term. For the first one we use the Liouville equation $|z|^{2\beta}e^{2\phi}=-4\partial_z\partial_{\bar z}\phi$ and then Green's theorem to get
\begin{equation}\label{eqnJul25}
\begin{aligned}
\int_{\Bbb C} & \bigl(\log|z|\bigr) |z|^{2\beta}e^{2\phi}\frac{dz\wedge d\bar z}{-2i}\\& =\lim_{\epsilon\to0+}\left(\oint_{|z|=1/\epsilon}+\oint_{|z|=\epsilon}\right)\Bigl(\log|z|\partial_{\vec n}\phi(z)-\phi(z)\partial_{\vec n}\log|z|\Bigr)|dz|
\\ & = 2\pi \Bigl(\log\frac {2\beta+2}{1+|\mu|^2}-\log(2\beta+2)  \Bigr)=- 2\pi\log(1+|\mu|^2). 
\end{aligned}
\end{equation}

For the second term we first observe that 
\begin{equation}\label{intformula2}
\int_{\Bbb C}\psi e^{2\psi}\frac{dz\wedge d\bar z}{-2i}=4\pi(\log 2-1)
\end{equation}
(e.g. by passing to the polar coordinates $(r,\theta)=(|z|,\arg z)$) and then by changing the variable $z\mapsto f(z)$ we obtain 
$$
 \int_{\Bbb C} (\psi\circ f) |z|^{2\beta} e^{2\phi} \frac{dz\wedge d\bar z}{-2i}=4\pi(\log 2 -1)(\beta+1).
$$

In the third term we represent $|z^{-\beta}f'|$ as $|f|^2(\beta+1)|z|^{-2\beta-2}$  and get
$$ \begin{aligned}
\int_{\Bbb C}\Bigl(\log|z^{-\beta}f'| \Bigr)|z|^{2\beta}e^{2\phi}\frac{dz\wedge d\bar z}{-2i}=2\int_{\Bbb C}\Bigl(\log|f| \Bigr)|z|^{2\beta}e^{2\phi}\frac{dz\wedge d\bar z}{-2i}\\+4\pi(\beta+1)\log(\beta+1)+4\pi(\beta+1)\log(1+|\mu|^2),
\end{aligned}
$$
where the integral in the right hand side is zero (as it follows e.g. from~\eqref{eqnJul25} with $\mu=\beta=0$  after  the change of variables $z\mapsto f(z)$). 

Next we use the Liouville equation for $\phi$ and $\psi$ and Green's formula  to evaluate fourth and sixth terms together. We have
$$
\begin{aligned}
 -& \int_{\Bbb C}\psi |z|^{2\beta}e^{2\phi}\frac{dz\wedge d\bar z}{-2i}+\int_{\Bbb C}\phi e^{2\psi}\frac{dz\wedge d\bar z}{-2i}
 =-\lim_{\epsilon\to0+} \oint_{|z|=1/\epsilon}\Bigl(\psi\partial_{\vec n}\phi-\phi\partial_{\vec n}\psi\Bigr)|dz| 
 \\
 &=-2\pi\left(2(\beta+1)\log 2 -2\log\frac {2\beta+2}{1+|\mu|^2}\right).
\end{aligned}
$$
The fifth term can be integrated as in~\eqref{eqnJul25} and gives zero. For the seventh term see~\eqref{intformula2}.  
 
In total for the integral part we  get
\begin{equation*}\label{1thu25}
\int_{\Bbb C}\varphi|z|^{2\beta}e^{2\phi}\frac{dz\wedge d\bar z}{-2i}+\int_{\Bbb C}\varphi e^{2\psi}\frac{dz\wedge d\bar z}{-2i}
= 2\pi\beta\log(1+|\mu|^2)-4\pi \beta+4\pi(\beta+2)\log(\beta+1).
\end{equation*}
For the non-integral part of~\eqref{Pol}  we obtain
\begin{equation*}\label{2thu25}
\frac  1 6 {\sum_{j=1}^2 \beta}\left(\frac{\phi_j(0)}{\beta+1}-\psi_j(0)\right)- 2 C(\beta)
=\frac  1 6 { \beta}\left(\frac{\log(2\beta+2)}{\beta+1}+ \frac{\log\frac{2\beta+2}{1+|\mu|^2}}{\beta+1}-2\log 2 \right)- 2 C(\beta). 
\end{equation*}
It remains to notice that for the standard (smooth) curvature one sphere of area $A_0=4\pi$ one has 
\begin{equation}\label{WeisSphere}\log\det\Delta_0=1/2-4\zeta_R'(-1);
\end{equation}
see e.g.~\cite[p. 204]{OPS}.  This together with \eqref{Pol} and~\eqref{Cbeta} leads to~\eqref{zetaPrimeSphere} with $K_\varphi=1$.  

 In order to include into consideration the case $0<K_\varphi\neq 1$ we do the change of variables  $z\mapsto (K_\varphi)^{\frac 1 {2\beta+2}} z$ in the curvature one metric $m_\varphi$,  then divide the resulting metric by $K_\varphi$. In accordance with the rescaling argument  this decreases the value of $\zeta'(0)$ by $\zeta(0) \log K_\varphi$. It remains to note that for a sphere with two conical singularities of order  $\beta$  Corollary~\ref{z0} gives 
$
\zeta(0)  = \frac 1 6 \left(\beta+1+\frac 1{\beta+1}\right)-1
$.
\end{proof}

In the remaining part of this section we study extremal properties of the determinant $\det\Delta^\varphi$ on the constant curvature spheres  $(\Bbb CP^1,m_\varphi)$  with two conical singularities as a function of $\mu$ and $\beta$ while the area $A_\varphi=4\pi$  remains  fixed.  The Gauss-Bonnet theorem~\cite{Troyanov Curv} reads $K_\varphi=\beta+1$ and  from~\eqref{zetaPrimeSphere} we immediately obtain 
\begin{equation}\label{19:15}\begin{aligned}
\log & \det \Delta^{\varphi}_{\text{Area } 4\pi}= - \frac 1 6\left(\beta+1-\frac 1 {\beta+1}\right)\log\left(1+\frac{|\mu|^2}{\beta+1}\right)
\\& - \left(1+\frac 1 6\left(\beta+1+\frac 1 {\beta+1}\right)\right)\log (\beta+1)-4\zeta_B'(0;\beta+1,1,1)+\frac {\beta+1} 2.
\end{aligned}
\end{equation}

Clearly $\det \Delta^{\varphi}_{\text{Area } 4\pi}$ monotonically goes to zero as $|\mu|$ increases and $\beta\in\Bbb N$ remains fixed. 
If $\mu=0$ and  $\beta\to -1^+$ (or $\mu\in[0,\infty)$ and $\beta\to +\infty$), then  the value of $\det \Delta^{\varphi}_{\text{Area } 4\pi}$ increases without any bound (this can be easily seen from~\eqref{19:15} and available asymptotic expansions of the Barnes zeta function, see e.g.~\cite{Matsumoto},~\cite{Spreafico2} or~\cite[A.6]{Klevtsov}). Namely, we have
\begin{equation}\label{ASE}
\begin{aligned}
\log \det \Delta^{\varphi}_{\text{Area } 4\pi}=-\frac 1 {6(\beta+1)}\log(\beta+1)& -\frac 1 {\beta+1}\left(\frac 1 3 -4\zeta'_R(-1)\right)-\log \frac {\beta+1} {2\pi} 
\\
&-\frac 1 6 (\beta+1)\log(\beta+1) +O(\beta+1)\text{ as } \beta\to -1^+,\\
\log \det \Delta^{\varphi}_{\text{Area } 4\pi}=- \frac 1 6\left(\beta+1-\frac 1 {\beta+1}\right)& \log\left(1+\frac{|\mu|^2}{\beta+1}\right)\\
+\frac 1 {6}\left(\beta+1+\frac 1 {\beta+1}\right)& \log (\beta+1)+\left(\frac 1 6+ 4\zeta'_R(-1)\right)(\beta+1)
\\ &\hspace{2.4cm}+ \log {2\pi}+ O(1/\beta)\text{ as } \beta\to+\infty.
\end{aligned}
\end{equation}
In particular this demonstrates that  the  well-known inequality $$\det \Delta^{\varphi}_{\text{Area } 4\pi}\leq \exp(1/2-4\zeta_R'(-1))$$ valid for all smooth metrics $m_\varphi$ on $\Bbb CP^1$ of area $4\pi$ with equality iff $m_\varphi=4(1+|z|^2)^{-2}|dz|^2$ (i.e. iff  $(\Bbb CP^1,m_\varphi)$  is  isometric to the  standard round curvature one sphere $x_1^2+x_2^2+x_3^2=1$ in $\Bbb R^3$, cf.~\eqref{WeisSphere},~\cite[Corollary 1.(a)]{OPS}),  is no longer true even for constant positive curvature metrics with conical singularities.  It is interesting to note however that the case $\beta=0$ corresponds to the  standard round curvature one sphere  and provides $\det \Delta^{\varphi}_{\text{Area } 4\pi}$ from~\eqref{19:15} with the local maximum $\det \Delta^{\varphi}_{\text{Area } 4\pi}\Bigr |_{\beta=0}= \exp(1/2-4\zeta_R'(-1))$. Indeed, we have

\begin{proposition}\label{extremal} The determinant of Friederichs Laplacian on the metrics of constant curvature on a  sphere with two conical singularities and normalized area  $4\pi$ is unbounded from above and  attains its local maximum on  the metric of standard round sphere. Moreover, the following asymptotic expansion is valid
$$
 \begin{aligned}
\log \det \Delta^{\varphi}_{\text{Area } 4\pi}=\frac 1 2-4\zeta'_R(-1) - \left(\frac \gamma 3+\frac 1 9\right)\beta^2+\left(\frac \gamma 3+\frac 7 {36}\right)\beta^3+O\bigl(\beta^4\bigr)\text{ as } \beta\to 0,
\end{aligned}
$$
where  $\gamma=-\Gamma'(1)$ and $\beta$ is the order (i.e. $2\pi(\beta+1)$ is the angle) of the antipodal conical singularities;  
 see also Fig.~\ref{F1} for a graph of $\det \Delta^{\varphi}_{\text{Area } 4\pi}$. 
 \end{proposition}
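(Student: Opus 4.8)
The plan is to reduce the assertion to a one–variable question about the explicit formula~\eqref{19:15} and then Taylor–expand. By the Gauss--Bonnet theorem $K_\varphi=\beta+1$, so~\eqref{19:15} presents $\log\det\Delta^\varphi_{\text{Area }4\pi}$ as an explicit function of $\beta>-1$ and $\mu\in[0,\infty)$, subject to ``either $\beta\in\Bbb N$ or $\mu=0$''. For fixed $\beta\in\Bbb N$ this function is strictly decreasing in $|\mu|$, and for $\beta$ in a punctured neighbourhood of $0$ the only admissible value is $\mu=0$ (at $\beta=0$ every $\mu$ yields the smooth round sphere). Hence both the unboundedness and the local maximality at the round sphere follow from the behaviour of $g(\beta):=\log\det\Delta^\varphi_{\text{Area }4\pi}\big|_{\mu=0}$. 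The unboundedness from above is then immediate from~\eqref{ASE}: letting $\beta\to-1^+$, the term $-\tfrac1{6(\beta+1)}\log(\beta+1)$ dominates the right–hand side of~\eqref{ASE} and forces $g(\beta)\to+\infty$ (alternatively one may let $\beta\to+\infty$).

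For the local maximum and the asymptotics I would Taylor–expand the right–hand side of~\eqref{19:15} with $\mu=0$ about $\beta=0$. Every term is elementary except $\zeta_B'(0;\beta+1,1,1)$, so the crux is to expand $h(a):=\zeta_B'(0;a,1,1)$ about $a=1$ (recall $h(1)=\zeta_R'(-1)$). I would do this starting from the Hurwitz–zeta representation $\zeta_B(s;a,1,1)=\sum_{m\geq0}\zeta_H(s,am+1)$ with $\zeta_H(s,x)=\sum_{k\geq0}(k+x)^{-s}$, which converges for $\Re s>2$ and extends to a function jointly analytic in $(s,a)$ near $(0,1)$. Differentiating in $a$ via $\partial_x\zeta_H(s,x)=-s\,\zeta_H(s+1,x)$ and summing the resulting finite power sums $\sum_{n=1}^{k-1}n^{j}$ against $\sum_{k\geq2}k^{-s-\bullet}$ produces, at $a=1$, the meromorphic identities
\begin{align*}
\partial_a\zeta_B(s;a,1,1)\big|_{a=1}&=-\tfrac s2\bigl(\zeta_R(s-1)-\zeta_R(s)\bigr),\\
\partial_a^2\zeta_B(s;a,1,1)\big|_{a=1}&=\tfrac{s(s+1)}{6}\bigl(2\zeta_R(s-1)-3\zeta_R(s)+\zeta_R(s+1)\bigr),\\
\partial_a^3\zeta_B(s;a,1,1)\big|_{a=1}&=-\tfrac{s(s+1)(s+2)}{4}\bigl(\zeta_R(s-1)-2\zeta_R(s)+\zeta_R(s+1)\bigr).
\end{align*}
Differentiating these in $s$ at $s=0$ --- this is where the simple pole $\zeta_R(s+1)=s^{-1}+\gamma+O(s)$ injects the Euler constant $\gamma$ --- and using $\zeta_R(0)=-\tfrac12$, $\zeta_R(-1)=-\tfrac1{12}$, gives $h'(1)=-\tfrac5{24}$, $h''(1)=\tfrac7{18}+\tfrac\gamma6$, and $h'''(1)=-\tfrac{29}{24}-\tfrac\gamma2$. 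Plugging $h(\beta+1)=\zeta_R'(-1)+h'(1)\beta+\tfrac12h''(1)\beta^2+\tfrac16h'''(1)\beta^3+O(\beta^4)$ into~\eqref{19:15} and combining with the elementary Taylor expansions of $\log(\beta+1)$, $(\beta+1)^{-1}$ and $\tfrac12(\beta+1)$ yields the stated expansion: the constant term is $\tfrac12-4\zeta_R'(-1)$ (consistent with~\eqref{WeisSphere}), the linear term vanishes, the quadratic coefficient equals $-\bigl(\tfrac\gamma3+\tfrac19\bigr)$, and the cubic coefficient equals $\tfrac\gamma3+\tfrac7{36}$. Since $\gamma>0$ the quadratic coefficient is strictly negative, so $\beta=0$ is a strict local maximum of $g$; that is, the round sphere is a local maximum of $\log\det\Delta^\varphi_{\text{Area }4\pi}$.

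The routine parts here are the finite power–sum bookkeeping and the elementary Taylor expansions. The one point that needs genuine care is the analytic justification that $\zeta_B(s;a,1,1)$ is jointly analytic in $(s,a)$ near $(0,1)$, so that $a$–differentiation and the continuation in $s$ to the regular point $s=0$ may be interchanged as above (one can either invoke the type–A analytic family argument already used in the proof of Lemma~\ref{KeyLemma}, or observe that each $a$–derivative at $a=1$ has just been exhibited in explicitly meromorphic form in $s$ with $s=0$ regular), together with correctly tracking the $\gamma$–term coming from the pole of $\zeta_R$ at $1$.
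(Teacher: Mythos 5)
Your proposal is correct, and all of your numerical outputs agree with the paper: the derivatives $h'(1)=-\tfrac5{24}$, $h''(1)=\tfrac7{18}+\tfrac\gamma6$, $h'''(1)=-\tfrac{29}{24}-\tfrac\gamma2$ reproduce exactly the expansion $\zeta_B'(0;\beta+1,1,1)=\zeta_R'(-1)-\tfrac5{24}\beta+(\tfrac\gamma{12}+\tfrac7{36})\beta^2-(\tfrac\gamma{12}+\tfrac{29}{144})\beta^3+O(\beta^4)$ that the paper uses, and substituting into~\eqref{19:15} gives the stated coefficients. The difference lies in how that expansion of the Barnes zeta derivative is obtained. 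The paper takes it from the Aurell--Salomonson representation $\zeta_B'(0;a,1,1)=\tfrac1{12}(a+\tfrac1a)\gamma-\tfrac1{12}(\tfrac1a+3+a)\log a+\tfrac5{24}a-\tfrac14\log(2\pi)+J(a)$ together with the known expansion of $J'(a)$ about $a=1$ quoted from~\cite{Au-Sal}; you instead derive the Taylor coefficients from scratch by differentiating $\zeta_B(s;a,1,1)=\sum_{m\geq0}\zeta_H(s,am+1)$ in $a$, resumming the finite power sums into the meromorphic identities in $\zeta_R$, and reading off the $s$-derivative at $s=0$ (with the Euler constant entering through the pole of $\zeta_R(s+1)$). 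Your route is self-contained and avoids importing the integral $J(a)$ and its expansion from the physics literature, at the cost of having to justify the joint analyticity of $(s,a)\mapsto\zeta_B(s;a,1,1)$ near $(0,1)$ so that the $a$-differentiation commutes with the continuation in $s$; you correctly flag this as the one delicate point, and it does hold (e.g.\ via the standard Mellin--Barnes continuation of $\zeta_B$, which is jointly analytic in the parameters, or the type-A family argument you mention). You also make explicit a point the paper leaves implicit, namely why the local-maximum claim reduces to the one-variable function at $\mu=0$ (for $\beta\notin\Bbb N$ near $0$ only $\mu=0$ is admissible, and at $\beta=0$ the $\mu$-dependence in~\eqref{19:15} drops out); the unboundedness argument via~\eqref{ASE} is the same as the paper's.
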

\begin{proof} The expansion follows from~\eqref{19:15} (with $\mu=0$) and  the asymptotic expansion
$$\begin{aligned}
\zeta_B'(0;\beta+1,1,1)=\zeta'_R(-1) -\frac {5}{24}\beta+ \left( \frac \gamma{12}+{\frac {7}{
36}} \right) \beta ^{2}-\left( \frac \gamma{12}+{\frac {
29}{144}} \right)  \beta ^{3} +O\bigl(\beta^4\bigr)
\end{aligned}
$$
obtained by means of the representation
$$
\zeta_B'(0;a,1,1)=\frac 1 {12}\left(a+\frac 1 a \right)\gamma-\frac 1 {12}\left(\frac 1 a +3+a\right)
\log a+\frac 5 {24}a-\frac 1 4 \log (2\pi)+ J(a).$$
Here 
\begin{equation}\label{Ja}
J(a)=\int_0^\infty\frac 1 {e^x-1}\left[\frac 1 {2x} {\coth\frac x{2a}}-\frac a  4 {\csch^2\frac x 2}-\frac 1 {12}\left(a+\frac1 a \right)\right]\,dx,
\end{equation}
$$
J'(a)=-\frac 1 {36}(a-1)+\frac 1 {16}(a-1)^2+O\bigl((a-1)^3\bigr);
$$
see~\eqref{Z'_a(0)} and~\cite[formulas (54) and (85)--(88)]{Au-Sal}. 

The determinant is unbounded from above due to~\eqref{ASE}.
\end{proof}
We end this subsection with remark that (in contrast to the result in Proposition~\ref{extremal}) the determinant of Friederichs Laplacian on the constant curvature spheres with two conical singularities and normalized curvature does not attain a local maximum on the metric of standard (smooth) round sphere. The corresponding asymptotic  for the right hand side of~\eqref{zetaPrimeSphere} as $\beta\to0$ can be easily obtained in the same way as in Proposition~\ref{extremal}, we leave it to the reader.

 \begin{figure} \centering\includegraphics[scale=.25]{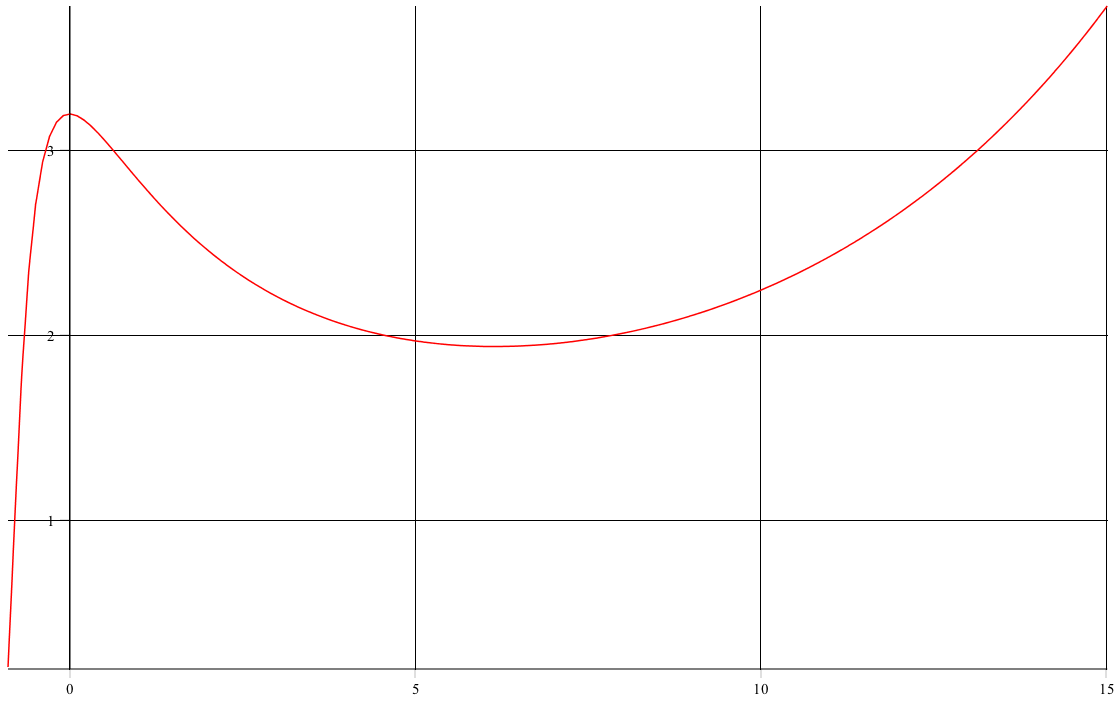}
  \caption{Metrics of constant curvature on a  sphere  with two antipodal conical singularities of angle $2\pi(\beta+1)$ and normalized area $4\pi$: a graph of $\det \Delta^{\varphi}_{\text{Area } 4\pi}$ as a function of $\beta>-0.89$ when $\mu=0$.}
  \label{F1}
\end{figure}

\subsection{Flat metrics}\label{FM}

 Consider the Riemann sphere $\Bbb CP^1$ with flat (curvature zero) metric 
\begin{equation}\label{flat}
m_\varphi=\prod_{j=1}^n|z-p_j|^{2\beta_j}|dz|^2
\end{equation} 
with $n\geq 3$ distinct conical singularities $p_j\in\Bbb C$ of order $\beta_j>-1$, $\sum \beta_j=|\pmb\beta|=-2$.
 \begin{proposition} For the determinant (of the Friederichs selfadjoint  extension) of the Laplacian $\Delta^\varphi$ on the Riemann sphere $\Bbb CP^1$ with flat conical  metric~\eqref{flat} we have
\begin{equation}\label{A-S4-}
 \begin{aligned}
\log\frac{\det \Delta^\varphi}{A_\varphi}
=&\frac  1 6 \sum_{j=1}^n\sum_{i=1, i\neq j}^n\frac{\beta_i \beta_j}{\beta_j+1}\log|p_i-p_j|
\\
&-\sum_{j=1}^nC(\beta_j)  -4\zeta_R'(-1)-\frac 4 3 \log 2+\frac 1 6-\log \pi,
\end{aligned}
\end{equation}
where $A_\varphi=\int_{\Bbb C}\prod_{j=1}^n|z-p_j|^{2\beta_j}\frac{dz\wedge d\bar z}{-2i}<\infty$ is the total area of $(\Bbb CP^1,m_\varphi)$. 
\end{proposition}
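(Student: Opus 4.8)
The plan is to apply the Polyakov type comparison formula~\eqref{Pol} of Theorem~\ref{main} with the standard round curvature one metric $m_0=e^{2\psi}|dz|^2$, $\psi(z)=\log 2-\log(1+|z|^2)$, as the reference, for which $K_0=1$, $A_0=4\pi$, and $\log\det\Delta^0=1/2-4\zeta_R'(-1)$ by~\eqref{WeisSphere}. First I would record the local data. Since $|\pmb\beta|=\sum_j\beta_j=-2$, passing to the parameter $w=1/z$ shows that $m_\varphi$ extends to a smooth nonvanishing metric near $z=\infty$, so $\supp\pmb\beta=\{p_1,\dots,p_n\}\subset\Bbb C$ and $A_\varphi<\infty$ (as $\beta_j>-1$ gives local integrability at each $p_j$). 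In the parameter $x=z-p_j$ centred at $p_j$ we have $m_\varphi=|x|^{2\beta_j}e^{2\phi_j}|dx|^2$ with $\phi_j(x)=\sum_{i\neq j}\beta_i\log|x+p_j-p_i|$ harmonic near $x=0$ (so $m_\varphi$ is flat, hence admissible, near $p_j$) and $m_0=e^{2\psi_j}|dx|^2$ with $\psi_j=\psi$; thus $\phi_j(0)=\sum_{i\neq j}\beta_i\log|p_i-p_j|$ and $\psi_j(0)=\log2-\log(1+|p_j|^2)$. With $\varphi=\sum_j\beta_j\log|z-p_j|-\psi$ and $K_\varphi\equiv0$, formula~\eqref{Pol} becomes
\begin{equation*}
\log\frac{(\det\Delta^\varphi)/A_\varphi}{(\det\Delta^0)/A_0}=-\frac1{12\pi}\int_{\Bbb C}\varphi\,dA_0+\frac16\sum_{j=1}^n\beta_j\left(\frac{\phi_j(0)}{\beta_j+1}-\psi_j(0)\right)-\sum_{j=1}^nC(\beta_j),
\end{equation*}
so the whole proof reduces to evaluating one integral and collecting constants.

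Next I would compute $\int_{\Bbb C}\varphi\,dA_0=\sum_j\beta_j\int_{\Bbb C}\log|z-p_j|\,dA_0-\int_{\Bbb C}\psi\,dA_0$; the last term equals $4\pi(\log2-1)$ by~\eqref{intformula2}. For the first, I would repeat the Green's theorem argument used for~\eqref{eqnJul25}, now with the Liouville equation $-4\partial_z\partial_{\bar z}\psi=e^{2\psi}$ of the curvature one metric: integrating $\int(\log|z-p|)(-4\partial_z\partial_{\bar z}\psi)\,dA$ by parts over the annulus $\delta\le|z-p|\le R$ and letting $\delta\to0$, $R\to\infty$, the small circle around $p$ contributes $2\pi\psi(p)$ and the large circle contributes $-2\pi\log2$, whence
$$
\int_{\Bbb C}\log|z-p|\,dA_0=2\pi\log2-2\pi\psi(p)=2\pi\log(1+|p|^2).
$$
Therefore $\int_{\Bbb C}\varphi\,dA_0=2\pi\sum_j\beta_j\log(1+|p_j|^2)-4\pi\log2+4\pi$, so $-\frac1{12\pi}\int_{\Bbb C}\varphi\,dA_0=-\frac16\sum_j\beta_j\log(1+|p_j|^2)+\frac13\log2-\frac13$.

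Finally I would assemble the pieces. Using $\psi_j(0)=\log2-\log(1+|p_j|^2)$ and $\sum_j\beta_j=-2$, the conical sum equals
$$
\frac16\sum_{j=1}^n\beta_j\left(\frac{\phi_j(0)}{\beta_j+1}-\psi_j(0)\right)=\frac16\sum_{j=1}^n\sum_{i\neq j}\frac{\beta_i\beta_j}{\beta_j+1}\log|p_i-p_j|+\frac13\log2+\frac16\sum_{j=1}^n\beta_j\log(1+|p_j|^2),
$$
and adding this to the integral term makes the $\sum_j\beta_j\log(1+|p_j|^2)$ contributions cancel, leaving $\frac16\sum_{j}\sum_{i\neq j}\frac{\beta_i\beta_j}{\beta_j+1}\log|p_i-p_j|+\frac23\log2-\frac13-\sum_jC(\beta_j)$. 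Combining with $\log\frac{\det\Delta^0}{A_0}=\frac12-4\zeta_R'(-1)-2\log2-\log\pi$ and simplifying the constants ($\frac12-\frac13=\frac16$, $\frac23\log2-2\log2=-\frac43\log2$) yields exactly~\eqref{A-S4-}.

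The one step that needs genuine care is the integral $\int_{\Bbb C}\log|z-p|\,dA_0$: the Green's theorem computation is routine, but the boundary contributions at $z=p$ and at $z=\infty$ must be tracked with the correct orientation and normalization. As a check, the answer vanishes at $p=0$ by the $r\mapsto 1/r$ symmetry of $\int_0^\infty r\log r\,(1+r^2)^{-2}\,dr$ and grows like $\log|p|^2$ as $p\to\infty$. Everything else is elementary algebra; the hypotheses $n\ge3$ and $\sum_j\beta_j=-2$ with $\beta_j>-1$ are precisely what guarantee that $z=\infty$ is a regular point of $m_\varphi$ and that $A_\varphi$ is finite.
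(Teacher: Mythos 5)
Your proposal is correct and follows essentially the same route as the paper: apply the Polyakov-type formula~\eqref{Pol} with the round curvature-one metric as reference, evaluate the remaining integral via the Liouville equation $-4\partial_z\partial_{\bar z}\psi=e^{2\psi}$ and Green's theorem, and collect the constants using $\sum_j\beta_j=-2$ and~\eqref{WeisSphere}. Your closed form $\int_{\Bbb C}\log|z-p|\,dA_0=2\pi\log(1+|p|^2)$ matches the paper's boundary-term computation in~\eqref{A-S3} (the paper bundles the circles around all $p_j$ and at infinity into one limit rather than isolating each $p$), and the final bookkeeping agrees with~\eqref{A-S4-}.
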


\begin{proof}  For the metric~\eqref{flat} we have $m_\varphi=e^{2\chi}|dz|^2$, where
$
\chi(z)=\sum\beta_j\log|z-p_j|$.
 As the reference metric  $m_0=e^{2\psi}|dz|^2$ we take the standard curvature one metric on $\Bbb CP^1$, i.e. $\psi(z)=\log2-\log(1+|z|^2)$. Thus  $A_0=4\pi$, $K_0=1$, and  $\det \Delta^0$ is given by~\eqref{WeisSphere}. 
Clearly, $K_\varphi=0$, $\varphi=\chi-\psi$, and the formula~\eqref{Pol} in Theorem~\ref{main} takes the form
 \begin{equation}\label{A-S1}
 \begin{aligned}
\log(\det &\Delta^\varphi/A_\varphi)+4\zeta_R'(-1)-1/2+\log(4\pi)
=-\frac{1}{12\pi}\int_{\Bbb C} \bigl(\chi-\psi\bigr) e^{2\psi} \frac{dz\wedge d\bar z}{-2i} 
\\ &+\frac  1 6 {\sum_{j=1}^n\beta_j}\left(\frac{1}{\beta_j+1}\sum_{i=1,i\neq j}^n\beta_i\log|p_i-p_j|- \log\frac 2{1+|p_j|^2}\right)-\sum_{j=1}^nC(\beta_j).
\end{aligned}
\end{equation}

 We remove the parentheses in the integral and evaluate the first term: 
 \begin{equation}\label{A-S3}
\begin{aligned}
\frac{1}{12\pi}\int_{\Bbb C}\chi e^{2\psi } \frac{dz\wedge d\bar z}{-2i}=\frac{1}{12\pi}\int_{\Bbb C}\chi (-4\partial_z\partial_{\bar z}\psi) \frac{dz\wedge d\bar z}{-2i}
\\
={\frac{1}{12\pi}\lim_{\epsilon\to0}\left(\oint_{|z|=1/\epsilon}+\sum_{j=1}^n\oint_{|z-p_j|=\epsilon}\right)\Bigl(\chi \partial_{\vec n}\psi-\psi \partial_{\vec n}\chi \Bigr)|dz|}
\\
=\frac{1}{12\pi}\sum_{j=1}^n\beta_j\lim_{\epsilon\to0}\left(\int_0^{2\pi}\Bigl(\epsilon\log 2 +o(\epsilon)\Bigr)\frac 1 \epsilon\,d\theta-\int_0^{2\pi}\Bigl(\psi(p_j)\frac 1 \epsilon +O(1)\Bigr)\epsilon d\theta\right)
\\
=-\frac 1 3 \log 2 -\frac{1}{6}\sum_{j=1}^n\beta_j  \log\frac 2{1+|p_j|^2}.
\end{aligned}
\end{equation}
Now~\eqref{A-S1},~\eqref{A-S3}, and~\eqref{intformula2}  imply~\eqref{A-S4-}. 
\end{proof}
\begin{remark} The formula~\eqref{Cbeta} for $C(\beta)$ together with  the identity $\sum\beta_j=-2$ allows to write~\eqref{A-S4-} in the form
\begin{equation}\label{A-S4}
 \begin{aligned}
\log\frac{\det \Delta^\varphi}{A_\varphi}
=&\frac  1 6 \sum_{j=1}^n\sum_{i=1, i\neq j}^n\frac{\beta_i \beta_j}{\beta_j+1}\log|p_i-p_j|
\\
&-\sum_{j=1}^n\left(2Z'_{\beta_j+1}(0)+\frac 1 2 \log(\beta_j+1)\right)- \log 2,
\end{aligned}
\end{equation}
where $Z'_{\beta+1}(0)$ is given by 
\begin{equation}\label{Z'_a(0)}
Z'_{\beta+1}(0)=\zeta'_B(0;\beta+1,1,1)-(\beta+1)\zeta'_R(-1)+\frac 1 {12} \Bigl(\beta+1 -\frac{1}{\beta+1}\Bigr) \log 2-\frac {\beta} 4 \log 2\pi.
\end{equation}
We note that~\eqref{A-S4} coincides with the celebrated Aurell-Salomonson formula~\cite[formula (50), where $\beta_j$ (resp. $p_j$) is denoted by  $-\beta_j$ (resp. $w_j$), and $\text{Area}=A({\cal M})=A_\varphi$]{Au-Sal 2}. One of equivalent definitions for $Z'_{\beta+1}(0)$ in~\cite{Au-Sal,Au-Sal 2} reads
\begin{equation}\label{ZIR}
\begin{aligned}
Z'_a(0)=\frac 1 {12}\left(\frac 1 a -a \right)(\gamma -&\log 2)-\frac 1 {12}\left(\frac 1 a +3+a\right)\log a +J(a)
\\
&-a\left(-\frac 1 6 \gamma-\frac 5 {24}+\frac 1 4 \log(2\pi)+\zeta'_R(-1)\right),
\end{aligned}
\end{equation}
where $\gamma=-\Gamma'(1)$ and $J(a)$ is the same as in~\eqref{Ja}; see~\cite[formulas (53), (54) and (85)--(87)]{Au-Sal}. One can easily check that for all rational numbers $a$ the values of $Z'_{a}(0)$ defined by~\eqref{Z'_a(0)} and~\eqref{ZIR} coincide  (we use Lemma~\ref{Barnes} in Appendix~\ref{BarnesPV} to evaluate~\eqref{Z'_a(0)}; for the evaluation of~\eqref{ZIR} we refer to~\cite[formula (102)]{Au-Sal}). Thus due to analytic regularity of $\Bbb R_+\ni a\mapsto Z'_a(0)$ the definitions~\eqref{Z'_a(0)} and~\eqref{ZIR} are equivalent. To the best of our knowledge, this is the first rigid mathematical proof of the Aurell-Salomonson formula.

\end{remark}

It is interesting to note that in~\cite{Khuri2} it was shown that the determinant $\det \Delta^\varphi$ is real analytic on the moduli space of suitably restricted metrics~\eqref{flat}.  Let us also note that if $m_0$ is the smooth hyperbolic (curvature $K_0=-1$) metric on a surface $M$ (of genus greater than one)  and $m_\varphi$ is a conformally equivalent flat  metric with conical singularities, then  Theorem~\ref{main}.1 returns the result of~\cite[Cor. 6.2]{McPk}. We end this subsection  with remark  that a formula for the derivative of the determinant of the Friederichs Dirichlet Laplacian on a flat circular convex sector in $\Bbb R^2$ with respect to the opening angle was recently found in~\cite{Clara-Rowlett}, though the results and methods of the latter paper seem not to be directly related to those presented here. 
   
\subsection{Constant curvature metric disks}\label{CMD}
Consider the unit disk $|z|\leq 1$ endowed with the metric
\begin{equation}\label{DilAn}
m_\varphi=|z|^{2\beta}e^{2\phi}|dz|^2
, \quad \phi(z)=\log 2 -\log(1+K|z|^{2\beta+2}),
\end{equation}
where  $K>-1$, $K_\varphi=(\beta+1)^2K$ is the curvature, and $2\pi(\beta+1)> 0 $ is the angle of conical singularity at $z=0$. 

\begin{proposition} Let $K>-1$ and $\beta>-1$. Then for the determinant  of (the Friederichs extension of) the Dirichlet Laplacian $\Delta^\varphi$ on the disk $|z|\leq 1$ with metric~\eqref{DilAn} we have 
\begin{equation}\label{CCD}
\begin{aligned}
\log\det\Delta^\varphi  = - 2\zeta'_B(0;\beta+1,1,1)-&  \frac 1 2 \log(\beta+1)\\ & +\frac{11K-5}{12(1+K)}(\beta+1)-\frac 1 2 \log(2\pi).
 \end{aligned}
 \end{equation}
\end{proposition}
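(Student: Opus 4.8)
The plan is to apply the two–singular–metric comparison formula, Corollary~\ref{2conical}.2, on the disk $M=\{|z|\le 1\}$ taking as reference metric the \emph{flat cone} metric $m_0=4|z|^{2\beta}|dz|^2$ rather than a smooth one. This is legitimate: the metric~\eqref{DilAn} has constant Gaussian curvature $K_\varphi=(\beta+1)^2K$ and is therefore admissible in the sense of Definition~\ref{D A} (Remark~\ref{CC}), while $m_0$, being flat near $z=0$, is trivially admissible; moreover $m_0$ and $m_\varphi=e^{2\varphi}m_0$, with $\varphi=\phi-\log 2=-\log\bigl(1+K|z|^{2\beta+2}\bigr)$, represent the \emph{same} divisor $\pmb\alpha=\pmb\beta=\beta\cdot 0$, whose support $\{0\}$ is disjoint from $\partial M=\{|z|=1\}$. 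Choosing $x=z$ as local parameter at the origin gives $\alpha_1=\beta_1=\beta$ and $\psi_1(0)=\phi_1(0)=\log2$, so in Corollary~\ref{2conical}.2 both $\frac16\sum_j\bigl\{\beta_j\bigl(\tfrac{\phi_j(0)}{\beta_j+1}-\psi_j(0)\bigr)-\alpha_j\bigl(\tfrac{\psi_j(0)}{\alpha_j+1}-\phi_j(0)\bigr)\bigr\}$ and $\sum_j\bigl(C(\beta_j)-C(\alpha_j)\bigr)$ vanish identically. Finally, since $\zeta_<(s,\beta)$ is by definition the spectral zeta function of the Friederichs Dirichlet Laplacian of $(D_1,4|z|^{2\beta}|dz|^2)$, one has $\log\det\Delta^0=-\zeta'_<(0,\beta)$ with $\zeta'_<(0,\beta)$ given explicitly by~\eqref{Spr+}; thus only the integral terms of Corollary~\ref{2conical}.2 remain to be computed.

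Since $K_0=0$ and $K_\varphi=(\beta+1)^2K$ is constant, the only interior term is $\int_M K_\varphi\varphi\,dA_\varphi=(\beta+1)^2K\int_M\varphi\,dA_\varphi$; passing to polar coordinates and substituting first $u=|z|^{2\beta+2}$ and then $v=1+Ku$ reduces $\int_M\varphi\,dA_\varphi$ to $\tfrac{4\pi}{(\beta+1)K}\int_1^{1+K}\tfrac{-\log v}{v^2}\,dv=\tfrac{4\pi}{(\beta+1)K}\cdot\tfrac{\log(1+K)-K}{1+K}$, hence $\int_M K_\varphi\varphi\,dA_\varphi=\tfrac{4\pi(\beta+1)(\log(1+K)-K)}{1+K}$. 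On the boundary, writing $m_0=e^{2\sigma}|dz|^2$ with $\sigma=\log2+\beta\log|z|$, I would record that along $\{|z|=1\}$ the $m_0$–arclength element is $ds_0=2\,d\theta$ (so the boundary has length $4\pi$), the outward $m_0$–unit normal is $\vec n=\tfrac12\partial_r$, and the geodesic curvature is $k_0=e^{-\sigma}(1+\partial_r\sigma)\big|_{|z|=1}=\tfrac{\beta+1}{2}$ (a value one may cross–check against Gauss–Bonnet for $(D_1,4|z|^{2\beta}|dz|^2)$); since $\varphi$ depends only on $|z|$, on $\{|z|=1\}$ it equals the constant $-\log(1+K)$ while $\partial_{\vec n}\varphi=\tfrac12\partial_r\varphi\big|_{|z|=1}=-\tfrac{K(\beta+1)}{1+K}$, so each of the three boundary integrals is its (constant) integrand times $4\pi$.

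Substituting these into Corollary~\ref{2conical}.2, the $(\beta+1)\log(1+K)$–contributions coming from $-\tfrac1{12\pi}\int_{\partial M}\varphi\partial_{\vec n}\varphi\,ds_0$, from $-\tfrac1{6\pi}\int_{\partial M}k_0\varphi\,ds_0$, and from the $\log(1+K)$–part of $-\tfrac1{12\pi}\int_M K_\varphi\varphi\,dA_\varphi$ cancel, leaving $\log\tfrac{\det\Delta^\varphi}{\det\Delta^0}=\tfrac{4K(\beta+1)}{3(1+K)}$. Adding $\log\det\Delta^0=-\zeta'_<(0,\beta)$ from~\eqref{Spr+} and merging $-\tfrac5{12}(\beta+1)$ with $\tfrac{4K(\beta+1)}{3(1+K)}$ into $\tfrac{-5(1+K)+16K}{12(1+K)}(\beta+1)=\tfrac{11K-5}{12(1+K)}(\beta+1)$ yields exactly~\eqref{CCD}. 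The computation is essentially mechanical; the points that genuinely need care are verifying that Corollary~\ref{2conical} applies with \emph{coinciding} divisors, so that all $C(\beta)$–terms and conical contributions cancel in pairs, and fixing the normalizations of $k_0$, $ds_0$ and $\vec n$ on $\partial D_1$ relative to the cone metric $m_0$. (Alternatively one may invoke Theorem~\ref{main}.2 with the smooth Euclidean reference $m_0=|dz|^2$ and the value of the Dirichlet determinant on the flat unit disk supplied by Lemma~\ref{SimpleKey}; then the $C(\beta)$ term and $\zeta'_B(0;1,1,1)=\zeta'_R(-1)$ survive, but~\eqref{Cbeta} reassembles them into the same formula~\eqref{CCD}.)
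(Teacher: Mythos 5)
Your proof is correct and reaches~\eqref{CCD} by a genuinely different route than the paper. The paper applies Theorem~\ref{main}.2 with the smooth flat reference $m_0=|dz|^2$, taking $\phi=-\log(1+K|z|^{2\beta+2})$ (so $K_\varphi=(2\beta+2)^2K$), restores the factor $4$ in~\eqref{DilAn} afterwards by the rescaling argument ($\zeta'(0)$ changes by $2\zeta(0)\log 2$), uses the explicit flat-disk value~\eqref{E1} for $\det\Delta^0$, and only at the very end reassembles the surviving $C(\beta)$ term with $\zeta_<(0,\beta)$ and $\zeta'_<(0,\beta)$ through~\eqref{Cbeta}. You instead invoke Corollary~\ref{2conical}.2 with the singular flat-cone reference $m_0=4|z|^{2\beta}|dz|^2$: since both metrics represent the same divisor $\beta\cdot 0$ with $\phi_1(0)=\psi_1(0)=\log 2$ in the same local parameter, all conical contributions cancel in pairs and $\log\det\Delta^0=-\zeta'_<(0,\beta)$ is read off directly from~\eqref{Spr+}, with no rescaling step and no $C(\beta)$ bookkeeping. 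What this buys is a shorter and more transparent computation: the $\log(1+K)$ contributions from the area integral and from the $\varphi\partial_{\vec n}\varphi$ and $k_0\varphi$ boundary integrals cancel exactly, leaving $\log(\det\Delta^\varphi/\det\Delta^0)=\frac{4K(\beta+1)}{3(1+K)}$, which combines with the $-\frac 5{12}(\beta+1)$ from~\eqref{Spr+} into $\frac{11K-5}{12(1+K)}(\beta+1)$ as claimed. The price is that you must handle the boundary geometry of a cone metric (your values $ds_0=2\,d\theta$, $\vec n=\frac12\partial_r$, $k_0=\frac{\beta+1}{2}$ are correct, and the Gauss--Bonnet cross-check is a good safeguard) and that you lean on Corollary~\ref{2conical}.2, which the paper derives from Theorem~\ref{main} and writes out in detail only for $\partial M=\varnothing$, asserting the boundary case is similar. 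I verified your integrals, the cancellation, and the final assembly; the argument is sound.
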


In the curvature zero case  this result was obtained in~\cite{Weisberger} (if there is no  conical singularity, i.e. $K=\beta=0$) and in~\cite{Spreafico} (if there is a conical singularity at $z=0$, i.e. $\beta>-1$ and $K=0$), cf.~\eqref{zeta at zero+} and~\eqref{Spr+}.  We also note that in the case $K=1$ and $\beta=0$ the metric disk is isometric to the unit hemisphere and the formulas~\eqref{zeta_at_zero},~\eqref{CCD} return the corresponding result, cf.~\cite[formulas (24)]{Weisberger}. In all other cases the result is new. 

\begin{proof}
 Let us take the flat metric $|dz|^2$ as the reference metric $m_0$ and do the calculations for $m_\varphi=|z|^{2\beta}e^{2\phi}|dz|^2$ with $$\phi(z)= -\log(1+K|z|^{2\beta+2}).$$ Then one can use the standard rescaling argument in order to add $\log 2$ to $\phi$ and obtain~\eqref{CCD} for the determinant corresponding to the metric in~\eqref{DilAn}; this will only decrease $\log\det\Delta^\varphi=-\zeta'(0)$ by $\zeta(0)2\log 2$, where 
$
\zeta(0)  =\frac{1}{12}\left(\beta+1+\frac 1 {\beta+1}\right)
$, 
cf.~\eqref{zeta_at_zero}.
   
    From the formula~\eqref{Pol-Alv} in Theorem~\ref{main} we get
\begin{equation*}
\begin{aligned}
&-\zeta'(0)-({1}/{3})\log 2+(1/2) \log(2\pi)+{5}/{12}+2\zeta_R'(-1)\\&=-\frac{1}{12\pi}\left(\int_{|z|\leq 1}K_\varphi \varphi |z|^{2\beta}e^{2\phi} \frac{dz\wedge d\bar z}{-2i}+\int_{|z|=1}\varphi\partial_{|z|}\varphi\,ds_0\right)\\&-\frac 1{6\pi}\int_{|z|=1}k_0\varphi\, ds_0
-\frac 1 {4\pi} \int_{|z|=1}\partial_{|z|}\varphi \,ds_0-C(\beta),
\end{aligned}
\end{equation*}
 where
$\varphi(z)=\beta\log|z|+\phi(z)$ and $K_\varphi=(2\beta+2)^2K$; see~\eqref{E1} for the value of $ \log \det \Delta^{0}$.  We have
$$
\begin{aligned}
\int_{|z|\leq 1}K_\varphi\varphi |z|^{2\beta}e^{2\phi} \frac{dz\wedge d\bar z}{-2i}=\beta\int_{|z|\leq 1}\log|z|(-&4\partial_z\partial_{\bar z}\phi)\frac{dz\wedge d\bar z}{-2i} 
\\& +(2\beta+2)^2K\int_{|z|\leq 1}\phi |z|^{2\beta}e^{2\phi} \frac{dz\wedge d\bar z}{-2i}.
\end{aligned}
$$
Here
$$
\int_{|z|\leq 1}\log|z|(-4\partial_z\partial_{\bar z}\phi)\frac{dz\wedge d\bar z}{-2i} =\oint_{|z|=1}\frac{\phi(z)}{|z|}|dz|-\lim_{\epsilon\to0+}\oint_{|z|=\epsilon}\frac{\phi(z)}{|z|}|dz|
=-2\pi\log(1+K)
$$
and
$$\begin{aligned}
\int_{|z|\leq 1}\phi |z|^{2\beta}e^{2\phi} \frac{dz\wedge d\bar z}{-2i}=\frac{-\pi}{\beta+1}\int_0^1(1+K u)^{-2}\log(1+Ku)\,du\\=\frac \pi{\beta+1} \frac{\log(1 + K)-K}{K(1+K)}.
\end{aligned}$$

Next we evaluate the integrals along the circle $|z|=1$ and get
$$
\int_{|z|=1}\varphi\partial_{|z|}\varphi\,ds_0={2\pi}\log(1+K)\left(\frac{(2\beta+2)K}{1+K}-\beta\right),\quad
\int_{|z|=1}\varphi\, ds_0=-2\pi  \log(1+K),
$$
$$
 \int_{|z|=1}\partial_{|z|}\varphi \,ds_0={2\pi}\left(\beta-\frac{(2\beta+2)K}{1+K}\right). 
 $$
 These calculations together with formula~\eqref{Cbeta} for  $C(\beta)$ and the rescaling mentioned in the beginning of the proof imply~\eqref{CCD}.
  \end{proof}

\subsection{Hyperbolic spheres}\label{HS}
As is known, there exists a unique hyperbolic  (curvature $K_\varphi=-1$) conformal metric  $m_\varphi=e^{2\varphi}|dz|^2$ on the Riemann sphere $\Bbb CP^1$ with conical singularities of order $\beta_j>-1$ at $p_j\in \Bbb CP^1$,  $j=1,\dots,n$, provided $n\geq 3$ and $\sum_{j=1}^n\beta_j=|\pmb\beta|<-2$; see e.g.~\cite{Troyanov Curv}.  We shall assume that $p_n=\infty$ (and then $\beta_n=0$ and $n\geq 4$ if there is no conical singularity at infinity).
 The corresponding metric potential  $\varphi$  has  the following asymptotics in a neigborhood of each $p_j$:
\begin{equation}\label{AE}
\begin{aligned}
\varphi(z) & =\beta_j\log|z-p_j|+\phi_j+O(|z-p_j|^{2\beta_j+2}),\quad z\to p_j,\quad 0<j<n,\\
\varphi(z) & =-(\beta_n+2)\log|z|+\phi_n+O(|z|^{-2\beta_n-2}),\quad z\to\infty;
\end{aligned}
\end{equation}
the asymptotics can be differentiated.  We first express the determinant of  Laplacian on $(\Bbb CP^1,e^{2\varphi}|dz|^2)$ in terms of  the metric potential $\varphi$. 

\begin{proposition}\label{PHS} For the spectral determinant $\det \Delta^\varphi$ of the Friederichs extension of Laplacian $\Delta^\varphi$ on the hyperbolic (curvature $K_\varphi=-1$) sphere $(\Bbb CP^1,e^{2\varphi}|dz|^2)$ we have 
$$\begin{aligned}
\log \det \Delta^\varphi=\log \left(-2-|\pmb\beta|\right)+\frac{1}{12\pi}\int_{\Bbb C} \varphi \,e^{2\varphi}\, \frac{dz\wedge d\bar z}{-2i} -\frac 1 6 \left(1+\frac 1 {\beta_n+1}\right)\phi_n\\
+\frac  1 6 \sum_{j=1}^{n-1}\frac {\beta_j}{\beta_j+1}\phi_j-\sum_{j=1}^{n}C(\beta_j)  -\frac 1 3 \log 2 +\frac 1 {6} -4\zeta'_R(-1),
\end{aligned}
$$
where  $\phi_j$ stands for the constant term in the asymptotic expansion~\eqref{AE}, and $C(\beta)$ is defined in~\eqref{Cbeta}.  
\end{proposition}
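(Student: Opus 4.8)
The plan is to apply the Polyakov-type comparison formula \eqref{Pol} of Theorem~\ref{main}.1 on $M=\Bbb CP^1$ with reference metric $m_0=e^{2\psi}|dz|^2$, where $\psi(z)=\log 2-\log(1+|z|^2)$ is the standard round curvature-one metric, for which $A_0=4\pi$, $K_0=1$, and $\log\det\Delta^0=\tfrac12-4\zeta_R'(-1)$ by \eqref{WeisSphere}. Since the hyperbolic metric $m_\varphi=e^{2\varphi}|dz|^2$ has constant curvature it is admissible (Remark~\ref{CC}), so \eqref{Pol} applies; note that the conformal factor relating $m_\varphi$ to $m_0$ is $\varphi-\psi$, and it is this function that enters \eqref{Pol}. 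First I would fix $A_\varphi$ from Gauss--Bonnet: $\int_M K_\varphi\,dA_\varphi=2\pi\chi(\Bbb CP^1,\pmb\beta)=2\pi(2+|\pmb\beta|)$ together with $K_\varphi=-1$ give $A_\varphi=-2\pi(2+|\pmb\beta|)$ (positive since $|\pmb\beta|<-2$), hence $\log A_\varphi=\log 2\pi+\log(-2-|\pmb\beta|)$, which supplies the term $\log(-2-|\pmb\beta|)$ of the statement.

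Next I would unfold the two curvature integrals in \eqref{Pol}. Using $K_\varphi=-1$, $K_0=1$, and the explicit value $\int_{\Bbb C}\psi e^{2\psi}\,\tfrac{dz\wedge d\bar z}{-2i}=4\pi(\log 2-1)$ from \eqref{intformula2}, one gets
\[
-\frac1{12\pi}\!\left(\int_M K_\varphi(\varphi-\psi)\,dA_\varphi+\int_M K_0(\varphi-\psi)\,dA_0\right)
=\frac1{12\pi}\int_{\Bbb C}\varphi e^{2\varphi}\frac{dz\wedge d\bar z}{-2i}
-\frac1{12\pi}\!\left(\int_{\Bbb C}\psi e^{2\varphi}\frac{dz\wedge d\bar z}{-2i}+\int_{\Bbb C}\varphi e^{2\psi}\frac{dz\wedge d\bar z}{-2i}\right)+\frac13(\log 2-1).
\]
The first term on the right is already the integral in the statement. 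To evaluate the symmetric combination $\int\psi e^{2\varphi}+\int\varphi e^{2\psi}$ I would use the Liouville equations $-4\partial_z\partial_{\bar z}\psi=e^{2\psi}$ and $-4\partial_z\partial_{\bar z}\varphi=-e^{2\varphi}$ (the latter valid on $\Bbb C\setminus\supp\pmb\beta$) together with Green's second identity over $\Bbb C$ with small disks removed around the finite singularities $p_1,\dots,p_{n-1}$ and with the exterior of a large disk removed; the volume contributions cancel and only boundary integrals survive. Feeding in the asymptotics \eqref{AE}, each finite $p_j$ contributes $-2\pi\beta_j\psi(p_j)$ in the limit, while the large circle contributes $-2\pi(\beta_n+2)\log 2+4\pi\phi_n$, the logarithmically divergent pieces (coming from $\varphi\sim-(\beta_n+2)\log|z|$ and $\psi\sim-2\log|z|$) cancelling against each other.

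Finally I would assemble everything with the non-integral term of \eqref{Pol}, namely $\frac16\sum_{j=1}^n\beta_j\bigl(\tfrac{\phi_j(0)}{\beta_j+1}-\psi_j(0)\bigr)$. For $j<n$ a local parameter centred at $p_j$ gives $\phi_j(0)=\phi_j$ and $\psi_j(0)=\psi(p_j)$; for $j=n$ the change of parameter $x=1/z$ shows that the constant term of the local representation $m_\varphi=|x|^{2\beta_n}e^{2\phi_n^{\mathrm{loc}}}|dx|^2$ at $\infty$ equals the constant $\phi_n$ of \eqref{AE}, while $\psi_n(0)=\log 2$. Combining this with the boundary terms of the previous step, the $\log(1+|p_j|^2)$-contributions cancel, the $\phi_j$ with $j<n$ appear with coefficient $\frac16\frac{\beta_j}{\beta_j+1}$, and the $\phi_n$-terms add up to $\frac{\beta_n}{6(\beta_n+1)}\phi_n-\frac13\phi_n=-\frac16\bigl(1+\frac1{\beta_n+1}\bigr)\phi_n$; collecting the remaining constants together with $\log A_\varphi$ and $\log(\det\Delta^0/A_0)$ yields $-\frac13\log 2+\frac16-4\zeta_R'(-1)$. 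This is exactly the claimed formula.

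The main obstacle is the boundary analysis at infinity in the Green's-identity step: one must carefully track the $\log|z|$-divergent pieces arising from both $\varphi$ and $\psi$ and verify their cancellation, and one must obtain the correct coefficient of $\phi_n$ (it is $4\pi$, not $2\pi$), since this is precisely what produces the asymmetric coefficient $-\frac16\bigl(1+\frac1{\beta_n+1}\bigr)$ of $\phi_n$ — the point being that the puncture at $\infty$ carries the ``extra'' curvature $-2$ of $\Bbb CP^1$. A secondary, purely bookkeeping, difficulty is to keep the two meanings of $\varphi$ distinct: in the statement $m_\varphi=e^{2\varphi}|dz|^2$, whereas \eqref{Pol} is written for the factor relating $m_\varphi$ to the reference metric, which here is $\varphi-\psi$.
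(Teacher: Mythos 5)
Your proposal is correct, and it reaches the stated formula by a genuinely different route from the paper. The paper does \emph{not} apply the closed-surface formula \eqref{Pol} globally: it cuts $\Bbb CP^1$ along the circle $|z|=1/\epsilon$ with the BFK gluing formula of Proposition~\ref{BFKf}, evaluates the Neumann-jump factor as $\det R^\varphi_{|z|=1/\epsilon}/L_\varphi(|z|=1/\epsilon)=1/2$ by conformal invariance (comparing with the round sphere and hemisphere), treats the exterior $|z|\geq 1/\epsilon$ as a shrinking disk around $\infty$ in the parameter $w=1/z$ (formula \eqref{ee2}), and applies the boundary formula \eqref{Pol-Alv} with \emph{flat} reference $|dz|^2$ on the interior, which directly produces the Liouville-type integral $\int\varphi e^{2\varphi}$. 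You instead invoke \eqref{Pol} once on the closed sphere with the round metric $e^{2\psi}|dz|^2$ as reference, treat $\infty$ as an ordinary point of the divisor (correctly identifying $\phi_n^{\mathrm{loc}}(0)=\phi_n$ and $\psi_n(0)=\log 2$ via $x=1/z$), and convert the cross terms $\int\psi e^{2\varphi}+\int\varphi e^{2\psi}$ into boundary contributions by Green's identity and the two Liouville equations. I checked your bookkeeping: the large-circle contribution $-2\pi(\beta_n+2)\log 2+4\pi\phi_n$ (with cancellation of the $\log|z|$ divergences), the cancellation of the $\psi(p_j)$ terms against the non-integral part of \eqref{Pol}, the combination $-\tfrac13\phi_n+\tfrac{\beta_n}{6(\beta_n+1)}\phi_n=-\tfrac16\bigl(1+\tfrac1{\beta_n+1}\bigr)\phi_n$, and the constants $\log(2\pi)-\log(4\pi)+\tfrac12-\tfrac13+\tfrac23\log2=-\tfrac13\log2+\tfrac16$ all come out right. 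Your route is more direct and avoids re-deriving the BFK constant; the paper's route localizes all the delicate asymptotics at $\infty$ into one explicit disk determinant and reuses machinery (BFK, conformal invariance of the jump operator) already established earlier, which is also what makes its subsequent Proposition~\ref{00:15} computation natural. Either argument is acceptable; just make sure, if you write yours up, to justify integrability of $\varphi e^{2\varphi}$ and $\varphi e^{2\psi}$ near the $p_j$ and at $\infty$ (both follow from \eqref{AE} and $\beta_j>-1$), and to note that a possible $\beta_n=0$ contributes harmlessly since $C(0)=0$.
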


\begin{proof} By using the BFK formula we cut the Riemann sphere  into two pieces along the circle $|z|=1/\epsilon$ as $\epsilon\to0+$. For the Neumann jump operator on this circle we have 
$$ 
\frac {\det R^\varphi_{|z|=1/\epsilon}}{L_\varphi(|z|=1/\epsilon)}=\frac 1 2;
$$ 
this can be easily seen from the conformal invariance of the left hand side (see Lemma~\ref{confin}) together with BFK formula~\eqref{Fr08:50}  and  formulas for the determinant of Laplacian on the unit sphere  and hemisphere (see~\cite{Weisberger} or~\eqref{WeisSphere} and \eqref{CCD} with $K=1$ and $\beta=0$).
The Gauss-Bonnet theorem~\cite{Troyanov Curv} implies that the total area $A_\varphi$ of hyperbolic sphere $(\Bbb CP^1, e^{2\varphi}|dz|^2)$ is  $A_\varphi=-2\pi\chi(\Bbb CP^1,\pmb \beta)=-2\pi(2+|\pmb\beta|)$. Thus we have 
\begin{equation}\label{ee1}
\det\Delta^\varphi=-\pi\Bigl(2+|\pmb\beta|\Bigr)\lim_{\epsilon\to0+}\left(\det\Delta^\varphi_{|z|\leq 1/\epsilon}\det\Delta^\varphi_{|z|\geq 1/\epsilon}\right),
\end{equation}
where
\begin{equation}\label{ee2}\begin{aligned}
\log\det\Delta^\varphi_{|z|\geq1/\epsilon}=&-\frac 1 {6}\left((\beta_n+1)^2+1\right)\log \epsilon -\frac 1 6 \left(\beta_n+1+\frac 1 {\beta_n+1}\right)\phi_n
\\
&-C(\beta_n)-\zeta'(0;1)-\frac {\beta_n} 2+o(1).
\end{aligned}
\end{equation}
The last formula for the determinant of Dirichlet Laplacian in the disk  $|w|\leq \epsilon$, $w=1/z$, 
immediately follows from~\eqref{Pol-Alv},~\eqref{AE},  and~\eqref{E1} (it can also be  obtained from~\eqref{CCD}  if in the local parameter $w$ the metric $e^{2\varphi}|dz|^2$ takes the form~\eqref{DilAn}).

Let $m_0=|dz|^2$. Then thanks to the Polyakov-Alvarez type formula~\eqref{Pol-Alv}  we get
\begin{equation}\label{ee3}\begin{aligned}
&\log\frac{\det\Delta^\varphi_{|z|\leq 1/\epsilon}}{\det\Delta^0_{|z|\leq 1/\epsilon}}=\frac{1}{12\pi}\int_{|z|\leq 1/\epsilon} \varphi\,dA_\varphi -\frac{1}{12\pi}\int_{|z|=1/\epsilon} \varphi\partial_{\vec n}\varphi\,ds_0 \\&-\frac 1{6\pi}\int_{|z|=1/\epsilon}\epsilon\varphi\, ds_0
-\frac 1 {4\pi}\int_{|z|=1/\epsilon} \partial_{\vec n}\varphi\,ds_0 +\frac  1 6 \sum_{j=1}^{n-1}\frac {\beta_j}{\beta_j+1}\phi_j-\sum_{j=1}^{n-1}C(\beta_j),
\end{aligned}
\end{equation}
where  $$
\log  \det\Delta^0_{|z|\leq 1/\epsilon}=\frac 1  3 \log\epsilon +\frac 1 3 \log 2-\frac 1 2 \log 2\pi -\frac 5 {12} -2\zeta'_R(-1);$$ cf.~\eqref{E1}. 

It is easy to verify that 
$$
-\frac 1 {4\pi}\int_{|z|=1/\epsilon}\partial_{\vec n}\varphi\, ds_0= \frac{\beta_n+2} 2+ o(1),
$$

$$
-\frac 1 {6\pi}\int_{|z|=1/\epsilon}\epsilon\varphi \,ds_0=
 -\frac {\beta_n+2} 3 \log \epsilon -\frac 1 3 \phi_n +o(1),
$$
$$
-\frac 1 {12\pi}\int_{|z|=1/\epsilon}\varphi\partial_{\vec n}\varphi\, ds_0=
\frac{1} 6\Bigl( (\beta_n+2)^2\log\epsilon+(\beta_n+2)\phi_n\Bigr)+o(1).
$$
 This together with~\eqref{ee1}--\eqref{ee3} completes the proof. 
\end{proof}

Consider the mapping 
\begin{equation}\label{covering}
w=f(z)=\frac {z^{2}}{1+\mu z^{2}}: \Bbb CP^1\to \Bbb CP^1
\end{equation}
 with  $\mu\in[0,\infty)$. It is a ramified covering with ramification divisor $1\cdot 0 + 1\cdot \infty$.  The pull back  of $m_\varphi$ by $f$ is a hyperbolic (curvature $-1$) metric on $\Bbb CP^1$ with  potential  
$f^*\varphi=\varphi\circ f+\log|f'|$. For brevity we assume that $0$ and $1/\mu$ are not among the conical singularities of $m_\varphi$ (for any  $\mu\in[0,\infty)$). Then $ e^{2f^*\varphi}|dz|^2$  has conical singularities of order $\beta_j$ at the pre-images $z=\pm\sqrt{\frac {p_j}{1-\mu p_j}}$  of the conical singularities $p_1,\dots, p_n$ as well as conical singularities of order $1$ at $z=0$ and $z=\infty$. By setting $\rho=e^{2\varphi}$, $z_1=0$, and $z_2=1/\mu$ in~\cite[formula (1.2)]{KalvinJGA} one obtains the variational formula
\begin{equation}\label{Var}
 \det \Delta^{f^*\varphi}=C\mu^{-1/2}\sqrt[8]{e^{2\varphi(0)}e^{2\varphi(1/\mu)}},
\end{equation}
where $\Delta^{f^*\varphi}$ is the Friederichs extension of Laplacian on $(\Bbb CP^1,e^{2f^*\varphi}|dz|^2)$ and $C$ is an undetermined constant that does not depend on the parameter $\mu\in[0,\infty)$. In Proposition~\ref{00:15} below we independently deduce~\eqref{Var} and find that 
\begin{equation}\label{VarC}
C=-\frac {2^{2/3}e^{6\zeta_R'(-1)}}{ 2+|\pmb\beta|} ( \det\Delta^\varphi)^2,
\end{equation}
where the divisor $\pmb\beta$ and $\det\Delta^\varphi$ are the same as in Proposition~\ref{PHS}.
\begin{proposition}\label{00:15}  Let $f^*\varphi$ stand for the potential of the pull back of the  hyperbolic metric $e^{2\varphi(z)}|dz|^2$ by the mapping~\eqref{covering}. Then the spectral determinant of the Friederichs extension of Laplacian $\Delta^{f^*\varphi}$ on $(\Bbb CP^1, e^{2f^*\varphi}|dz|^2)$ satisfies~\eqref{Var} with $C$ specified in~\eqref{VarC}.
\end{proposition}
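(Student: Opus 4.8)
Since the rational map $f$ in~\eqref{covering} is holomorphic and, away from its two branch points $z=0,\infty$ (each of ramification index $2$), a local isometry onto $(\Bbb CP^1,e^{2\varphi}|dz|^2)$, the pulled back metric $e^{2f^*\varphi}|dz|^2$ is again a hyperbolic conformal metric on $\Bbb CP^1$: it has conical points of order $1$ at $z=0$ and $z=\infty$ coming from the ramification, and conical points of order $\beta_j$ at the two (finite) preimages $z_j^\pm=\pm\sqrt{p_j/(1-\mu p_j)}$ of each $p_j$. Hence it represents the divisor $\pmb\gamma=1\cdot 0+1\cdot\infty+\sum_j\beta_j\bigl(z_j^++z_j^-\bigr)$, with $|\pmb\gamma|=2|\pmb\beta|+2$, so that $-2-|\pmb\gamma|=2(-2-|\pmb\beta|)>0$ and Proposition~\ref{PHS} applies to $e^{2f^*\varphi}|dz|^2$, with the order $1$ point at $z=\infty$ playing the role of the distinguished point. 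The plan is to write down the formula of Proposition~\ref{PHS} for $\log\det\Delta^{f^*\varphi}$, subtract twice the same formula for $\log\det\Delta^\varphi$, and show that the result is~\eqref{Var}--\eqref{VarC}. Applicability of Proposition~\ref{PHS} is unproblematic: $e^{2f^*\varphi}|dz|^2$ has constant curvature and is therefore admissible in the sense of Definition~\ref{D A}, cf.~Remark~\ref{CC}.

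Three ingredients have to be computed for $f^*\varphi$. First, the integral: writing $f^*\varphi=\varphi\circ f+\log|f'|$ and $e^{2f^*\varphi}\,\tfrac{dz\wedge d\bar z}{-2i}=|f'|^2e^{2\varphi\circ f}\,\tfrac{dz\wedge d\bar z}{-2i}$, the substitution $w=f(z)$ (two sheets) turns the $\varphi\circ f$ part into $2\int_{\Bbb C}\varphi\,e^{2\varphi}\,\tfrac{dw\wedge d\bar w}{-2i}$, which is exactly twice the integral entering Proposition~\ref{PHS} for $\det\Delta^\varphi$, while the $\log|f'|$ part is evaluated via the Liouville equation $e^{2f^*\varphi}=4\partial_z\partial_{\bar z}(f^*\varphi)$ and Green's formula, the boundary terms being collected on shrinking circles about $z=0$, $z=\infty$, the points $z_j^\pm$, and the poles $z=\pm i/\sqrt\mu$ of $f'$, exactly as in the proof of Proposition~\ref{Klev}. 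Second, the constant terms of the expansions~\eqref{AE} at the conical points of $\pmb\gamma$: at an unramified preimage $z_j^\pm$ it is $\phi_j+(\beta_j+1)\log|f'(z_j^\pm)|$ with $|f'(z_j^\pm)|=2|p_j|^{1/2}|1-\mu p_j|^{3/2}$ (the obvious modification applying if $p_j=\infty$), while the local expansions $f(z)=z^2+O(z^4)$ near $0$ and $f(z)=\tfrac1\mu-\tfrac1{\mu^2 z^2}+O(z^{-4})$ near $\infty$ give the constant terms $\varphi(0)+\log 2$ at $z=0$ and $\varphi(1/\mu)+\log 2-2\log\mu$ at $z=\infty$ — note that $f(0)=0$ and $f(\infty)=1/\mu$ are precisely the points $z_1=0,z_2=1/\mu$ entering the variational formula~\eqref{Var}. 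Third, $\sum_k C(\gamma_k)=2\sum_j C(\beta_j)+2C(1)$ with $C(1)$ given explicitly by~\eqref{Cbeta}.

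Forming $\log\det\Delta^{f^*\varphi}-2\log\det\Delta^\varphi$ with these data, the inexplicit quantities $\int\varphi\,e^{2\varphi}$ and the constants $\phi_j$ of the hyperbolic metric drop out (the integral because $f^*$'s integral equals twice $m_\varphi$'s plus the $\log|f'|$ contribution, each $\phi_j$ because it occurs once for each of the two preimages of $p_j$, matching the overall factor $2$), and the $\log|p_j|$ and $\log|1-\mu p_j|$ terms carried by $\log|f'(z_j^\pm)|$ cancel against the matching boundary contributions at $z_j^\pm$ produced by Green's formula; what remains is $2\log\det\Delta^\varphi$ together with an elementary combination of $\log 2$, $\zeta_R'(-1)$, $\log(-2-|\pmb\beta|)$, $\log\mu$, and $\tfrac14\bigl(\varphi(0)+\varphi(1/\mu)\bigr)$, which is exactly~\eqref{Var}--\eqref{VarC}. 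The factor $(\det\Delta^\varphi)^2$ and the term $6\zeta_R'(-1)$ are traceable to the doubling $\pmb\beta\mapsto\pmb\gamma$ and to the two extra order $1$ points, whose $C(1)$ and whose contributions to $\zeta(0)$ supply the remaining numerical constants (cf.~\eqref{WeisSphere},~\eqref{kokot}). I expect the main obstacle to be precisely this bookkeeping near the conical points: one must check that the position dependent pieces coming from $\log|f'(z_j^\pm)|$, from the Green's formula boundary terms at $z_j^\pm$, from the divergences at the poles $z=\pm i/\sqrt\mu$, and from the constant terms at $z=0,\infty$ assemble into exactly the symmetric expression $\varphi(0)+\varphi(1/\mu)$ with the correct coefficient $\tfrac14$. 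The degenerate case $\mu=0$, where the branch images coincide with $0$ and $\infty$, is then recovered by a limiting argument in $\mu$.
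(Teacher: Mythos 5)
Your proposal is correct and follows essentially the same route as the paper: apply Proposition~\ref{PHS} to $f^*\varphi$, split $\int f^*\varphi\, e^{2f^*\varphi}$ into the $\varphi\circ f$ part (which the two-sheeted substitution $w=f(z)$ turns into twice the integral for $m_\varphi$) and the $\log|f'|$ part (evaluated via the Liouville equation and Green's formula on contours shrinking to the singular points and the poles of $f'$), collect the constant terms of~\eqref{AE} at the ramification points and at the preimages $z_j^\pm$, and use $\sum_k C(\gamma_k)=2\sum_j C(\beta_j)+2C(1)$ to recognize twice the Proposition~\ref{PHS} expression for $\log\det\Delta^\varphi$ plus the explicit remainder giving~\eqref{Var}--\eqref{VarC}. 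Your local computations ($|f'(z_j^\pm)|=2|p_j|^{1/2}|1-\mu p_j|^{3/2}$, the constant terms $\varphi(0)+\log 2$ at $z=0$ and $\varphi(1/\mu)+\log 2-2\log\mu$ at $z=\infty$) are consistent with what the paper leaves implicit.
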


\begin{proof} We will rely on Proposition~\ref{PHS} with $\varphi$ replaced by $f^*\varphi$. Notice that for the corresponding integral we have
\begin{equation}\label{1040int}
\begin{aligned}
\int_{\Bbb C} f^*\varphi \,e^{2f^*\varphi}\, \frac{dz\wedge d\bar z}{-2i}
=\int_{\Bbb C} (\log|f'|)  \,e^{2f^*\varphi}\, \frac{d z\wedge d\bar  z}{-2i}+2\int_{\Bbb C} \varphi \,e^{2\varphi}\, \frac{dz\wedge d\bar z}{-2i},
\end{aligned}
\end{equation}
where the last integral can be expressed in terms of $\det\Delta^\varphi$. Let $\Bbb C^\epsilon$ stand for the annulus $\{z\in\Bbb C:\epsilon\leq|z|\leq1/\epsilon\}$ minus the union of the epsilon neighborhoods of all pre-images  $z=\pm\sqrt{\frac {p_j}{1-\mu p_j}}$  of conical singularities $p_1,\dots, p_n$.  Then the Liouville equation together with  Stokes' theorem gives
\begin{equation}\label{1519Wed}
\begin{aligned}
\int_{\Bbb C} (\log|f'| ) \,e^{2f^*\varphi}\, \frac{dz\wedge d\bar z}{-2i}=-\lim_{\epsilon\to 0+} \int_{\Bbb C^\epsilon}  (\log|f'| )\bigl(-4\partial_z\partial_{\bar z}( \varphi\circ f)\bigr) \, \frac{dz\wedge d\bar z}{-2i}\\
=-\lim_{\epsilon\to0+}\oint_{\partial \Bbb C^\epsilon}
\Bigl[(\varphi\circ f)\partial_{\vec n}   (\log|f'| )-  (\log|f'| )\partial_{\vec n} (\varphi\circ f)\Bigr ]|dz|,
\end{aligned}
\end{equation}
where the right hand side can be easily evaluated based on~\eqref{AE} and the explicit expression for $f$  (it suffices to take into account only logarithmic and constant terms of asymptotics for $\varphi\circ f$ and $\log|f'|$ as $z$ approaches a conical singularity of $e^{2f^*\varphi}|dz|^2$; we omit the details). This together with $ f^*\varphi=\varphi\circ f+\log|f'|$ allows to write the formula from Proposition~\ref{PHS} in the form 
$$
\begin{aligned}
\log \det\Delta^{f^*\varphi}=\log (-2-|\pmb\beta|)+\frac{1}{6\pi}\int_{\Bbb C} \varphi \,e^{2\varphi}\, \frac{dz\wedge d\bar z}{-2i}-\frac 1 3\left(1+\frac{1}{\beta_n+1} \right) \phi_n 
\\
+\frac 1 3 \sum_{j=1}^{n-1}\frac {\beta_j}{\beta_j+1}\phi_j -2\sum_{j=1}^{n}C(\beta_j)-2C(1)  -\frac 1 6  \log 2 +\frac 1 {6} -4\zeta'_R(-1)
\\
-\frac 1 2 \log \mu +\frac 1 4\bigl(\varphi(0)+ \varphi(1/\mu)\bigr),
\end{aligned}
$$
where $C(1)=-\zeta_R'(-1)-\frac 1 {12}\log 2 -\frac 1 {12}$ (see~\eqref{Cbeta} and~Lemma~\ref{Barnes}). This together with Proposition~\ref{PHS} completes the proof. 
\end{proof}

\begin{remark} It is also interesting to compare our results to those in the work~\cite{C-W} about quantum Hall states on singular surfaces. With this aim in mind we introduce the  Liouville functional $\mathcal S$ by the formula
$$
\mathcal S=-\int_{\Bbb C} \varphi \,e^{2\varphi}\, \frac{dz\wedge d\bar z}{-2i}+2\pi \sum_{j=1}^{n-1}\beta_j\phi_j+2\pi(\beta_n+2)\phi_n
$$
and denote
$$
\mathcal H=\exp\left\{\sum_{j=1}^{n-1}2\beta_j\left(1+\frac{1}{\beta_j+1}\right)\phi_j\right\}.
$$
   Recall that the function $\mathcal S$ on the moduli space $\mathcal M_{0,n}$ serves as a K\"ahler potential for a family of  K\"ahler metrics  (parametrized by the orders $\beta_j\in(-1,0)$ of conical singularities) and   generates  the famous accessory parameters, see~\cite{TZ1} (where the Liouville functional is defined as  $\mathcal S-2\pi(2+|\pmb \beta|)$; here the  moduli independent term $-2\pi(2+|\pmb \beta|)$  is the total area of the singular hyperbolic sphere $(\Bbb CP^1, e^{2\varphi}|dz|^2)$). 
   Now the result of Proposition~\ref{PHS} can be equivalently rewritten in the form  
\begin{equation*}\label{LAMS}
\begin{aligned}
\log \det \Delta^\varphi  = &\log(-2-|\pmb\beta|) -\frac 1 {12\pi}(\mathcal S-  \pi\log \mathcal H )
\\ &+ \frac 1 6 \left(\beta_n+1-\frac{1}{\beta_n+1}\right)\phi_n-\sum_{j=1}^n C(\beta_j)-\frac 1 3 \log 2 - 4\zeta_R'(-1).
\end{aligned}
\end{equation*}
We note that up to the moduli independent terms, i.e. in the form $$\log \det \Delta^\varphi  \varpropto -\frac 1 {12\pi}(\mathcal S-  \pi\log \mathcal H ),$$ this equation was recently
 established (at the physical level of rigor) in~\cite{C-W}.  Let us also recall that although in the case of a hyperbolic sphere with only elliptic singularities of finite order  (or, equivalently, in the case  $\beta_j+1=1/q_j$ with positive integer $q_j$ for all  $j=1,\dots,n$, cf.~\eqref{BLF2}) there exists a ramified covering of $\Bbb CP^1$ by the upper hyperbolic half-plane $\Bbb H$ branched over $p_1,\dots,p_n$ and the determinant of Laplacian can  be introduced and studied in terms of the 
 Selberg zeta function (at least up to the moduli independent terms, cf.~\cite{TZ2,Teo} ), 
 this approach fails for conical singularities of general type we consider here.   
\end{remark}

\subsection{Genus $g> 1$ singular surfaces without boundary}\label{KOKOT}

Here we present a general formula for the determinant of Friederichs Laplacian on genus $g>1$ Riemann surface $M$ without boundary.  The result is based on the formula for the determinant  of Laplacians on compact polyhedral surfaces with trivial holonomy~\cite{KokotKorot} and Corollary~\ref{2conical}. This is a straightforward generalization of the scheme in~\cite{ProcAMS}:   Corollary~\ref{2conical}.1 together with the particular values  $C(0)=0$ and $C(1)=-\zeta_R'(-1)-\frac 1 {12}(\log 2+1)$ of the function $C(\beta)$ in~\eqref{Cbeta} should be used instead of~\cite[Prop.1]{ProcAMS}. Therefore we only formulate the result and omit the proof. In the particular case of a flat  metric $m_\varphi$, Proposition~\ref{Kokot}  with  $K_\varphi=0$  below is  essentially a  reformulation of  the main result in~\cite{ProcAMS} with the unnecessary additional assumption $\supp\pmb\alpha\cap\supp\pmb\beta=\varnothing $  removed and the invariance of the result in Corollary~\ref{2conical}.1 utilized.

\begin{proposition}\label{Kokot} Let $\omega$ be a holomorphic one-form on $M$ with $2g-2$ simple zeros and let  $m_0=|\omega|^2$ be the corresponding flat  metric (with trivial holonomy and first order conical singularities at the zeros of $\omega$). Consider  an admissible   metric  $m_\varphi=e^{2\varphi}m_0$ of  curvature $K_\varphi$. 
Let $\{P_1,\dots P_n\}$ be the set of all distinct points in the union $\supp\pmb\alpha\cup\supp\pmb\beta $, where  $\pmb\beta$ (resp. $\pmb\alpha$) is the divisor of $m_\varphi$ (resp. of  $m_0$ and $\omega$). 
Pick   a local holomorphic parameter $x$ centred at $P^j$ such that $m_0=|x|^{2\alpha_j}|dx|^2$ and  $m_\varphi=|x|^{2\beta_j}e^{2\phi_j}|dx|^2$, where  $\alpha_j=0$ if  $P^j\notin\supp\pmb\alpha$, $\alpha_j=1$ if $P^j\in\supp\pmb\alpha$, and $\beta_j=0$ if $P^j\notin\supp\pmb\beta$.   Then the determinant of the Friederichs Laplacian $\Delta^\varphi$ on $(M, m_\varphi)$ satisfies
\begin{equation*}
\begin{aligned}
\det \Delta^\varphi=(2\pi)^{-4/3}\kappa_0^{g-1}A_\varphi(\det\Im\Bbb B)|\tau_g(M,\omega)|^2\exp\Biggl\{-\frac{1}{12\pi}\int_{M} K_\varphi \varphi\,dA_\varphi
\\
 + \sum_{j=1}^n\Biggl ( \frac  {\phi_j(0)} {6} \Bigl (\frac{\beta_j}{\beta_j+1}+\alpha_j \Bigr)-C(\beta_j)\Biggr) -(2g-2)\Bigl(\zeta_R'(-1)+\frac 1 {12}(\log 2 +1)\Bigr)\Biggr\},
\end{aligned}
\end{equation*}
where $\kappa_0$ is an absolute constant that can be expressed in terms of spectral determinants of some model operators, $A_\varphi$ is the total area of $M$ in the metric $m_\varphi$,  $\Bbb B$ is the matrix of $b$-periods of the Riemann surface $M$, and the Bergman tau-function $\tau$ is a holomorphic function that admits explicit expression through theta-functions, prime forms, and the divisor $\pmb \alpha$.
\end{proposition}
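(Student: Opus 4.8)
The plan is to derive the formula by pairing the two‑singular‑metric comparison formula of Corollary~\ref{2conical}.1 with the Kokotov--Korotkin expression~\cite{KokotKorot} for the determinant of the Laplacian attached to the flat metric $m_0=|\omega|^2$. This is exactly the scheme of~\cite{ProcAMS}, except that~\cite[Prop.~1]{ProcAMS} is replaced by Corollary~\ref{2conical}; that replacement is precisely what lets us drop the auxiliary hypothesis $\supp\pmb\alpha\cap\supp\pmb\beta=\varnothing$, i.e. it permits a zero of $\omega$ to coincide with a conical point of $m_\varphi$.

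First I would record the structure of the reference metric: $m_0=|\omega|^2$ is globally flat with first order conical singularities exactly at the $2g-2$ simple zeros of $\omega$; being flat near each of its singularities it is admissible (cf.\ the remark preceding Definition~\ref{D A}), it represents a divisor $\pmb\alpha$ with all $\alpha_j\in\{0,1\}$, and in the local holomorphic parameter $x$ chosen so that $m_0=|x|^{2\alpha_j}|dx|^2$ one has $\psi_j\equiv 0$, hence $\psi_j(0)=0$ and $K_0\equiv 0$. Since $M$ is closed of genus $g>1$ and $m_\varphi=e^{2\varphi}m_0$ is admissible by hypothesis, Corollary~\ref{2conical}.1 applies; inserting $K_0=0$ and $\psi_j(0)=0$ it collapses to
\begin{equation*}
\log\frac{(\det\Delta^\varphi)/A_\varphi}{(\det\Delta^0)/A_0}
=-\frac{1}{12\pi}\int_M K_\varphi\varphi\,dA_\varphi
+\frac16\sum_{j=1}^n\phi_j(0)\Bigl(\tfrac{\beta_j}{\beta_j+1}+\alpha_j\Bigr)
-\sum_{j=1}^n\bigl(C(\beta_j)-C(\alpha_j)\bigr).
\end{equation*}
Next I would feed in the special values $C(0)=0$ and $C(1)=-\zeta_R'(-1)-\frac1{12}(\log2+1)$ supplied by~\eqref{Cbeta} together with Lemma~\ref{Barnes}: the indices $j$ with $\alpha_j\neq0$ are precisely the $2g-2$ zeros of $\omega$, at which $\alpha_j=1$, so $\sum_j C(\alpha_j)=(2g-2)\,C(1)=-(2g-2)\bigl(\zeta_R'(-1)+\frac1{12}(\log2+1)\bigr)$, which reproduces the last term of the asserted exponent and carries no $\phi_j(0)$‑dependence.

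It then remains to substitute the flat‑metric determinant. The Kokotov--Korotkin formula~\cite{KokotKorot}, for the flat metric $m_0=|\omega|^2$ with trivial holonomy on a surface of genus $g>1$, gives
\[
\det\Delta^0=(2\pi)^{-4/3}\,\kappa_0^{\,g-1}\,A_0\,(\det\Im\Bbb B)\,|\tau_g(M,\omega)|^2,
\]
where $\Bbb B$ is the matrix of $b$‑periods, $\tau_g(M,\omega)$ the Bergman tau‑function (holomorphic in the moduli, explicit through theta functions, prime forms and $\pmb\alpha$), and $\kappa_0$ the universal constant of that formula. Dividing by $A_0$ and plugging into the displayed identity, the factors $A_0$ cancel, and one reads off $\det\Delta^\varphi = A_\varphi\cdot(2\pi)^{-4/3}\kappa_0^{g-1}(\det\Im\Bbb B)|\tau_g(M,\omega)|^2$ times the exponential of the remaining terms, which is the claimed formula.

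The genuinely delicate points are of bookkeeping character rather than of principle. One must fix the normalizations of $\tau_g$, of $\Bbb B$ and of the universal constant $\kappa_0$ consistently with~\cite{KokotKorot} (this is exactly why $\kappa_0$ is kept in the implicit form of a ratio of spectral determinants of model operators, instead of a closed number), and one must allow $\supp\pmb\alpha\cap\supp\pmb\beta\neq\varnothing$, which Corollary~\ref{2conical} handles transparently. Since no new ideas beyond~\cite{ProcAMS} and Corollary~\ref{2conical} are involved, the residual computations are routine, and for that reason they — like the full proof above — are omitted.
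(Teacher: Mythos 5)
Your proposal follows exactly the scheme the paper indicates (and deliberately leaves unwritten): apply Corollary~\ref{2conical}.1 with the flat reference metric $m_0=|\omega|^2$, for which $K_0=0$ and $\psi_j\equiv 0$, insert the special values $C(0)=0$ and $C(1)=-\zeta_R'(-1)-\tfrac1{12}(\log 2+1)$, and substitute the Kokotov--Korotkin expression for $\det\Delta^0$. The bookkeeping you carry out (the collapse of the singular terms to $\tfrac{\phi_j(0)}{6}(\tfrac{\beta_j}{\beta_j+1}+\alpha_j)$ and the $(2g-2)C(1)$ contribution) is correct, so this is essentially the paper's intended proof.
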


\appendix
\section{Derivative $\zeta_B'(0;\beta+1,1,1)$ of the Barnes zeta function  for rational values of $\beta$}\label{BarnesPV}

\begin{lemma}\label{Barnes} Let $p$ and $q$ be coprime natural numbers. Then  
\begin{equation}\label{BZetaPrime}
 \begin{aligned}
\zeta'_B(0;p/q,1,1)=&\frac 1 {pq}\zeta_R'(-1)-\frac{1}{12pq}\log q
+\left(\frac 1 4 {+}S(q,p)\right)\log{\frac{q}{p}}
\\
+\sum_{k=1}^{p-1}\left(\frac 1 2 -\frac k p \right)&\log \Gamma\left(  \left(\!\!\!\left(\frac{kq}{p}\right)\!\!\!\right)+\frac 1 2 \right)+\sum_{j=1}^{q-1}\left(\frac 1 2 -\frac j q \right)\log \Gamma\left(\left(\!\!\!\left(\frac{jp}{q}\right)\!\!\!\right)+\frac 1 2\right),
\end{aligned}
\end{equation}
where $S(q,p)=\sum_{j=1}^{p}\left(\!\!\left(\frac{j}{p}\right)\!\!\right) \left(\!\!\left(\frac{jq}{p}\right)\!\!\right)$ is the Dedekind sum, and the symbol $(\!(\cdot)\!)$ is defined so that   $(\!(x)\!)=x-\lfloor x\rfloor-1/2$ for $x$ not an integer and $(\!(x)\!)=0$ for $x$ an integer.

In particular, for natural numbers  $p$ and $q$ one has \begin{equation}\label{BLF1}
\zeta_B'(0;p,1,1)=\frac 1 p \zeta'_R(-1)-\left(\frac{p}{12}+\frac 1 4 +\frac 1{6p}\right)\log p -\sum_{j=1}^{p-1}\frac j p \log\Gamma\left(\frac j p\right)+\frac{p-1}4 \log 2\pi,
\end{equation}
\begin{equation}\label{BLF2}
\zeta_B'\left(0; 1 /q,1,1\right)=\frac 1 q \zeta'_R(-1)-\frac 1 {12q}\log q-\sum_{j=1}^{q-1}\frac j q\log\Gamma\left(\frac j q\right)+\frac{q-1}4 \log 2\pi.
\end{equation}
Thus $\zeta_B'(0;1,1,1)=\zeta'_R(-1),$
$$
\zeta_B'(0;2,1,1)=\frac 1 2 \zeta'_R(-1)-\frac 1 4 \log 2,\quad \zeta_B'\left(0;1 /2,1,1\right)=\frac 1 2 \zeta'_R(-1)+\frac 5 {24}\log 2,
$$
$$
\zeta_B'(0;3,1,1)=\frac 1 3 \zeta'_R(-1)+\frac 1 6 \log 2-\frac 7 {18}\log 3 -\frac 1 3 \log\Gamma \left(\frac 2 3 \right)+\frac 1 6 \log \pi,
$$
$$ 
\zeta_B'\left(0;1 /3,1,1\right)=\frac 1 3 \zeta'_R(-1)+\frac 1 6 \log 2+\frac 5 {36}\log 3 -\frac 1 3 \log\Gamma \left(\frac 2 3 \right)+\frac 1 6 \log \pi,
$$
$$
\zeta_B'(0;4,1,1)=\frac 1 4 \zeta'_R(-1)-\frac 5 8 \log 2-\frac 1 2 \log \Gamma\left(\frac 3 4 \right)+\frac 1 4 \log \pi,$$
$$
\zeta_B'\left (0;1/ 4,1,1\right)=\frac 1 4 \zeta'_R(-1)+\frac 7 {12} \log 2-\frac 1 2 \log \Gamma\left(\frac 3 4 \right)+\frac 1 4 \log \pi,\dots$$

\end{lemma}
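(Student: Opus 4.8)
The plan is to reduce $\zeta_B(s;p/q,1,1)$ at a rational period to a finite linear combination of Hurwitz zeta functions $\zeta_H(s,x)=\sum_{n\ge 0}(n+x)^{-s}$ and then differentiate at $s=0$. Clearing the denominator in~\eqref{ZB} gives $\zeta_B(s;p/q,1,1)=q^{s}\sum_{m,n\ge 0}(pm+qn+q)^{-s}$. I would sort the summation lattice by writing $m=qa+i$ with $a\ge 0$, $0\le i\le q-1$ and $n=pc+j$ with $c\ge 0$, $0\le j\le p-1$; then $pm+qn+q=pq(a+c)+\bigl(pi+q(j+1)\bigr)$, the sum $a+c=\ell$ runs over $\mathbb Z_{\ge0}$ with multiplicity $\ell+1$, and since $\sum_{\ell\ge0}(\ell+1)(\ell+v)^{-s}=\zeta_H(s-1,v)+(1-v)\zeta_H(s,v)$ one arrives at the key reduction
$$\zeta_B(s;p/q,1,1)=p^{-s}\sum_{i=0}^{q-1}\sum_{b=1}^{p}\Bigl[\zeta_H\bigl(s-1,\tfrac iq+\tfrac bp\bigr)+\bigl(1-\tfrac iq-\tfrac bp\bigr)\zeta_H\bigl(s,\tfrac iq+\tfrac bp\bigr)\Bigr],$$
which is manifestly analytic near $s=0$.

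Differentiating at $s=0$, I would feed in the classical values $\zeta_H(0,x)=\tfrac12-x$, $\zeta_H'(0,x)=\log\Gamma(x)-\tfrac12\log2\pi$ (Lerch's formula), $\zeta_H(-1,x)=-\tfrac12\bigl(x^2-x+\tfrac16\bigr)$, and the distribution relation $\sum_{k=0}^{N-1}\zeta_H(s,\tfrac{x+k}{N})=N^{s}\zeta_H(s,x)$. The decisive point is that the only non-elementary Hurwitz-zeta quantity, $\zeta_H'(-1,\cdot)$, occurs solely through the $\zeta_H(s-1,\cdot)$ summands and with constant coefficient; summing the inner index $i$ by the distribution relation (using $\tfrac iq+\tfrac bp=\tfrac1q(qb/p+i)$) collapses $\sum_i\zeta_H'(-1,\tfrac iq+\tfrac bp)$ to $\tfrac1q\zeta_H'(-1,qb/p)$ up to a $\zeta_H(-1,qb/p)$-contribution, then peeling the integer part off $qb/p$ and using that $\{qb/p\}_{b=1}^{p-1}$ permutes $\{1/p,\dots,(p-1)/p\}$ reduces the transcendental remainder to $\sum_{r=1}^{p-1}\zeta_H'(-1,r/p)$, which the distribution relation evaluates as $-\tfrac{\log p}{12p}+(\tfrac1p-1)\zeta_R'(-1)$. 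All other contributions are either polynomial in $p,q$ (hence summed in closed form by Faulhaber's formula) or are $\log\Gamma$ values at the points $\tfrac iq+\tfrac bp$; using $\Gamma(u)=(u-1)\Gamma(u-1)$ to fold arguments into $(0,1]$, the resulting stray $\log(k+\text{fractional part})$ terms cancel those produced by the integer-part peeling, leaving $\log\Gamma$ evaluated at $\bigl(\!\bigl(kq/p\bigr)\!\bigr)+\tfrac12=\{kq/p\}$ and at $\bigl(\!\bigl(jp/q\bigr)\!\bigr)+\tfrac12$.

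The hard part is the bookkeeping of the elementary pieces, and in particular the emergence of the Dedekind sum. After the reduction the index $b$ traverses the arithmetic progression $qb/p$ with non-unit step, so no single distribution relation simplifies $\sum_b\zeta_H(\,\cdot\,,qb/p)$; peeling integer parts forces the quantities $\sum_{b}\lfloor qb/p\rfloor\,(qb\bmod p)$ and $\sum_{b}b\,(qb\bmod p)$ into the calculation, and the identity $\bigl(\!\bigl(x\bigr)\!\bigr)=\{x\}-\tfrac12$ converts $\tfrac1{p^2}\sum_{b=1}^{p-1}b\,(qb\bmod p)$ into $S(q,p)+\tfrac{p-1}4$. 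Collecting every occurrence of this type (they enter both the $\zeta_H(-1,qb/p)$-weight $\tfrac{\log q-\log p}{q}$ and the $\log p$-weight from the values) should produce exactly the coefficient $\tfrac14+S(q,p)$ of $\log(q/p)$, with the purely polynomial remainder yielding the $-\tfrac1{12pq}\log q$ term and the two nested distribution steps contributing the leading $\tfrac1{pq}\zeta_R'(-1)$; verifying that all the signs and polynomial residues combine into precisely~\eqref{BZetaPrime} is where the real work lies.

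Finally, the formulas~\eqref{BLF1} and~\eqref{BLF2} are the specializations $q=1$ and $p=1$: one of the two $\log\Gamma$-sums is empty, $\bigl(\!\bigl(k/p\bigr)\!\bigr)+\tfrac12=k/p$, and the Dedekind sum degenerates to $S(1,p)=\sum_{j=1}^{p-1}\bigl(\tfrac jp-\tfrac12\bigr)^2$, whence $\tfrac14+S(1,p)=\tfrac p{12}+\tfrac1{6p}$; rewriting the $\log\Gamma$-sum via the Gauss multiplication formula $\prod_{k=1}^{p-1}\Gamma(k/p)=(2\pi)^{(p-1)/2}p^{-1/2}$ then produces the $\log2\pi$-terms displayed there. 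The tabulated numerical values for small $p,q$ follow by substitution together with $\Gamma(\tfrac12)=\sqrt\pi$ and the reflection formula.
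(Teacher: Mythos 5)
Your proposal is correct, and it reaches the reduction to Hurwitz zeta functions by a genuinely different route than the paper. The paper works recursively in the third Barnes parameter: it compares $\zeta_B(s;p,q,q+k)$ with $\zeta_B(s;p,q,q)$, telescopes, and uses B\'ezout's identity $xp+yq=1$ to control which Hurwitz zeta functions appear; the payoff is that its final representation involves only $\zeta_H(s;\cdot)$ and $\zeta_R(s-1)$, so at $s=0$ nothing beyond Lerch's formula $\zeta_H'(0;x)=\log\Gamma(x)-\tfrac12\log 2\pi$ is needed. You instead split the lattice directly ($m=qa+i$, $n=pc+j$) and diagonalize in $\ell=a+c$, which is more symmetric and avoids B\'ezout entirely, but produces $\zeta_H(s-1,v)$ at the rational points $v=\tfrac iq+\tfrac bp$; you then must dispose of $\zeta_H'(-1,\cdot)$ at rationals via the distribution relation, which is exactly where your extra elementary sums come from. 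I checked the pivotal identities you invoke: the reduction $\sum_{\ell\ge0}(\ell+1)(\ell+v)^{-s}=\zeta_H(s-1,v)+(1-v)\zeta_H(s,v)$, the collapse $\sum_{i}\zeta_H'(-1,\tfrac iq+\tfrac bp)=\tfrac{\log q}{q}\zeta_H(-1,qb/p)+\tfrac1q\zeta_H'(-1,qb/p)$, the evaluation $\sum_{r=1}^{p-1}\zeta_H'(-1,r/p)=-\tfrac{\log p}{12p}+(\tfrac1p-1)\zeta_R'(-1)$, and the conversion $\tfrac1{p^2}\sum_{b=1}^{p-1}b\,(qb\bmod p)=S(q,p)+\tfrac{p-1}4$ are all correct, and the coefficient of $\zeta_R'(-1)$ does come out as $\tfrac1{pq}$ once the $b=p$ term $\zeta_H'(-1,q)=\zeta_R'(-1)+\sum_{r<q}r\log r$ is included; your specializations to \eqref{BLF1} and \eqref{BLF2} (via $\tfrac14+S(1,p)=\tfrac p{12}+\tfrac1{6p}$ and $\prod_{k=1}^{p-1}\Gamma(k/p)=(2\pi)^{(p-1)/2}p^{-1/2}$) also check out. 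You leave the final assembly of the elementary pieces unperformed, but the paper's own proof stops at essentially the same point ("Gauss' multiplication formula allows to write the result in the form \eqref{BZetaPrime}"), so this is a matter of routine bookkeeping rather than a gap; just be aware that your route has strictly more such bookkeeping than the paper's, precisely because of the stray $\log$ terms generated by peeling integer parts off $qb/p$ and by the $b=p$ and $v>1$ boundary cases.
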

\begin{proof} Let us first prove~\eqref{BLF1}. Notice that
$$\zeta_B(s;1,1,1)=\sum_{m,n=0}^\infty(m+n+1)^{-s}=\sum_{\ell=0}^\infty\sum_{n=0}^\ell(\ell+1)^{-s}=\zeta_R(s-1),$$
where the left hand side can also be represented as the sum 
$\sum_{k=1}^p\zeta_B(s;p,1,k)$.  For each term of this sum we have
$$\begin{aligned}
&\zeta_B(s;p,1,k)
=\sum_{m=0}^\infty\sum_{n=k-1}^\infty(pm+n+1)^{-s}
\\
&=\sum_{m=0}^\infty\left(\sum_{n=0}^\infty(pm+n+1)^{-s}-\sum_{n=0}^{k-2}(pm+n+1)^{-s}\right)
=\zeta_B(s;p,1,1)-p^{-s}\sum_{j=1}^{k-1}\zeta_H(s;j/p),
\end{aligned}
$$
where
$$
\zeta_H(s; x )=\sum_{m=0}^\infty(m+x)^{-s}
$$
is the Hurwitz zeta function.  Solving the resulting equation for $\zeta_B(s;p,1,1)$ we obtain
$$\begin{aligned}
\zeta_B(s;p,1,1)=\frac 1 p \zeta_R(s-1)+p^{-s-1}\sum_{k=1}^p\sum_{j=1}^{k-1}\zeta_H(s;j/p)
\\=\frac 1 p \zeta_R(s-1)+p^{-s-1}\sum_{j=1}^{p-1}\sum_{k=j+1}^{p}\zeta_H(s;j/p)
\\=\frac 1 p \zeta_R(s-1)+p^{-s-1}\sum_{j=1}^{p-1}(p-j)\zeta_H(s;j/p).
\end{aligned}
$$
Now we differentiate with respect to $s$ and use the well-known identities
\begin{equation}\label{HurwitzZeta}
\zeta_H(0;x)=\frac 1 2 -x,\quad \zeta_H'(0;x)=\log\Gamma(x)-\frac 1 2 \log 2\pi.
\end{equation}
As a result we obtain
$$
\zeta_B'(0;p,1,1)=\frac 1 p \zeta'_R(-1)+\sum_{j=1}^{p-1}\left(1-\frac j p \right)\left(\log\Gamma\left(\frac j p\right)+\frac j p\log p\right)-\frac{p-1}4 (\log 2\pi+\log p).
$$
Taking into account that $$\sum_{j=1}^{p-1}\frac {j^2}{p^2}=\frac p 3 -\frac 1 2+\frac 1{6p}, \quad \prod_{j=1}^{p-1}\Gamma\left(\frac j p\right)=(2\pi)^{(p-1)/2}p^{-1/2}, $$   we arrive at~\eqref{BLF1}.

 In order to prove~\eqref{BLF2} we use the relation
$\zeta_B(s;1/q,1,1)=q^{s}\zeta_B(s;q,1,q)$ and obtain
$$
\begin{aligned}
\zeta_B(s;1/q,1,1)&=q^s\left(\zeta_B(s;q,1,1)-q^{-s}\sum_{j=1}^{q-1}\zeta_H(s;j/q)\right)
\\
&=q^{s-1}\zeta_R(s-1)-\sum_{j=1}^{q-1}\frac j q \zeta_H\left(s;\frac j q\right). 
\end{aligned}
$$
Since $\zeta_R(-1)=-1/12$, this implies~\eqref{BLF2}.

Let $p$ and $q$ be coprime.  B\'ezout's identity reads $xp+yq=1$. Without loss of generality  we can assume that $x\leq 0$ and $y\geq 0$ (otherwise take $x:=x-q$ and $y:=y+p$). 
We start with
$$
\zeta_B(s;p/q,1,1)=q^s\zeta_B(s;p,q,q). 
$$
We have 
$$\begin{aligned}
&\zeta_B(s;p,q,q+k)=\sum_{n=0}^\infty\left(\sum_{m=-xk}^\infty+\sum_{m=0}^{-xk-1}\right)(pm+qn+q+k)^{-s}
\\
&=\sum_{n=yk}^\infty\sum_{m=0}^\infty(pm+qn+q)^{-s}+q^{-s}\sum_{m=0}^{-xk-1}\zeta_H(s;(pm+q+k)/q)
\\
&=\zeta_B(s;p,q,q)-p^{-s}\sum_{n=0}^{yk-1}\zeta_H(s;(qn+q)/p)+q^{-s}\sum_{m=0}^{-xk-1}\zeta_H(s;(pm+q+k)/q). 
\end{aligned}
$$
Therefore
$$\begin{aligned}
\zeta_B(s;p,q,q)=\zeta_B(s;p,q,q+k)+p^{-s}\sum_{n=0}^{yk-1}\zeta_H(s;(qn+q)/p)-q^{-s}\sum_{m=0}^{-xk-1}\zeta_H(s;(pm+q+k)/q)
\\
=\frac{1}{p}\left(\zeta_B(s;1,q,q)+p^{-s}\sum_{k=0}^{p-1}\sum_{n=0}^{yk-1}\zeta_H(s;(qn+q)/p)-q^{-s}\sum_{k=0}^{p-1}\sum_{m=0}^{-xk-1}\zeta_H(s;(pm+q+k)/q)\right). 
\end{aligned}
$$
Here
$$\begin{aligned}
\zeta_B(s;1,q,q)&=\zeta_B(s;1,q,q+j)+q^{-s}\sum_{m=0}^{j-1}\zeta_H(s;(m+q)/q)
\\
&=\frac{1}{q}\left(\zeta_B(s;1,1,q)+q^{-s}\sum_{j=0}^{q-1}\sum_{m=0}^{j-1}\zeta_H(s;(m+q)/q)\right)
\\
&=\frac{1}{q}\left(\zeta_R(s-1)-\sum_{m=1}^{q-1}\zeta_H(s;m)+q^{-s}\sum_{j=0}^{q-1}\sum_{m=0}^{j-1}\zeta_H(s;(m+q)/q)\right).
\end{aligned}
$$

Finally we get 
$$
\begin{aligned}
\zeta_B(s;p/q,1,1)=\frac{q^s}{p q}\left(\zeta_R(s-1)-\sum_{m=1}^{q-1}\zeta_H(s;m)\right)+\frac 1 {pq }\sum_{j=1}^{q-1}\sum_{m=0}^{j-1}\zeta_H(s;(m+q)/q)
\\+q^sp^{-s-1}\sum_{k=1}^{p-1}\sum_{n=0}^{yk-1}\zeta_H(s;(qn+q)/p)-\frac 1 {p}\sum_{k=1}^{p-1}\sum_{m=0}^{-xk-1}\zeta_H(s;(pm+q+k)/q). 
\end{aligned}
$$
This together with~\eqref{HurwitzZeta} gives a representation of $\zeta_B'(0;p/q,1,1)$ in terms of $\zeta'_R(-1)$ and gamma functions. Then Gauss' multiplication formula allows to write the result in the form~\eqref{BZetaPrime}. A similar computation can be found in~\cite[Section 4]{Dowker}.
\end{proof}


\begin{thebibliography}{100}
\bibitem{Clara-Rowlett} C. L. Aldana, J. Rowlett, A Polyakov formula for sectors, J. Geom. Anal. 28 (2018), 1773--1839  

\bibitem{Alvarez} Alvarez O., Theory of strings with boundary, Nucl. Phys. B 216 (1983), 125--184
\bibitem{Au-Sal} E. Aurell,  P. Salomonson,  On functional determinants of Laplacians in polygons and simplicial complexes. Comm. Math. Phys. 165 (1994), no. 2, 233--259.
\bibitem{Au-Sal 2} E. Aurell,  P. Salomonson, Further results on Functional Determinants of Laplacians in Simplicial Complexes, Preprint May 1994, arXiv:hep-th/9405140
\bibitem{BS2} { Br\"uning  J. and  Seeley R.} The resolvent expansion for second order regular singular operators, J. Funct. Anal.  73 (1987),  369--429 
\bibitem{BFK} Burghelea D., Friedlander L., and Kappeler T., Meyer-Vietoris type formula for determinants of elliptic differential operators, J. Funct. Anal. 107 (1992), 34--65 \
\bibitem{Br87}  Bryant, R. L.  Surfaces of mean curvature one in hyperbolic space,
Ast\'erisque, 154-155 (1987), p. 321--347

\bibitem{CCLW} T. Can, Y.H Chiu, M. Laskin, and P. Wiegmann, Emergent Conformal Symmetry and transport properties of Quantum Hall States on Singular surfaces,  Phys. Rev. Lett. 117(2016), 266803, 	arXiv:1602.04802

\bibitem{C-W} T. Can and P. Wiegmann,  Quantum Hall states and conformal field theory on a singular surface, J. Phys. A50 (2017) no.49, 494003, arXiv:1709.04397
\bibitem{Carron} G. Carron, D\'eterminant relatif et la fonction Xi,
Amer. J. Math. 124 (2002), 307--352

\bibitem{Cheeger} J. Cheeger, On the spectral geometry of spaces with cone-like singularities, Proc. Nat. Acad. Sci. USA 76 (1979), 2103--2106

\bibitem{BinXuSp} Quing Chen, Wei Wang, Yingyi Wu, and Bin Xu, Conformal metrics with constant curvature one and finitely many conical singularities on compact Riemann surface. Pacific Journal of Mathematics. 273 (2015), 75-100. 

 \bibitem{Dowker} J.S. Dowker, Effective Action in Spherical domains, Commun. Math. Phys. 162 (1994), 633--647 




\bibitem{BinXuHyp} Yu Feng, Yiqian Shi, and Bin Xu, Isolated singularities of conformal hyperbolic metrics, Preprint (2018),  arXiv:1711.01018

 
 \bibitem{Forman} R. Forman, Functional determinants and geometry, Invent. Math. 88 (1987), 447--493 
 
  
  
  
 \bibitem{Friedlander}  L. Friedlander, The asymptotic of the determinant function for a class of operators, Proc. Amer. Math. Soc. 107 (1989), 169-178.
 
 \bibitem{GKM} J.B. Gil, T. Krainer, G. Mendoza, Trace expansions for elliptic cone operators with stationary domains, Trans. Amer. Math. Soc. 362 (2010), pp. 6495--6522 
 \bibitem{GK} J.B. Gil, T. Krainer,  Private communication, 2017
 \bibitem{Gilkey} P. Gilkey, The spectral geometry of a Riemannian manifold. J. Differ. Geom. 10 (1975), 601--618
 \bibitem{HKK1} L. Hillairet, V. Kalvin, A. Kokotov, Spectral determinants on Mandelstam diagrams,  Comm. Math. Phys. 343 (2016), no. 2, 563--600, arXiv:1312.0167
 \bibitem{HKK2} L. Hillairet, V. Kalvin, A. Kokotov, Moduli spaces of meromorphic functions and determinant of Laplacian,  Trans. Amer. Math. Soc. 370 (2018), no. 7, 4559--4599, arXiv:1410.3106
 
 
 \bibitem{KalvinJGA} V. Kalvin, On determinants of Laplacians on compact Riemann surfaces equipped with pullbacks of conical metrics by meromorphic functions.  J. Geom. Anal. 29 (2019), pp. 785--798, arXiv:1712.05405
 
 \bibitem{IMRN} V. Kalvin, A. Kokotov, Metrics of constant positive curvature with conical singularities, Hurwitz spaces, and determinants of Laplacians. Int. Math. Res. Not. IMRN no. 10 (2019), pp. 3242--3264, arXiv:1612.08660
 
 \bibitem{Bul} V. Kalvin, A. Kokotov, Determinant of the Laplacian on tori of constant positive curvature with one conical point. Canad. Math. Bull. 62 (2019), 341--347, arXiv:1712.04588
 
 
 \bibitem{Kato} T. Kato, Perturbation theory for linear operators. Springer, Berlin (1966)
 
  \bibitem{Khuri} H. H. Khuri, Heights on the moduli space of Riemann surfaces with circle boundaries, Duke Math. J. 64 (1991), 555 -- 570. 
  
  \bibitem{Khuri2} H. H. Khuri, Determinants of Laplacians on the space of conical metrics on the sphere, Trans. Amer. Math. Soc. 339 (1993), 525--536.   

 \bibitem{Kim} Young-Heon Kim,
Surfaces with boundary: their uniformizations, determinants of Laplacians, and isospectrality. Duke Math. J. 144 (2008), no. 1, 73--107, 	arXiv:math/0609085
 
\bibitem{ProcAMS} A. Kokotov, Polyhedral surfaces and determinant of Laplacian. Proc. Amer. Math. Soc., 141 (2013), 725--735, arXiv:0906.0717
\bibitem{KokotKorot} A. Kokotov,  D. Korotkin, Tau-functions on spaces of abelian differentials and higher genus generalizations of Ray-Singer formula. J. Differential Geom. 82 (2009), 35--100, arXiv:math/0405042
\bibitem{Klevtsov} S. Klevtsov, Lowest Landau level on a cone and zeta determinants, J.Phys. A: Math. Theor. 50 (2017) 
\bibitem{Lee BFK} Y. Lee, Mayer-Vietoris formula for determinants of elliptic operators of Laplace-Beltrami type (after Burghelea, Friedlander and Kappeler), Differential Geom. Appl. 7 (1997),  325--340.
\bibitem{LeeBFKboundary} Y. Lee, Burghelea-Friedlander-Kappeler's gluing formula for the zeta-determinant and its applications to the adiabatic decompositions of the zeta-determinant and the analytic torsion. 
Trans. Amer. Math. Soc. 355 (2003), no. 10, 4093--4110 
\bibitem{LMP} Loya P., McDonald P., Park J., Zeta regularized determinants for conic manifolds, J. Funct. Anal. 242 (2007) 195-229

\bibitem{Matsumoto} Matsumoto K., Asymptotic expansions of double zeta functions of Barnes, of Shinttani, and Eisenstein series, Nagoya Math. J. 172 (2003), 59-102
\bibitem{McPk} McIntyre A., Park J., Tau function and Chern-Simons invariants, Adv. Math. 262 (2014), 1-58 

\bibitem{MondelloPanov} Mondello G. and Panov D., Spherical surfaces with conical points: systole inequality and moduli spaces with many connected components. Geom. Funct. Anal. 29 (2019), 1110--1193.
\bibitem{OPS} Osgood  B., Phillips R., Sarnak P. Extremals of Determinants of Laplacians. J. Funct. Anal. 80 (1988), 148--211


\bibitem{Polch} J. Polchinski, Evaluation of the one-loop string path integral, Commun. Math. Phys. 104 (1986), 37--47

\bibitem{Pol} A. Polyakov, Quantum geometry of Bosonic strings, Phys. Lett. B 103 (1981), 207--210
\bibitem{PolF} A. Polyakov, Quantum geometry of Fermionic  strings, Phys. Lett. B 103 (1981), 211--213

\bibitem{Sarnak} P. Sarnak, Extremal Geometries, Contemp. Math. 201 (1997), 1--7
\bibitem{Seeley} Seeley R., The resolvent of an elliptic boundary value problem, Am. J. Math. 91 (1969), 889--920
\bibitem{Shubin} {Shubin M.}, Pseudodifferential operators and spectral theory, Springer, 2001
\bibitem{Spreafico} Spreafico M., Zeta function and regularized determinant on a  disk and on a cone,  Journal of Geometry and Physics 54 (2005), 355--371
\bibitem{Spreafico3}  Spreafico M, Zerbini S., Spectral analysis and zeta determinant on the deformed spheres,  Commun. Math. Phys. 273 (2007), 677--704
\bibitem{Spreafico2} Spreafico M., On the Barnes double zeta and Gamma functions,  J. Number Theory 129 (2009), 2035--63

\bibitem{TZ1} Takhtajan L.A. and Zograf P.G., Hyperbolic 2-spheres with conical singularities, accessory parameters and K\"ahler metrics on $\mathcal M_{0,n}$, Trans. Amer. Math. Soc. 355 (2002), 1857--1867. 
\bibitem{TZ2} Takhtajan, L.A., Zograf, P.G.,  Local index theorem for orbifold Riemann surfaces. Lett. Math. Phys. 109 (2019), pp. 1119--1143, arXiv:1701.00771

\bibitem{Teo} Teo, L.P., Ruelle zeta function for cofinite hyperbolic Riemann surfaces with ramification points. Lett. Math. Phys. 110 (2020), pp. 61--82, arXiv:1901.07898




\bibitem{Troyanov Curv} Troyanov M., Prescribing curvature on compact surfaces with conical singularities, Trans. Amer. Math. Soc. 134(1991), 792--821
\bibitem{Troyanov Polar Coordinates} Troyanov M., Coordonn\'ees polaires sur les surfaces Riemanniennes singuli\`eres, Ann. Inst. Fourier, Grenoble 40, 4 (1990), 913--937
\bibitem{Troyanov Spheres} Troyanov M., Metrics of constant curvature on a sphere with two conical singularities, Lecture Notes in Math. 1410 (1989), 296--306
\bibitem{Voros} A. Voros, Spectral function, special functions and Selberg zeta function, Comm. Math. Phys. 110 (1987) 439-465.

\bibitem{Weisberger} Weisberger W., Conformal invariants for determinants of Laplacians on Riemann surfaces, Commun. Math. Phys. 112 (1987), 633--638 
\bibitem{Wentworth} Wentworth R., Precise constants in bosonization formulas on Riemann surfaces,  Commun. Math. Phys. 282 (2008), 339--355

\bibitem{Wentworth2} Wentworth R., Gluing formulas for determinants of Dolbeault laplacians on Riemann surfaces, Commun. Anal. Geom. 20 (2012), pp. 455-499
\bibitem{Yafaev} Yafaev D.R., Mathematical scattering theory. AMS, Providence (1992)
\end{thebibliography}
\end{document}